\newcommand{\AppendixSymbol}{\ding{72}}
    \NewDocumentEnvironment{prooflater}{m}{\begin{proof}}{\end{proof}\ignorespacesafterend}
    \NewDocumentEnvironment{proofsketch}{o +b}{}{\ignorespacesafterend}
    \newcommand{\restateref}[1]{}
    \NewDocumentEnvironment{statelater}{m}{}{}
    \NewDocumentEnvironment{prooflater}{m +b}{%
        \expandafter\global\expandafter\def\csname#1\endcsname{\begin{proof}#2\end{proof}}%
    }{\ignorespacesafterend}
    \NewDocumentEnvironment{proofsketch}{O{Proof sketch.}}{\begin{proof}[#1]}{\end{proof}\ignorespacesafterend}
    \newcommand{\restateref}[1]{[\IfAppendix{\hyperref[#1]{\AppendixSymbol{}}}{\hyperref[#1*]{\AppendixSymbol{}}}]}
    \NewDocumentEnvironment{statelater}{m +b}{%
        \expandafter\global\expandafter\def\csname#1\endcsname{#2}%
    }{\ignorespacesafterend}
\title{The Parameterized Complexity of\\Extending Stack Layouts} %
\titlerunning{The Parameterized Complexity of Extending Stack Layouts} %
\author{Thomas Depian}{Algorithms and Complexity Group, TU Wien, Vienna, Austria}{tdepian@ac.tuwien.ac.at}{https://orcid.org/0009-0003-7498-6271}{}
\author{Simon D.~Fink}{Algorithms and Complexity Group, TU Wien, Vienna, Austria}{sfink@ac.tuwien.ac.at}{https://orcid.org/0000-0002-2754-1195}{}
\author{Robert Ganian}{Algorithms and Complexity Group, TU Wien, Vienna, Austria}{rganian@ac.tuwien.ac.at}{https://orcid.org/0000-0002-7762-8045}{}
\author{Martin N\"ollenburg}{Algorithms and Complexity Group, TU Wien, Vienna, Austria}{noellenburg@ac.tuwien.ac.at}{https://orcid.org/0000-0003-0454-3937}{}
\authorrunning{T.~Depian, S.~D.~Fink, R.~Ganian, M.~N\"ollenburg} %
\keywords{Stack Layout, Drawing Extension, Parameterized Complexity, Book Embedding} %
\let\oldrestatable\restatable
\def\restatable{\expandafter\oldrestatable}
\newtheorem{reduction-rule}{Reduction Rule}
\newtheorem{property}{Property}
\newcommand{\Vadd}{\ensuremath{V_{\text{add}}}\xspace}
\newcommand{\nadd}{\ensuremath{n_{\text{add}}}\xspace}
\newcommand{\Vinc}{\ensuremath{V_{\text{inc}}}\xspace}
\newcommand{\Eadd}{\ensuremath{E_{\text{add}}}\xspace}
\newcommand{\madd}{\ensuremath{m_{\text{add}}}\xspace}
\newcommand{\EaddH}{\ensuremath{E_{\text{add}}^H}\xspace}
\newcommand{\lSL}[1]{\ifthenelse {\equal{#1}{}} 
{\ensuremath{\langle\prec,\sigma\rangle}}%
{\ensuremath{\langle\prec_{#1},\sigma_{#1}\rangle}}}%
\newcommand{\successor}[1]{\ensuremath{\text{succ}(#1)}}
\newcommand{\predecessor}[1]{\ensuremath{\text{pred}(#1)}}
\newcommand{\successorSpine}[2]{\successor{#1,\ #2}}
\newcommand{\predecessorSpine}[2]{\predecessor{#1,\ #2}}
\newcommand{\intervalPlacing}[1]{\ensuremath{\curlyvee(#1)}}
\newcommand{\pageWidth}[1]{\ensuremath{\omega(#1)}}%
\newcommand{\superInterval}{super interval}
\newcommand{\superIntervals}{super intervals}
\newcommand{\restrictedSpineOrder}{\ensuremath{\prec_{G \setminus H}}}
\newcommand{\vertexInside}[2]{\ensuremath{#1(#2)}}
\newcommand{\vertexOutside}[2]{\vertexInside{\overline{#1}}{#2}}
\newcommand{\assignedSuperInterval}[1]{\ensuremath{\curlyvee(#1)}}
\newcommand{\instance}{\ensuremath{\mathcal{I}}\xspace}
\newcommand{\instanceLong}{\ensuremath{\left(\ell, G, H, \lSL{}\right)}}
\newcommand{\MCC}{\probname{McC}}
\newcommand{\ThreeSat}{\probname{3-Sat}}
\newcommand{\SLE}{\probname{Stack Layout Extension}\xspace}
\newcommand{\SLEShort}{\probname{SLE}\xspace}
\newcommand{\probname}[1]{{\normalfont\textsc{#1}}}
\newcommand{\probdef}[3]{%
    \begin{mdframed}
		\probname{#1}
		\begin{description}
            \item[Given] #2
            \item[Question] #3
        \end{description}
	\end{mdframed}
}%
\newcommand{\Size}[1]{\ensuremath{\left\vert #1 \right\vert}}
\newcommand{\BigO}[1]{\ensuremath{\mathcal{O}(#1)}}
\crefname{property}{Property}{Properties}
\Crefname{property}{Property}{Properties}
\crefname{FP}{FP}{FPs}
\Crefname{FP}{FP}{FPs}
\crefname{EC}{EC}{ECs}
\Crefname{EC}{EC}{ECs}
\begin{document}

\maketitle

\begin{abstract}
An $\ell$-page stack layout (also known as an $\ell$-page book embedding) of a graph is a linear order of the vertex set together with a partition of the edge set into $\ell$ stacks (or pages), such that %
the endpoints of no two edges on the same stack alternate.
We study the problem of extending a given partial $\ell$-page stack layout into a complete one, which can be seen as a natural generalization of the classical \NP-hard problem of computing a stack layout of an input graph from scratch. Given the inherent intractability of the problem, we focus on identifying tractable fragments through the refined lens of parameterized complexity analysis. 
Our results paint a detailed and surprisingly rich complexity-theoretic landscape of the problem which includes the identification of para\NP-hard, \W[1]-hard and \XP-tractable, as well as fixed-parameter tractable fragments of stack layout extension via a natural sequence of parameterizations.
\end{abstract}

\section{Introduction}
\label{sec:introduction}
An $\ell$-page stack layout (or $\ell$-page book embedding) of a graph $G$ consists, combinatorially speaking, of (i) a linear order $\prec$ of its vertex set $V(G)$ and (ii) a partition $\sigma$ of its edge set $E(G)$ into $\ell \ge 1$ \emph{(stack-)}\emph{pages} such that for no two edges (with distinct endpoints) $uv$ and $wx$ with $u \prec v$ and $w \prec x$ that are assigned to the same page their endpoints alternate in $\prec$, i.e., we have %
$u \prec w \prec v \prec x$. 
When drawing a stack layout, the vertices are placed on a line called the \emph{spine} in the order given by $\prec$ and the edges of each page are drawn as pairwise non-crossing arcs in a separate half-plane bounded by the spine, see \cref{fig:example}a. 
\begin{figure}
    \centering
    \includegraphics[page=1]{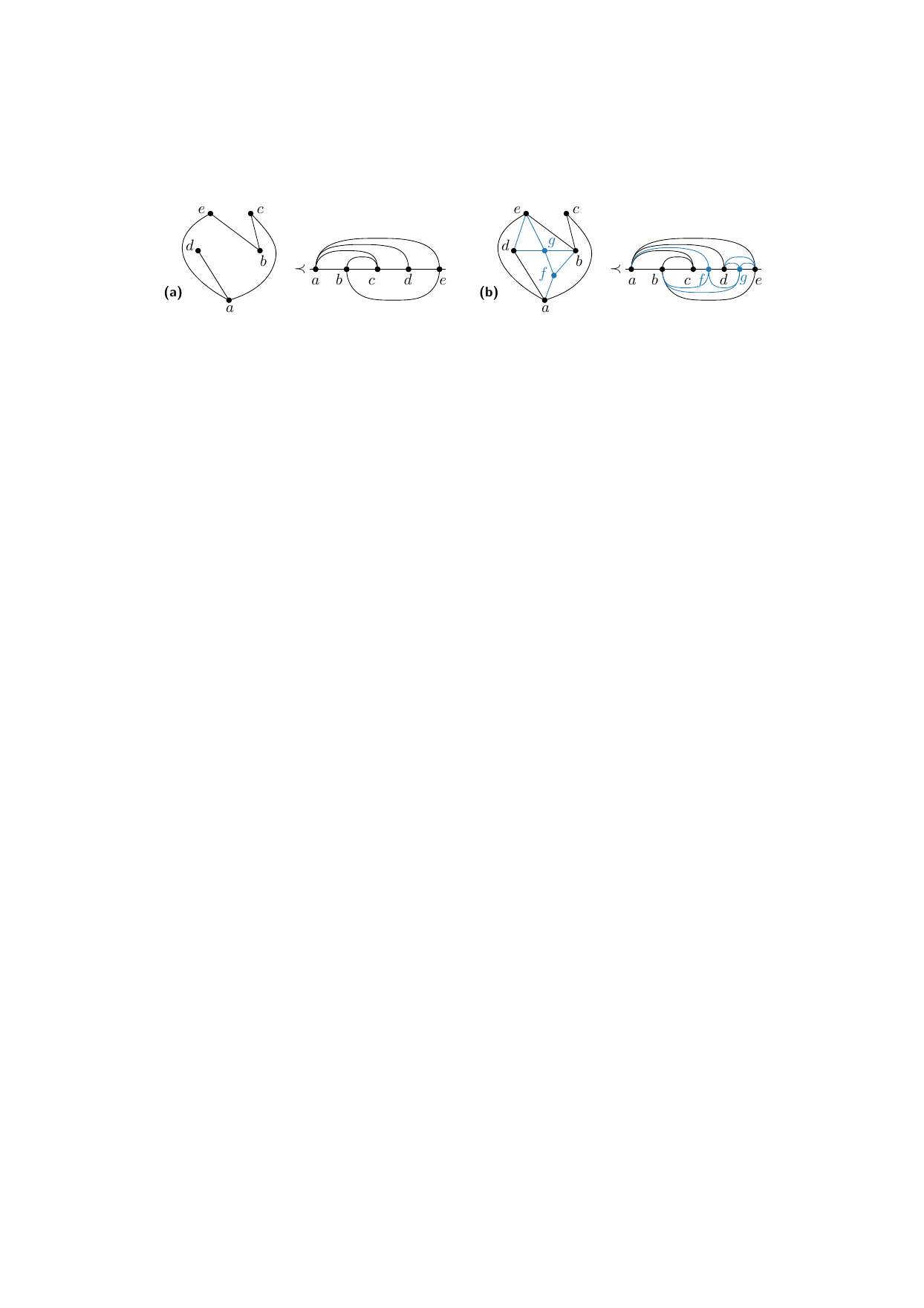}
    \caption{\textbf{\textsf{(a)}} A graph $H$ and a two-page stack layout of it. In~\textbf{\textsf{(b)}}, the graph $H$ and its two-page stack layout are extended by the new vertices and edges marked in blue.}
    \label{fig:example}
\end{figure}
Stack layouts are a classic and well-studied topic in graph drawing and graph theory~\cite{Bernhart1979,Ollmann1973,Bilski1992}. 
They have immediate applications in graph visualization~\cite{w-dvss-02,bb-crcl-04,gk-icl-06} as well as in bioinformatics, VLSI design, and parallel computing~\cite{Chung1987,Haslinger1999}; see also the overview by Dujmović and Wood~\cite{Dujmovic2004}.

The minimum number $\ell$ such that a given graph $G$ admits an $\ell$-page stack layout is known as the \emph{stack number}, \emph{page number}, or \emph{book thickness} of $G$. 
While the graphs with stack number $\ell=1$ are the outerplanar graphs, which can be recognized in linear time, the problem of computing the stack number is \NP-complete in general.
Indeed, the class of graphs with stack number $\ell \le 2$ are precisely the subhamiltonian graphs (i.e., the subgraphs of planar Hamiltonian graphs) and recognizing them is \NP-complete~\cite{Bernhart1979,Chung1987,w-chcpmpg-82}.
Computing the stack number is known to also remain \NP-complete if the vertex order is provided as part of the input and $\ell=4$~\cite{Unger88}, and overcoming the intractability of these problems has been the target of several recent works in the field~\cite{Bhore2020,BhoreGMN19,Liu2021,GMOPR24}.
Many other results on stack layouts are known---for instance, every planar graph has a $4$-page stack layout and this bound is tight~\cite{Yannakakis1989,KaufmannBKPRU20}. 
For a comprehensive list of known upper and lower bounds for the stack number of different graph classes, we refer to the collection by 
Pupyrev~\cite{Pupyrev.2023}.

In this paper, we take a new perspective on stack layouts, namely the perspective of drawing extensions. 
In drawing extension problems, the input consists of a graph $G$ together with a partial drawing of $G$, i.e., a drawing of a subgraph $H$ of $G$. %
The task is to insert the vertices and edges of $G$ which are missing in $H$ in such a way that a desired property of the drawing is maintained; see \cref{fig:example}b for an example. 
Such drawing extension problems occur, e.g., when visualizing dynamic graphs in a streaming setting, where additional vertices and edges arrive over time and need to be inserted into the existing partial drawing. 
Drawing extension problems have been investigated for many types of drawings in recent years---including planar drawings~\cite{Angelini2015,jkr-ktppeg-13,p-epsd-06,Patrignani05a}, upward planar drawings~\cite{lbf-eupgd-20}, level planar drawings~\cite{Brueckner2017}, $1$-planar drawings~\cite{Eiben2020,Eiben2020a}, and planar orthogonal drawings~\cite{Angelini2021,AngeliniRP20,Bhore2023a}---but until now, essentially nothing was known about the extension of stack layouts/book embeddings.

Since it is \NP-complete to determine whether a graph admits an $\ell$-page stack layout (even when $\ell$ is a small fixed integer), the extension problem for $\ell$-page stack layouts is \NP-complete as well---after all, setting $H$ to be empty in the latter problem fully captures the former one.
In fact, the extension setting can seemlessly also capture the previously studied \NP-complete problem of computing an $\ell$-stack layout with a prescribed vertex order~\cite{Chung1987,Unger88,Unger.1992,Bhore2020,BhoreGMN19}; indeed, this corresponds to the special case where $V(H) = V(G)$ and $E(H)=\emptyset$.
Given the intractability of extending $\ell$-page stack layouts in the classical complexity setting, we focus on identifying tractable fragments of the problem through the more refined lens of parameterized complexity analysis~\cite{DowneyF13,Cygan2015}, which considers both the input size of the graph and some additional parameter $k$ of the instance\footnote{We assume familiarity with the basic foundations of \emph{parameterized complexity theory}, notably including the notions of \emph{fixed-parameter tractability}, \XP, \W[1]-, and para\NP-hardness~\cite{Cygan2015}.}.

\begin{figure}[t]
	\centering
	\input{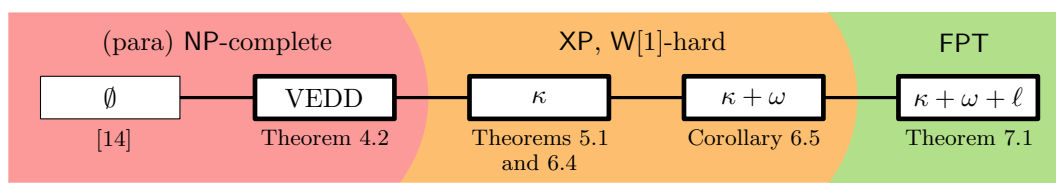}
	\caption{The complexity landscape of \SLE{}. VEDD denotes the vertex+edge deletion distance, $\omega$ denotes the page width of the $\ell$-page stack layout of $H$, and $\kappa = \Size{V(G) \setminus V(H)} + \Size{E(G)\setminus E(H)}$.
    Boxes outlined in bold represent new results that we show in the linked theorems and corollaries. The only result that is not depicted is Theorem~\ref{thm:only-edges-fpt}.}
	\label{fig:complexity-landscape}
\end{figure}

\subparagraph{Contributions.}
A natural parameter in any drawing extension problem is the size of the missing part of the graph, i.e., the missing number of vertices and/or edges. 
We start our investigation by showing that the \textsc{Stack Layout Extension} problem (\SLEShort{}) for instances without any missing vertices, i.e., $V(G) = V(H)$, is fixed-parameter tractable when parameterized by the number of missing edges $|E(G)\setminus E(H)|$ (\cref{sec:only-edges}).

The above result, however, only applies in the highly restrictive setting where no vertices are missing---generally, we would like to solve instances with missing vertices as well as edges. 
A parameterization that has been successfully used in this setting is the \emph{vertex+edge deletion distance}, i.e., the number of vertex and edge deletion operations\footnote{As usual, we assume that deleting a vertex automatically also deletes all of its incident edges.} required to obtain~$H$ from~$G$. But while this parameter has yielded parameterized algorithms when extending, e.g., 1-planar drawings~\cite{Eiben2020,Eiben2020a} and orthogonal planar drawings~\cite{Bhore2023a}, we rule out any analogous result for \SLEShort\ by establishing its \NP-completeness even if $H$ can be obtained from $G$ by deleting only two vertices (\cref{sec:paranp-hardness}). This means that more ``restrictive'' parameterizations are necessary to achieve tractability for the problem of extending $\ell$-page stack layouts.

Since the missing vertices in our hardness reduction have a high degree, we then consider parameterizations by the combined number of missing vertices and edges $\kappa = \Size{V(G) \setminus V(H)} + \Size{E(G)\setminus E(H)}$.
We show that \SLEShort{} belongs to the class \XP\ when parameterized by $\kappa$ (\cref{sec:xp}) while being \W[1]-hard (\cref{sec:w-1}), which rules out the existence of a fixed-parameter tractable algorithm under standard complexity assumptions.
The latter result holds even if we additionally bound the page width $\omega$ of the stack layout of $H$,  which measures the maximum number of edges that are crossed on a single page by a line perpendicular to the spine~\cite{Chung1987}.
On our quest towards a fixed-parameter tractable fragment of the problem, we thus need to include another restriction, namely the number $\ell$ of pages of the stack layout. 
So finally, when parameterizing \SLEShort{} by the combined parameter $\kappa + \omega + \ell$, we show that it becomes fixed parameter tractable (\cref{sec:fpt}). 
Our results are summarized in \cref{fig:complexity-landscape}.

\ifthenelse{\boolean{long}}{}{
\smallskip
\noindent
\emph{Due to space constraints, full proofs of statements marked by \AppendixSymbol{} are deferred to the appendix.}
}

\section{Preliminaries}
\label{sec:preliminaries}
We assume the reader to be familiar with standard graph terminology~\cite{Diestel2012}.
Throughout this paper, we assume standard graph representations, e.g., as double-linked adjacency list, that allow for efficient graph modifications.
For two integers $p \leq q$ we denote with $[p, q]$ the set $\{p, p+1, \ldots, q\}$ and use $[p]_0$ and $[p]$ as abbreviations for $[0, p]$ and $[1, p]$, respectively.
Let $G$ be a graph that is, unless stated otherwise, simple and undirected, with vertex set $V(G)$ and edge set $E(G)$.
For $X\subseteq V(G)$, we denote by $G[X]$ the subgraph of $G$ induced on $X$. 

\subparagraph*{Stack Layouts.}
For an integer $\ell\geq 1$, an \emph{$\ell$-page stack layout} of $G$ is a tuple \lSL{G} where~$\prec_G$ is a linear order of $V(G)$ and $\sigma_G \colon E(G) \to [\ell]$ is a function that assigns each edge to a \emph{page} $p \in [\ell]$ such that  %
for each pair of edges $u_1v_1$ and $u_2v_2$ with $\sigma(e_1) = \sigma(e_2)$ it does not hold $u_1 \prec u_2 \prec v_1 \prec v_2$.
For the remainder of the paper, we write $\prec$ and $\sigma$ if the graph~$G$ is clear from context.
We call~$\prec$ the \emph{spine (order)} and~$\sigma$ the \emph{page assignment}. %
Observe that we can interpret a stack layout as a drawing of $G$ on different planar half-planes, one per page $p \in [\ell]$, each of which is bounded by the straight-line spine delimiting all half-planes.  
One fundamental property of a stack layout is its \emph{page width}---denoted as \pageWidth{\lSL{}} or simply $\omega$ if \lSL{} is clear from context---which is the maximum number of edges that are crossed on a single page by a line perpendicular to the spine~\cite{Chung1987}.
The properties of stack layouts with small page width have been studied, e.g., by St{\"{o}}hr~\cite{Stoehr.1988,Stohr91}.

We say that two vertices $u$ and $v$ are \emph{consecutive on the spine} if they occur consecutively in $\prec$.
A vertex $u \in V(G)$ \emph{sees} a vertex $v \in V(G)$ on a page $p \in [\ell]$ if there does not exist an edge $e = xy \in E(G)$ with $\sigma(e) = p$ and $x \prec u \prec y \prec v$ or $u \prec x \prec v \prec y$.
Note that if $u$ sees~$v$, then $v$ also sees $u$.
For two vertices $u$ and $v$ which are consecutive in $\prec$, we refer to the  segment 
on the spine between $u$ and $v$ as the \emph{interval} between $u$ and $v$, denoted as $[u, v]$.

\subparagraph*{Problem Statement.}
Let $H \subseteq G$ be a subgraph of a graph $G$.
We say that \lSL{G} is an \emph{extension} of \lSL{H} if $\sigma_H\subseteq \sigma_G$ and $\prec_H\subseteq \prec_{G}$.
We now formalize our problem of interest:

\probdef{\SLE~(\SLEShort)}{Integer $\ell \geq 1$, graph $G$, subgraph $H$ of $G$, and $\ell$-page stack layout \lSL{H}.}{Does there exist an $\ell$-page stack layout \lSL{G} of $G$ that extends \lSL{H}?}

We remark that while \SLEShort is defined as a decision problem for complexity-theoretic reasons, every algorithm presented in this article is constructive and can be trivially adapted to also output a layout \lSL{G} as a witness (also called a \emph{solution}) for positive instances. For an instance $\instance$ of \SLEShort, we use $\Size{\instance}$ as shorthand for $|V(G)|+|E(G)|+\ell$.

In line with the terminology previously used for drawing extension problems~\cite{Eiben2020}, we refer to the vertices and edges in $V(H)\cup E(H)$ as \emph{old} and call all other vertices and edges of~$G$ \emph{new}.
Let $\Vadd$ and $\Eadd$ denote the sets of all new vertices and edges, respectively, and set $\nadd{} \coloneqq \Size{\Vadd{}}$ and $\madd{} \coloneqq \Size{\Eadd{}}$. Furthermore, we denote with \EaddH the set of new edges incident to two old vertices, i.e., $\EaddH{} \coloneqq \{e = uv \in \Eadd{} \mid u, v \in V(H)\}$.
We consider the parameterized complexity of our extension problem by measuring how ``incomplete'' the provided partial solution is using the following natural parameters that have also been used in this setting before~\cite{Eiben2020,Eiben2020a,Ganian2021,Bhore2023,BhoreLMN21}:
the \emph{vertex+edge deletion distance}, which is $\nadd + \Size{\EaddH}$, and the total \emph{number of missing vertices and edges}, i.e., $\nadd + \madd$. 

\section{\SLEShort{} With Only Missing Edges is \FPT{}}
\label{sec:only-edges}
We begin our investigation by first analyzing the special case where $V(G)=V(H)$, i.e., when only edges are missing from $H$. We recall that the problem remains \NP-complete even in this setting, as it generalizes the problem of computing the stack number of a graph with a prescribed vertex order~\cite{Chung1987,Unger88,Unger.1992,Bhore2020,BhoreGMN19}.
Furthermore, both of the aforementioned measures of the incompleteness of $\lSL{H}$ are the same and equal $\madd = \Size{\EaddH}$. 
As a ``warm-up'' result, we show that in this setting \SLEShort\ is fixed-parameter tractable parameterized by $\madd$.

Towards this, consider the set $S(e) \subseteq [\ell]$ of pages on which we could place a new edge~$e$ without introducing a crossing with edges from $H$; formally, $p\in S(e)$ if and only if $\langle\prec_{H},\sigma_{H}\cup {(e,p)}\rangle$ is an $\ell$-page stack layout of $H\cup \{e\}$.
Intuitively, if $\Size{S(e)}$ is large enough, then we are always able to find a %
``free'' page to place $e$ independent of the placement of the remaining new edges.
Formally, one can easily show:

\begin{restatable}\restateref{lem:only-edges-remove-easy-edges}{lemma}{lemmaOnlyEdgesRemoveEasyEdges}
    \label{lem:only-edges-remove-easy-edges}
	Let $\instance = \instanceLong$ be an instance of \SLEShort{} with $\Vadd{} = \emptyset$ that contains an edge $e \in \Eadd{}$ with $\Size{S(e)} \geq \madd{}$.
	The instance $\instance' = \left(\ell, H, G', \lSL{}\right)$ with $G' = G \setminus \{e\}$ is a positive instance if and only if \instance{} is a positive instance.
\end{restatable}
\begin{prooflater}{plemmaOnlyEdgesRemoveEasyEdges}
    First, note that removing an edge from $G$ and adapting the page assignment $\sigma$ accordingly does not invalidate an existing solution \lSL{} to \SLEShort{} for \instance.
	Hence, the ``($\boldsymbol{\Leftarrow}$)-direction'' holds trivially, and we focus on the ``($\boldsymbol{\Rightarrow}$)-direction''.
	
	\proofsubparagraph*{($\boldsymbol{\Rightarrow}$)}
	Let $\instance'$ be a positive instance of \SLEShort{} with the solution \lSL{}.
	By our selection of~$e$, there exists a page $p \in S(e)$ such that we have for every edge $e' \in \Eadd{}$ with $e' \neq e$ that $\sigma(e') \neq p$ holds.
	We take \lSL{} and extend $\sigma$ by the page assignment $(e, p)$ to obtain \lSL{G}.
	By the definition of $S(\cdot)$, this cannot introduce a crossing with edges from $H$ and by our selection of $p$, no crossings with other edges from $\Eadd{}$ are possible either.
	Hence, \lSL{G} is a valid stack layout of $G$ that extends \lSL{H} as we did not alter \lSL{} except extending the page assignment.
    Thus, it witnesses that \instance{} is a positive instance of \SLEShort{}.
\end{prooflater}
\noindent
With \cref{lem:only-edges-remove-easy-edges} in hand, we can establish the desired result:
\begin{restatable}\restateref{thm:only-edges-fpt}{theorem}{theoremOnlyEdgesFPT}
	\label{thm:only-edges-fpt}
    Let $\instance = \instanceLong$ be an instance of \SLEShort{} with $\Vadd = \emptyset$.
	We can find an $\ell$-page stack layout of $G$ that extends \lSL{} or report that none exists in \BigO{\madd{}^{\madd{}}\cdot\Size{\instance{}}} time.
\end{restatable}
\begin{proofsketch}
	We compute for a single edge $e \in \Eadd{}$ the set $S(e)$ in linear time by checking with which of the old edges $e$ would cross.
	If $S(e) \geq \madd{}$, then following \cref{lem:only-edges-remove-easy-edges}, we remove~$e$ from $G$.
	Overall, this takes $\BigO{\madd{} \cdot \Size{\instance{}}}$ time and results in a graph $G'$ with $H \subseteq G' \subseteq G$.
	Furthermore, each remaining new edge $e' \in E(G') \setminus E(H)$ can be put in fewer than~$\madd{}$ different pages.
	Hence, we can brute-force over all the at most $\BigO{\madd{}^{\madd{}}}$ page assignments $\sigma'$ that extend $\sigma_H$ for all edges in $E(G') \setminus E(H)$, and for each such assignment we check in linear time whether no pair of edges $e', e''\in E(G')\setminus E(H)$ cross each other.
\end{proofsketch}
\begin{prooflater}{ptheoremOnlyEdgesFPT}
	We compute for a single edge $e \in \Eadd{}$ the set $S(e)$ in linear time by checking with which of the old edges $e$ would cross.
	If $S(e) \geq \madd{}$, then we remove $e$ from $G$.
	Overall, this takes $\BigO{\madd{} \cdot\Size{\instance{}}}$ time and results in a graph $G'$ with $H \subseteq G' \subseteq G$.
	Furthermore, each edge $e' \in E(G') \setminus E(H)$ can be put in fewer than $\madd{}$ different pages.
	Hence, we can brute force the possible page assignments for each new edge $e'$.
	Each of the resulting \BigO{\madd{}^{\madd{}}} different $\ell$-page stack layouts \lSL{} is by construction an extension of \lSL{H}.
	Creating \lSL{} can be done by copying \lSL{H} and then augmenting it with $\BigO{\madd{}}$ new edges.
	This amounts to \BigO{\Size{\instance{}}} time.
	For each created $\ell$-page stack layout, we can check in linear time whether it is crossing free. %
	Note that by our pre-processing step, no new edge can cross an old edge, and thus it suffices to check whether no pair of edges $e', e''\in E(G')\setminus E(H)$ cross each other.
	If there exists a crossing free layout, then applying \cref{lem:only-edges-remove-easy-edges} iteratively tells us that we can extend it to a solution \lSL{G} for~\instance.
	If none of them is crossing free, we conclude by applying (iteratively) \cref{lem:only-edges-remove-easy-edges} that \instance{} does not admit the desired $\ell$-page stack layout.
	Combining all, the overall running time is \BigO{\madd{}^{\madd{}}\cdot\Size{\instance{}}}.
\end{prooflater}

\section{\SLEShort{} With Two Missing Vertices is \NP{}-complete}
\label{sec:paranp-hardness}
Adding only edges to a given linear layout is arguably quite restrictive.
Therefore, we now lift this restriction and consider \SLEShort{} in its full generality, i.e., also allow adding vertices.
Somewhat surprisingly, as our first result in the general setting we show that \SLEShort{} is \NP-complete even if the task is to merely add two vertices, i.e., for $\nadd{} = 2$ and $\EaddH{} = \emptyset$.
This rules out not only fixed-parameter but also \XP\ tractability when parameterizing by the vertex+edge deletion distance, and represents---to the best of our knowledge---the first example of a drawing extension problem with this behavior.%

To establish the result, we devise a reduction from \ThreeSat~\cite{Garey.1979}.
\ifthenelse{\boolean{long}}{}{%
In our reduction, we insert two new vertices into a partial layout derived from the given formula, and use the page assignment of their incident edges to encode a truth assignment and validate that it satisfies all clauses.
For this, we will need to restrict the positions of the new vertices to a certain range along the spine.
In \cref{sec:fixation-gadget}, we introduce the \emph{fixation gadget} that ensures this.
We also reuse this gadget in the reduction shown in \cref{sec:w-1}.
But first, we use it in this section to perform our reduction and prove \NP-completeness in \cref{sec:paranp-hardness-construction}.%
}%
\begin{statelater}{sectionParaNPIntuition}%
Let $\varphi = (\mathcal{X}, \mathcal{C})$ be an instance of \ThreeSat{} consisting of $N$ variables $\mathcal{X} = \{x_1, \ldots, x_N\}$ and $M$ clauses $\mathcal{C} = \{c_1, \ldots, c_M\}$, each consisting of three different and pairwise non-complementary literals.
\ifthenelse{\boolean{long}}{Intuitively, the}{The}
reduction constructs an instance $\instance{} = \instanceLong{}$ of \SLEShort{} which represents each variable $x_i$ and each clause $c_j$ of $\varphi$, respectively, by a corresponding vertex in $H$.
The linear order~$\prec_H$ has the form $x_1 \prec x_2 \prec \ldots \prec x_N \prec c_1 \prec \ldots \prec c_M$; see \ifthenelse{\boolean{long}}{\cref{fig:paranp-hardness-base-layout-edges}}{\cref{fig:paranp-hardness-base-layout-edges-no-labels}}.
Furthermore, \instance contains two new vertices~$s$ and $v$.
The vertex $s$ is adjacent to all variable-vertices and the construction will ensure that the page assignment for its incident edges represents, i.e., \emph{\textbf{s}}elects, a truth assignment $\Gamma$ for $\varphi$.
The vertex $v$ is adjacent to all clause-vertices, and its purpose is to \emph{\textbf{v}}erify that the truth assignment satisfies all clauses.
For the following\ifthenelse{\boolean{long}}{ high-level}{} description of how this is achieved, we assume $s \prec v \prec x_1$ \ifthenelse{\boolean{long}}{and will ensure later}{as we will use a fixation gadget to ensure} that every solution \lSL{} of \instance{} has this property.

To each variable $x_i$, we associate two pages $p_i$ and $p_{\lnot i}$ corresponding to its possible truth states.
We ensure that $s$ can see each variable-vertex only on its associated pages using edges incident to dummy vertices $d_q$ with $q \in [N + M + 1]$.
These dummy vertices are distributed as in \ifthenelse{\boolean{long}}{\cref{fig:paranp-hardness-base-layout-edges}}{\cref{fig:paranp-hardness-base-layout-edges-no-labels}}.
Hence, a page assignment for the edges incident to $s$ induces a truth assignment.
Similar edges also ensure that $v$ can see a clause-vertex $c_j$ only on the pages that are associated to the negation of the literals the clause $c_j$ is composed of, see \ifthenelse{\boolean{long}}{\cref{fig:paranp-hardness-base-layout-edges}}{the blue arcs in \cref{fig:paranp-hardness-base-layout-edges-no-labels} for an illustration}.
We defer the full construction to \cref{sec:paranp-hardness-base-layout}.
\ifthenelse{\boolean{long}}{}{%

We now ensure that $s \prec v \prec x_1$ holds in every solution of \instance{} by using a fixation gadget on two vertices, i.e., for $F = 2$.
In particular, we set $a_3 \prec d_1$, i.e., we place the fixation gadget at the beginning of the spine, and identify $s = f_1$ and $v = f_2$.
The spine order $\prec_H$ is then the transitive closure of all the partial orders stated until now; see \cref{fig:paranp-hardness-base-layout-edges-no-labels}.
Finally, we add the edge $d_1d_{N + M + 1}$ and set $\sigma(d_1d_{N + M + 1}) = p_d$ to ensure that our construction has \cref{prop:fixation-gadget-dummy-page}.

Regarding the correctness of our reduction, we make the following observation.
}%
If an induced truth assignment does not satisfy a clause $c_j$, then it must use the pages associated to the negated literals of~$c_j$.
Thus, the new edge~$vc_j$ will cross another edge no matter which page we use\ifthenelse{\boolean{long}}{, see also \cref{fig:paranp-hardness-base-layout-s-v}}{}.
However, if a clause $c_j$ is satisfied, we can find a page for the edge $vc_j$ that does not introduce a crossing: the page associated to the negation of the literal that satisfies~$c_j$.
Consequently, if~$\varphi$ is satisfiable, then there exists an extension \lSL{G}.
Similarly, the page assignment of an extension \lSL{G} induces a truth assignment $\Gamma$ that satisfies~$\varphi$.
\ifthenelse{\boolean{long}}{}{%
An intuitive example of the reduction is provided in \cref{fig:paranp-hardness-example}, and we obtain the following theorem.
}

\end{statelater}

\ifthenelse{\boolean{long}}{
Finally, recall that our approach hinges on some way to restrict the new vertices $s$ and $v$ to be placed within a certain range, i.e., before $x_1$.
We realize this using the \emph{fixation gadget} that we describe in \cref{sec:fixation-gadget}.
We also reuse this gadget in the reduction from \cref{sec:w-1}.
In \cref{sec:paranp-hardness-construction}, we show how this can be build into our reduction to prove \cref{thm:paranp-hardness}.
}{}

The graph $H$ that we construct will have multi-edges to facilitate the presentation of the reduction. The procedure for removing multi-edges is detailed in \cref{app:removing-multi-edges}.
\begin{statelater}{sectionParaNPBaseLayout}
\begin{figure}[t]
	\centering
	\includegraphics[page=5]{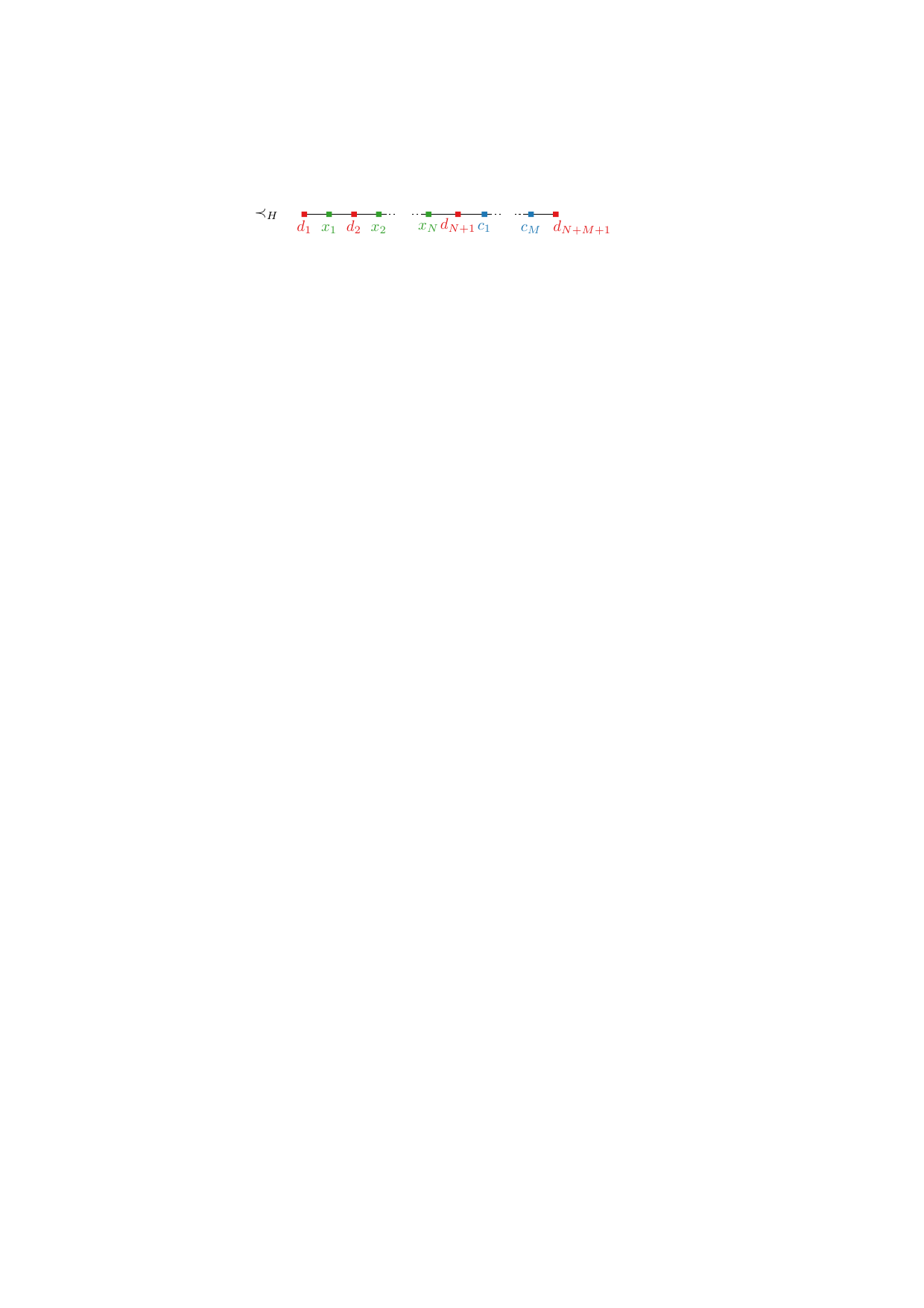}
	\caption{Parts of the spine order $\prec_{H}$. Green vertices represent variables, blue vertices clauses, and red vertices the dummy vertices. Furthermore, we visualize some edges in $H$ that are created for the variable-vertices (left) and clause-vertices (middle and right).
	If an edge is created due to the (non-)existence of a literal in the clause $c_1$, $c_2$, or $c_M$ it is indicated via a blue arc. \ifthenelse{\boolean{long}}{}{Extended version of \cref{fig:paranp-hardness-base-layout-edges-no-labels}.}}
	\label{fig:paranp-hardness-base-layout-edges}
\end{figure}
\subsection{Encoding the Variables and Clauses: The Base Layout}
\label{sec:paranp-hardness-base-layout}
For each variable $x_i$ and each clause $c_j$ in $\varphi$ we introduce a vertex in $H$.
In the following, we use the same symbol to address an element of $\varphi$ and its representation in $H$.
Let us first fix the spine order $\prec_H$.
For every $i \in [N - 1]$ and $j \in [M - 1]$, we set $x_i \prec x_{i + 1}$ and $c_{j} \prec c_{j + 1}$.
Furthermore, we order the variables before the clauses on the spine, i.e., we set $x_N \prec c_1$.
Next, we add $N + M + 1$ dummy vertices $d_q$ to $H$ and distribute them on the spine.
More concretely, we set $d_i \prec x_i \prec d_{i + 1}$ and $d_{N + j} \prec c_j \prec d_{N + j + 1}$ for every $i \in [N]$ and $j \in [M]$.
By taking the transitive closure of the above order, we obtain the following linear order, which we also visualize in \cref{fig:paranp-hardness-base-layout-edges}.
\begin{align*}
d_1 \prec x_1 \prec d_2 \prec x_2 \prec \ldots \prec x_N \prec d_{N + 1} \prec c_1 \prec \ldots \prec c_M \prec d_{N + M + 1}
\end{align*}
We now turn our attention to the page assignment $\sigma_{H}$ and create for each variable $x_i \in \mathcal{X}$ the two pages $p_{i}$ and $p_{\lnot i}$.
Intuitively, the assignment of an edge incident to $x_i$ to either of these two pages will determine whether $x_i$ is true or false.
For each $i, j \in [N]$ with $i \neq j$, we create the edges $e(x_i, p_j) = d_id_{i + 1}$ and $e(x_i, p_{\lnot j}) = d_id_{i + 1}$ in $H$.
We assign the edge $e(x_i, p_j)$ to the page $p_j$ and the edge $e(x_i, p_{\lnot j})$ to $p_{\lnot j}$, i.e., we have $\sigma(e(x_i, p_j)) = p_j$ and $\sigma(e(x_i, p_{\lnot j})) = p_{\lnot j}$.
\cref{fig:paranp-hardness-base-layout-edges} (left) visualizes this.
Note that this introduces multi-edges in~$H$, but recall that we resolve them in \cref{app:removing-multi-edges}. In particular, we will address these multi-edges in \cref{app:multi-edges-paranp-hardness}.
Next, we consider each combination of a clause $c_j \in \mathcal{C}$ and a variable $x_i \in \mathcal{X}$.
If $x_i$ does not appear in $c_j$, we create the edges $e(c_j, p_i) = d_{N + j}d_{N + j + 1}$ and $e(c_j, p_{\lnot i}) = d_{N + j}d_{N + j + 1}$.
Similar to before, we set $\sigma(e(c_j, p_i)) = p_i$ and $\sigma(e(c_j, p_{\lnot i})) = p_{\lnot i}$.
If $x_i$ appears in $c_j$ without negation, we create the edge $e(c_j, p_i) = d_{N + j}d_{N + j + 1}$ and set $\sigma(e(c_j, p_i)) = p_{i}$.
Symmetrically, if $x_i$ appears negated in $c_j$, we create the edge $e(c_j, p_{\lnot i}) = d_{N + j}d_{N + j + 1}$ and set $\sigma(e(c_j, p_{\lnot i})) = p_{\lnot i}$.
We visualize this page assignment in \cref{fig:paranp-hardness-base-layout-edges} (middle and right).

This completes the base layout of our reduction.
\ifthenelse{\boolean{long}}{However, we have only defined parts of~$H$ and its stack layout \lSL{} and complete the construction in \cref{sec:paranp-hardness-construction}. }{}%
Next, we introduce two new vertices~$s$ and $v$ in $G$.
The vertex $s$ is adjacent to each $x_i \in \mathcal{X}$ and the vertex $v$ is adjacent to each $c_j \in \mathcal{C}$.
Let us assume for the moment that in every extension \lSL{G} of \lSL{H} we have $s \prec v \prec d_1$.
Then, the vertex $s$ can ``see'' each $x_i \in \mathcal{X}$ only on the pages $p_i$ and $p_{\lnot i}$.
Hence, the page assignment of $sx_i$ can be interpreted as the truth state of $x_i$.
Similarly, from the perspective of the vertex $v$, each $c_j \in \mathcal{C}$ is only visible on the pages that correspond to the complementary literals of $c_j$.
So intuitively, if the page assignment of the edges incident to $s$ induces a truth assignment that falsifies $c_j$, then these new edges will block the remaining available pages for the edge $vc_j$.
Hence, $v$ verifies the truth assignment induced by the edges incident to $s$, see also \cref{fig:paranp-hardness-base-layout-s-v}.
\begin{figure}[t]
	\centering
	\includegraphics[page=7]{paranp-base-layout}
	\caption{The edges incident to $s$ induce the truth assignment $x_1 = 1$, $x_2 = 0$, and $x_3 = 0$. This assignment satisfies the clause $c_1 =\left( x_1 \lor \lnot x_2 \lor x_3 \right)$, as we can set $\sigma_G(vc_1) = p_2$, but not the clause $c_2 = \left(\lnot x_1 \lor x_2 \lor x_3\right)$ as we cannot find a page for the edge $vc_2$.}
	\label{fig:paranp-hardness-base-layout-s-v}
\end{figure}
\ifthenelse{\boolean{long}}{This indicates the intended semantics of our reduction but has still one caveat: 
We have to ensure that we have $s \prec v \prec d_1$ in every solution to our created instance.}{Note that this relies on the assumption that we have $s \prec v \prec d_1$. However, we can satisfy this assumption using a fixation gadget.}
\end{statelater}

\subsection{Restricting the Placement of New Vertices: The Fixation Gadget}
\label{sec:fixation-gadget}
The purpose of the so-called \emph{fixation gadget} is to restrict the possible positions of new vertices to given intervals.
As this gadget will also find applications outside this reduction, we describe in the following in detail its general construction for $F > 1$ new vertices $\mathcal{F} = \{f_1, \ldots, f_F\}$.

First, we introduce $3(F + 1)$ new vertices $v_1, \ldots, v_{F + 1}$, $b_1, \ldots, b_{F + 1}$, and $a_1, \ldots, a_{F + 1}$. %
We fix the spine order $\prec_{H}$ among these vertices %
to 
	$b_1 \prec v_1 \prec a_1 \prec b_2 \prec v_2 \prec a_2 \prec \ldots \prec b_{F + 1} \prec v_{F + 1} \prec a_{F + 1}$; see also \cref{fig:fixation-gadget-example}.
Then, every new vertex $f_i$ is made adjacent to~$v_i$ and $v_{i + 1}$ and we aim to allow these new edges to be placed only in a dedicated further page $p_d$.
To achieve this, we first introduce for every $i \in [F + 1]$ and every page $p \neq p_d$ an edge $e(b_i, a_i, p) = b_ia_i$ in $H$ and set $\sigma(e(b_i, a_i, p)) = p$; see \cref{fig:fixation-gadget-example}.
Furthermore, we also introduce the edges $b_iv_i$ and $v_ia_i$ and set $\sigma(b_iv_i) = \sigma(v_ia_i) = p_d$ for all $i \in [F + 1]$ .
For every $i \in [F]$, we add the edge $v_iv_{i + 1}$ and place it on the page~$p_d$, i.e., we have $\sigma(v_iv_{i + 1}) = p_d$ as in \cref{fig:fixation-gadget-example}.
Finally, we also create the edge $b_1a_{F + 1}$ and set $\sigma(b_1a_{F + 1}) = p_d$.
To complete the construction of the fixation gadget, we add the new edges $f_iv_i$ and $f_iv_{i + 1}$ for every $i \in [F]$ to~$G$.
\cref{fig:fixation-gadget-example} shows an example of the fixation gadget for $F = 2$.
\begin{figure}[t]
	\centering
	\includegraphics[page=4]{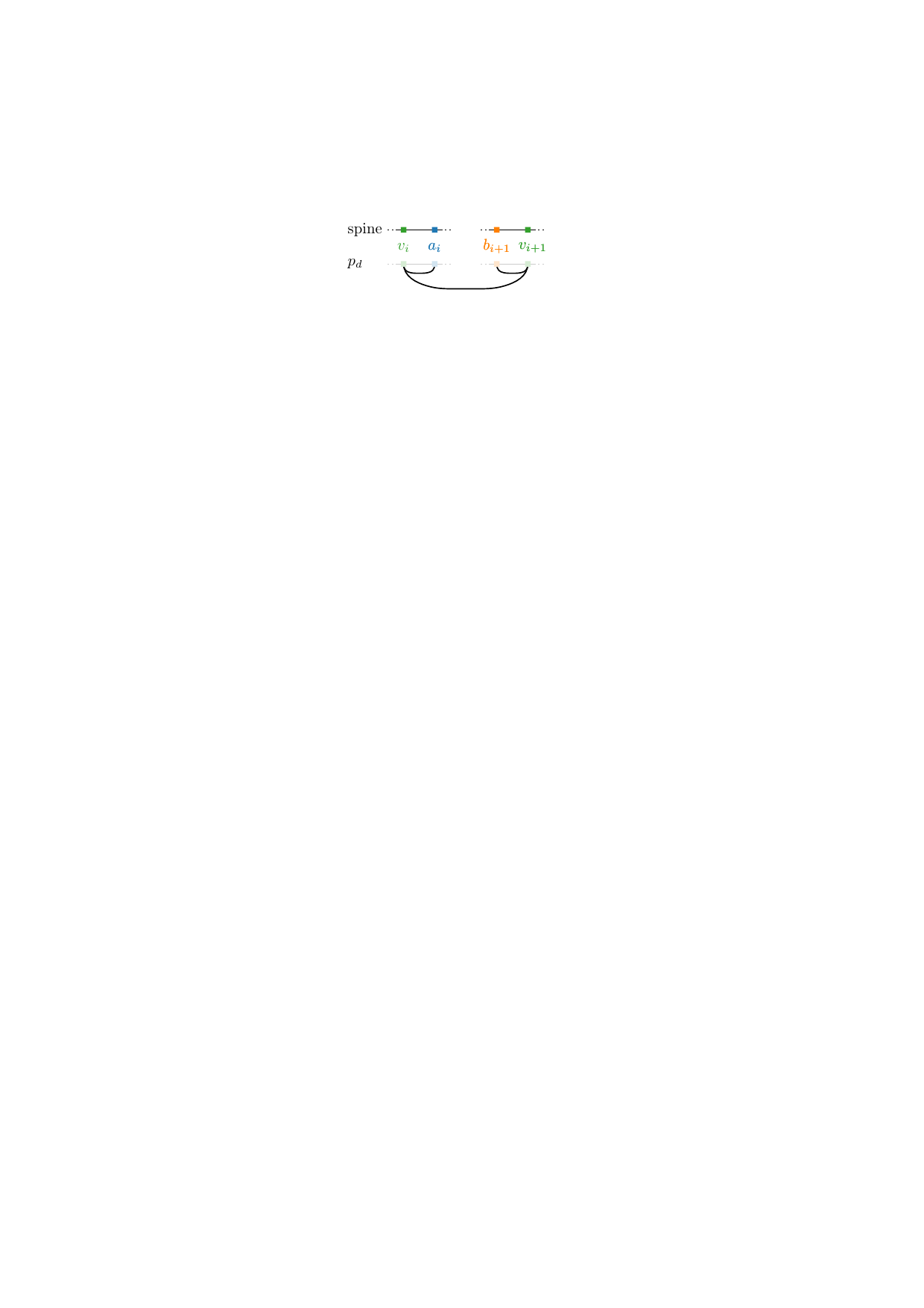}
	\caption{A fixation gadget for $F = 2$ with five other pages in the stack layout. We also highlight the intended position for $f_1$ and $f_2$ on the spine and the page assignment for their incident edges.}
	\label{fig:fixation-gadget-example}
\end{figure}

Next, we show that the fixation gadget forces $f_i$ to lie between $v_i$ and $v_{i + 1}$ on the spine and the edges $f_iv_i$ and $f_iv_{i + 1}$ to be on the page $p_d$ for every $i \in [F]$.
\begin{restatable}\restateref{lem:fixation-gadget-properties}{lemma}{lemmaFixationGadgetProperties}
    \label{lem:fixation-gadget-properties}
	Let $\instance = \instanceLong$ be an instance of \SLEShort{} that contains a fixation gadget on $F$ vertices $\{f_1, \ldots, f_{F}\}$.
	In any solution \lSL{G} to \instance{} and for every $i \in [F]$, we have $v_i \prec f_i \prec v_{i + 1}$ and $\sigma(f_iv_i) = \sigma(f_{i}v_{i + 1}) = p_d$.
	Furthermore, the fixation gadget contributes $4F + 3$ vertices and $(\ell + 4)F + \ell + 2$ edges to the size of \instance.
\end{restatable}
\begin{proofsketch}
Towards establishing $v_i \prec f_i \prec v_{i + 1}$, one can show that $f_i\prec v_i$ would prevent~$f_i$ from seeing $v_{i+1}$ on any page: As $f_i\prec v_i$ implies $f_i \prec b_{i + 1} \prec v_{i + 1} \prec a_{i + 1}$ and we have the edge $b_{i+1}a_{i + 1}$ on any page except $p_d$, only visibility on page $p_d$ would still be possible.
However, the edges on the page $p_d$ prevent visibility to $v_{i+1}$ for any spine position left of $v_i$.
By symmetric arguments, we can obtain that $v_{i+1}\prec f_i$ would prevent $v_i$ from seeing~$f_i$. 
Using again the fact that we have the edge $b_{i}a_{i}$ on any page except $p_d$, in concert with the relation $v_i \prec f_i \prec v_{i + 1}$ shown above and the edges $v_ia_{i}$ and $b_{i + 1}v_{i + 1}$ on the page~$p_d$, one can deduce that $\sigma(f_iv_i) = \sigma(f_{i}v_{i + 1}) = p_d$ must hold.
Finally, the bound on the size of the gadget can be obtained by a close analysis of the construction.
\end{proofsketch}
\begin{prooflater}{plemmaFixationGadgetProperties}
	Let \lSL{G} be a solution to $\instance= \instanceLong$.
	First, we will show that $v_i \prec f_i \prec v_{i + 1}$ must hold.
	Towards a contradiction, assume that there exists a solution \lSL{G} with $f_i \prec v_i$.
	Observe that $f_i \prec v_i$ also implies $f_i \prec v_{i + 1}$ and recall that we have in $H$ the edges $e(b_{i + 1}, a_{i + 1}, p)$ for every page $p \neq p_d$.
	As $b_{i + 1} \prec_H v_{i + 1} \prec_H a_{i + 1}$ and \lSL{G} is an extension of \lSL{H}, $f_{i}$ can see $v_{i + 1}$ only on the page $p_d$.
	Hence, we must have $\sigma(f_iv_{i + 1}) = p_d$.
	We will now distinguish between the following two cases.
	On the one hand, there could exist a $j$ with $1 \leq j < i \leq F$, such that $v_j \prec f_i \prec v_{j + 1} \prec v_{i + 1}$.
	However, since we have $\sigma_{H}(v_{j}v_{j + 1}) = p_d$, this cannot be the case, as this would introduce a crossing on the page~$p_d$ between the edges $f_iv_{i + 1}$ and $v_{j}v_{j + 1}$.
	On the other hand, we could have $f_{i} \prec b_1$, i.e.,~$f_i$ is placed at the beginning of the fixation gadget.
	Observe that we have in this situation $f_{i} \prec b_1 \prec v_{i + 1} \prec a_{F + 1}$ and $\sigma_{H}(b_1a_{F + 1}) = p_d$.
	Hence, we would introduce a crossing on the page $p_d$ between the edges $f_iv_{i + 1}$ and $b_1a_{F + 1}$.
	As $\sigma(f_{i}v_{i + 1}) = p_d$ can therefore not hold, we have no page to which we could assign $f_{i}v_{i + 1}$ without introducing a crossing, contradicting the assumption that we have a solution with $f_{i} \prec v_{i}$.
	As the arguments that exclude $v_{i + 1} \prec f_{i}$ are symmetric, we obtain that $v_{i} \prec f_{i} \prec v_{i + 1}$ must hold in any solution to \instance.
	
	Secondly, we will show that $\sigma(f_iv_i) = \sigma(f_{i}v_{i + 1}) = p_d$ holds.
	Towards a contradiction, assume that there exists a solution \lSL{G} with $\sigma(f_iv_i) \neq p_d$ for some $i \in [F]$.
	Hence, $\sigma(f_iv_i) = p$ holds for some page $p \neq p_d$.
	Recall that we have the edge $e(b_i, a_i, p)$ with $\sigma_{H}(e(b_i, a_i, p)) = p$.
	This allows us to strengthen $v_{i} \prec f_{i} \prec v_{i + 1}$, which we have shown before, to $v_i \prec f_i \prec a_i$ under the assumption of $\sigma(f_iv_i) = p$, as we would otherwise have a crossing on the page $p$.
	However, then we can conclude from $\sigma_{H}(v_ia_i) = p_d$ and the existence of the edges $e(b_{i + 1}, a_{i + 1}, p')$, with $\sigma_{H}(e(b_{i + 1}, a_{i + 1}, p')) = p'$, for any page $p' \neq p_d$ and $f_i \prec b_{i + 1} \prec v_{i + 1} \prec a_{i + 1}$ that there does not exist a feasible page assignment for the edge $f_iv_{i + 1}$.
	This contradicts our assumption of a solution with $\sigma(f_iv_i) \neq p_d$ and a symmetric argument rules out any solution with $\sigma(f_iv_{i + 1}) \neq p_d$.
	
	Thirdly, we analyze the size of the fixation gadget.
	Recall that $\mathcal{F}$ consists of $F$ vertices, and we introduce $3(F + 1)$ vertices in $H$.
	Furthermore, the fixation gadget contributes one page to an (existing) stack layout \lSL{} of $H$ on $\ell - 1$ pages.
	Regarding the number of edges, we create in~$H$ $(\ell - 1)(F + 1)$ edges of the form $b_ia_i$, $i \in [F + 1]$, $2(F + 1)$ edges of the form $b_iv_i$ or $v_ia_i$, $i \in [F + 1]$, $F$ edges of the form $v_iv_{i + 1}$, $i \in [F]$, and the edge $b_1a_{F + 1}$.
	Together with the $2F$ new edges that we add to $G$, this sums up to 
	$(\ell + 4)F + \ell + 2$ edges.
\end{prooflater}
\noindent
\cref{lem:fixation-gadget-properties} tells us that we can restrict the feasible positions for $f_i$ to a pre-defined set of consecutive intervals by choosing suitable positions for $v_i$ and $v_{i + 1}$ in the spine order~$\prec_{H}$.
As the fixation gadget requires an additional page $p_d$, we must ensure that the existence of the (otherwise mostly empty) page $p_d$ does not violate the semantics of our reductions. 
In particular, we will (have to) ensure that our full constructions satisfy the following property.
\begin{property}
	\label{prop:fixation-gadget-dummy-page}
	Let $\instance = \instanceLong$ be an instance of \SLEShort{} that contains a fixation gadget on $F$ vertices $\{f_1, \ldots, f_F\}$.
	In any solution \lSL{G} to \instance{} and for every new edge $e \in \Eadd{}$ with $\sigma(e) = p_d$, we have $e \in \{f_iv_i, f_{i}v_{i + 1} \mid i \in [F]\}$.
\end{property}

\subsection{The Complete Reduction}
\label{sec:paranp-hardness-construction}
\ifthenelse{\boolean{long}}{
Recall the base layout of our reduction that we described in \cref{sec:paranp-hardness-base-layout} and illustrated with \cref{fig:paranp-hardness-base-layout-edges}.
There, we created, for a given formula $\varphi$, one vertex for each variable $x_i$ and each clause $c_j$.
Furthermore, each $x_i$ should only be visible on two pages that correspond to its individual truth state and each $c_j$ should only be visible on three pages that correspond to the complementary literals in $c_j$.
However, the intended semantics of our reduction rely on the assumption that new vertices can only be placed on a specific position, and in a specific order, on the spine.
Equipped with the fixation gadget, we will now satisfy this assumption.

For an instance $\varphi = (\mathcal{X}, \mathcal{C})$ of \ThreeSat, we take the base layout of our reduction as described in \cref{sec:paranp-hardness-base-layout} and incorporate in $H$ a fixation gadget on two vertices, i.e., for $F = 2$.
We set $a_3 \prec d_1$, i.e., we place the fixation gadget at the beginning of the spine, and identify $s = f_1$ and $v = f_2$.
Furthermore, we add the edge $d_1d_{N + M + 1}$ and set $\sigma(d_1d_{N + M + 1}) = p_d$.
Observe that this ensures that our construction will have \cref{prop:fixation-gadget-dummy-page}, as this edge prevents connecting~$s$ with $x_i$ or $v$ with $c_j$ on page $p_d$ for any $i \in [N]$ and $j \in [M]$.
Finally, we add to~$G$ the new edges $sx_i$ and $vc_j$ for every $i \in [N]$ and $j \in [M]$.
This completes our reduction and we establish with the following theorem its correctness; see also \cref{fig:paranp-hardness-example} for an example of our construction for a small formula.
}{   
\begin{figure}[t]
	\centering
	\includegraphics[page=6]{paranp-base-layout}
	\caption{An overview of the created vertices and edges in our reduction. Green vertices represent variables, blue vertices clauses, and red vertices the dummy vertices $d_q$. Furthermore, we visualize some of the edges in $H$ that are created for the variable-vertices (left) and clause-vertices (middle and right) to block visibility on the respective pages.
	If an edge is created due to the (non-)existence of a literal in the clauses $c_1$, $c_2$, or $c_M$ it is indicated via a blue arc.}
	\label{fig:paranp-hardness-base-layout-edges-no-labels}
\end{figure}
\sectionParaNPIntuition
}
\begin{figure}[t]
	\centering
	\includegraphics[page=2]{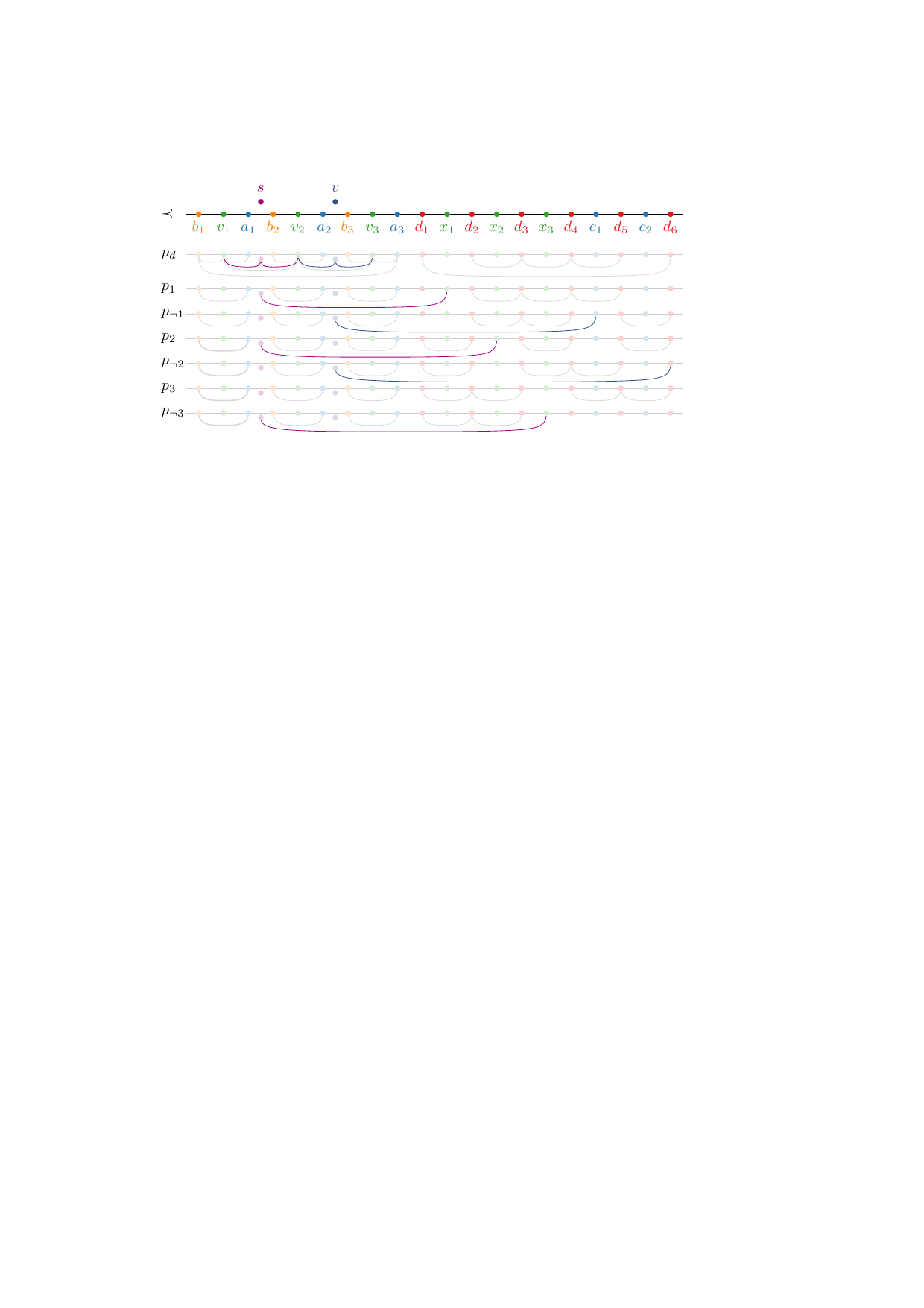}
	\caption{An example of our reduction for the formula $\varphi$ consisting of the clauses $c_1 =\left( x_1 \lor \lnot x_2 \lor x_3 \right)$ and $c_2 = \left(\lnot x_1 \lor x_2 \lor x_3\right)$. The extension indicated in saturated colors induces the truth assignment $\Gamma(x_1) = \Gamma(x_2) = 1$ and $\Gamma(x_3) = 0$, which satisfies $\varphi$.}
	\label{fig:paranp-hardness-example}
\end{figure}
\begin{restatable}\restateref{thm:paranp-hardness}{theorem}{theoremParaNPHardness}
	\label{thm:paranp-hardness}
	\SLEShort{} is \NP-complete even if we have just two new vertices and $\EaddH{} = \emptyset$.
\end{restatable}
\begin{prooflater}{ptheoremParaNPHardness}
    \NP-membership of \SLEShort{} follows immediately from the fact that we can encode a solution \lSL{} in \BigO{\Size{\instance}} space and can verify it in polynomial time.
    Thus, we focus in the remainder of the proof on showing \NP-hardness of \SLEShort{}.

	Let $\varphi = (\mathcal{X}, \mathcal{C})$ be an instance of \ThreeSat{} and let $\instance = \instanceLong$ be the obtained instance of \SLEShort{}.
	The number of vertices and edges in $H$ that we created in the base layout of \cref{sec:paranp-hardness-base-layout} is in \BigO{N + M}.
	Furthermore, in the base layout, we used $2N$ pages.
	Together with the page from the fixation gadget, we have $\ell = 2N + 1$.
	As $F = 2$ is constant, the contribution of the fixation gadget to the size of $H$ and $G$ is linear in $\varphi$ by \cref{lem:fixation-gadget-properties}.
	Hence, the overall size of \instance{} is polynomial in the size of $\varphi$.
	Clearly, the time required to create~\instance{} is also polynomial in the size of $\varphi$ and it remains to show the correctness of the reduction.

	\proofsubparagraph*{($\boldsymbol{\Rightarrow}$)}
	Assume that $\varphi$ is a positive instance of \ThreeSat{} and let $\Gamma: \mathcal{X} \to \{0,1\}$ be a truth assignment that satisfies every clause.
	To show that \instance{} is a positive instance of \SLEShort, we create a stack layout \lSL{} of $G$.
	We first ensure that it extends \lSL{H} by copying \lSL{H}.
	Next, we set $a_1 \prec s \prec b_2$, $a_2 \prec v \prec b_3$ and $\sigma(sv_1) = \sigma(sv_2) = \sigma(vv_2) = \sigma(vv_3) = p_d$ as shown in \cref{fig:paranp-hardness-example}.
	Furthermore, for every variable $x_i \in \mathcal{X}$, we set $\sigma(sx_i) = p_i$ if $\Gamma(x_i) = 1$ and $\sigma(sx_i) = p_{\lnot i}$ otherwise.
	For every clause $c_j \in \mathcal{C}$, we identify a variable $x_i$ that satisfies the clause $c_j$ under $\Gamma$.
	As $\Gamma$ satisfies every clause, the existence of such an $x_i$ is guaranteed.
	Then, we set $\sigma(sx_i) = p_i$ if $\Gamma(x_i) = 0$ and $\sigma(sx_i) = p_{\lnot i}$ otherwise.
	
	To show that \lSL{} is crossing-free, we first observe that for the fixation gadget, our generated solution satisfies the necessary properties stated in \cref{lem:fixation-gadget-properties}.
	We can observe that in our page assignment new edges cannot cross old edges.
	Hence, we only have to ensure that no two new edges cross.
	No two new edges on page $p_d$ can cross, so assume that there is a crossing on page $p_i$ for some $i \in [N]$.
	We observe that only edges of the form $sx_i$ and $vc_j$ for some $x_i \in \mathcal{X}$ and $c_j \in \mathcal{C}$ can cross, as they are otherwise incident to the same vertex.
	As there is a crossing on the page $p_i$, we must have by construction that the variable $x_i$ appears negated in the clause $c_j$ but we have $\Gamma(x_i) = 1$.
	Hence, $x_i$ does not satisfy $c_j$, which is a contradiction to our construction of \lSL{}, for which we only considered variables that satisfy the clause $c_j$.
	Therefore, a crossing on the page $p_i$ cannot exist.
	As the argument for a crossing on page $p_{\lnot i}$ is symmetric, we conclude that \lSL{} must be crossing-free and hence witnesses that \instance{} is a positive instance of \SLEShort{}.
		 
	\proofsubparagraph*{($\boldsymbol{\Leftarrow}$)}
	Assume that \instance{} is a positive instance of \SLEShort{}.
	This implies that there exists a witness extension \lSL{G} of \lSL{H}.
	As \instance{} contains the fixation gadget, we can apply \cref{lem:fixation-gadget-properties} and deduce that $s \prec v \prec d_1$ holds.
	Based on \lSL{G}, we now construct a truth assignment $\Gamma\colon \mathcal{X} \to \{0,1\}$ for $\varphi$.
	For each variable $x_i \in \mathcal{X}$, we consider the page assignment $\sigma(sx_i)$.
	Recall that we have $s \prec d_1 \prec x_i \prec d_{N + M + 1}$ and $\sigma_{H}(d_1d_{N + M + 1}) = p_d$.
	Together with $s \prec d_i \prec x_i \prec d_{i + 1}$ and $\sigma_{H}(d_id_{i + 1}) = p$ for any page $p \in[\ell] \setminus \{p_d, p_i, p_{\lnot i}\}$, we conclude that $\sigma(sx_i) \in \{p_i, p_{\lnot i}\}$ must hold.
    We set $\Gamma(x_i) = 1$ if $\sigma(sx_i) = p_i$ holds and $\Gamma(x_i) = 0$ if $\sigma(sx_i) = p_{\lnot i}$ holds and know by the above arguments that $\Gamma$ is well-defined.
	What remains to show is that $\Gamma$ satisfies $\varphi$.
	Let $c_j$ be an arbitrary clause of $\varphi$ and consider the page~$p$ with $\sigma(vc_j) = p$.
	As we have $\sigma_{H}(d_1d_{N + M + 1}) = p_d$ we know that $p \neq p_d$ must hold, i.e., the page~$p$ is associated to some variable $x_i$.
	For the remainder of the proof, we assume $p = p_i$ as the case $p = p_{\lnot i}$ is symmetric.
	Recall that we have $d_{N + j} \prec_{H} c_j \prec_{H} d_{N + j + 1}$, $\sigma_{H}(e(c_j, p_{i'})) = p_{i'}$, and $\sigma_{H}(e(c_j, p_{\lnot i'})) = p_{\lnot i'}$ for any variable $x_{i'} \in \mathcal{X}$ that does not occur in~$c_j$.
	Hence, we know that $x_i$ must occur in $c_j$.
	Furthermore, the same reasoning allows us to conclude that~$x_i$ must appear negated in $c_j$, as we would otherwise have a crossing.
	Using $s \prec v \prec x_i \prec c_j$ and our assumption of $\sigma(vc_j) = p_i$, we derive that $x_i$ will be visible from $s$ on page $p_{\lnot i}$ only, i.e., we have $\sigma(sx_i) = p_{\lnot i}$.
	By our construction of $\Gamma$, we conclude $\Gamma(x_i) = 0$.
	Hence, $x_i$ satisfies $c_j$ under $\Gamma$.
	As $c_j$ was selected arbitrarily, this holds for all clauses and therefore must $\Gamma$ satisfy the whole formula $\varphi$, i.e., it witnesses that~$\varphi$ is a positive instance of \ThreeSat.
\end{prooflater}
\ifthenelse{\boolean{long}}{}{
\noindent
Finally, we want to remark that \cref{thm:paranp-hardness} is tight in the sense that \SLEShort with only one new vertex $v$ and $\EaddH{} = \emptyset$ can be solved in polynomial time.
To that end, we can branch over all \BigO{n} possible spine positions where $v$ can be placed.
For each of these, the observation that edges incident to the same vertex can never cross each other allows us to greedily assign a new edge $uv$ to the first page $p$ where $v$ can see $u$.
Recall that we only add one new vertex $v$.
Hence, $u$ is an old vertex whose spine position is known.
Clearly, an extension exists if and only if there exists a spine position for $v$ such that our greedy page assignment can find a page for all new edges.
\begin{remark}
    \label{rem:sle-one-new-e-add-h-empty-poly}
    Let $\instance = \instanceLong$ be an instance of \SLEShort{} with $\nadd = 1$ and $\EaddH{} = \emptyset$.
	We can find an $\ell$-page stack layout of $G$ that extends \lSL{} or report that none exists in \BigO{n \cdot \madd \cdot \Size{\instance{}}} time.
\end{remark}
}

\section{\SLEShort{} Parameterized by Missing Vertices and Edges is in \XP}
\label{sec:xp}
In the light of \cref{thm:paranp-hardness}, which excludes the use of the vertex+edge deletion distance as a pathway to tractability, we consider parameterizing by the total number of missing vertices and edges $\kappa\coloneqq\nadd+\madd$. 
As our first result in this direction, we show that %
parameterizing \SLEShort{} by $\kappa$ makes it \XP-tractable.
To this end, %
we combine a branching-procedure with the %
fixed parameter algorithm for the special case obtained in \cref{thm:only-edges-fpt}.
\begin{theorem}
	\label{thm:xp-kappa}
	Let $\instance = \instanceLong$ be an instance of \SLEShort{}.
	We can find %
    an $\ell$-page stack layout of $G$ that extends \lSL{} or report that none exists in \BigO{\Size{\instance}^{\nadd + 1}{\madd}^{\madd}} time.
\end{theorem}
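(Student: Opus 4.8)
The plan is to combine an exhaustive branching over the possible spine positions of the new vertices with the fixed-parameter algorithm for the edges-only special case provided by \cref{thm:only-edges-fpt}. The key insight is that once the spine order of a hypothetical solution is fixed, the remaining task — extending the given page assignment $\sigma_H$ to all of $E(G)$ — is exactly an instance of \SLEShort{} with no missing vertices, to which \cref{thm:only-edges-fpt} applies as a black box.

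First, I would enumerate all spine orders $\prec_G$ of $V(G)$ that extend $\prec_H$. Since the relative order of the old vertices is already fixed, such an order arises by inserting the $\nadd$ new vertices one after another into the sequence of old vertices: the first insertion has at most $\Size{V(H)} + 1$ possible positions, the second at most $\Size{V(H)} + 2$, and so on. Hence there are at most $(\Size{V(H)} + \nadd)^{\nadd} \le \Size{\instance}^{\nadd}$ such orders, and one can generate them all (e.g.\ by a simple recursion) with $\BigO{\Size{\instance}}$ overhead each, i.e.\ in $\BigO{\Size{\instance}^{\nadd + 1}}$ time in total.

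Second, for each candidate order $\prec_G$, let $H^+$ be the graph obtained from $H$ by adding the new vertices of $\Vadd$ as isolated vertices. Then $V(H^+) = V(G)$, the relative order of $V(H)$ in $\prec_G$ is still $\prec_H$, and adding isolated vertices introduces no crossings, so $\langle\prec_G, \sigma_H\rangle$ is a valid $\ell$-page stack layout of $H^+$ and is an extension of $\lSL{H}$. Moreover $E(G) \setminus E(H^+) = \Eadd$, so the sub-instance $(\ell, H^+, G, \langle\prec_G,\sigma_H\rangle)$ has no missing vertices and exactly $\madd$ missing edges; \cref{thm:only-edges-fpt} decides it, and outputs a witness if one exists, in $\BigO{\madd^{\madd}\cdot\Size{\instance}}$ time. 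I would answer positively iff some candidate order yields a positive sub-instance. Correctness is immediate: any solution $\lSL{G}$ to $\instance$ uses a spine order among those enumerated, and its page assignment then solves the corresponding sub-instance; conversely a solution to any sub-instance is directly a solution to $\instance$. Multiplying the $\BigO{\Size{\instance}^{\nadd}}$ spine orders (the enumeration cost $\BigO{\Size{\instance}^{\nadd+1}}$ is absorbed) by the $\BigO{\madd^{\madd}\cdot\Size{\instance}}$ cost per order gives the claimed bound $\BigO{\Size{\instance}^{\nadd+1}\madd^{\madd}}$.

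I do not expect a serious obstacle here; the main point to verify carefully is the reduction step — namely that fixing the spine order genuinely decouples the vertex-placement choices from the page-assignment choices, so that \cref{thm:only-edges-fpt} can be invoked verbatim on each branch. In particular one should check that the new edges in $\EaddH$ as well as those incident to new vertices are all subsumed by the $\madd$ missing edges of the sub-instance, and that no crossing constraint among the old edges is affected by inserting the isolated new vertices into the spine. Beyond this bookkeeping the argument is routine.
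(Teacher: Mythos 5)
Your proposal is correct and essentially identical to the paper's proof: both branch over all $\BigO{\Size{\instance}^{\nadd}}$ spine orders extending $\prec_H$, then invoke \cref{thm:only-edges-fpt} as a black box on the resulting edges-only sub-instance in each branch. The only superficial difference is bookkeeping: the paper counts the branches as $\nadd!\binom{\Size{V(H)}+\nadd}{\nadd}=\prod_{i=1}^{\nadd}(\Size{V(H)}+i)$, while you overbound the same product by $(\Size{V(H)}+\nadd)^{\nadd}$ via sequential insertion — both give $\BigO{\Size{\instance}^{\nadd}}$.
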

\begin{proof}%
	We branch over the possible assignments of new vertices to the intervals in $\prec_{H}$.
	As a solution could assign multiple vertices to the same interval, %
    we also branch over the order in which all vertices will appear in the spine order $\prec_{G}$.
	Observe that $\prec_{H}$ induces $\Size{V(H)} + 1$ different intervals, out of which we have to choose $\nadd{}$ with repetition.
	Together with the possible orders of the new vertices, we can bound the number of branches by $\nadd{}!\cdot\binom{\Size{V(H)} + \nadd{}}{\nadd{}}$. %
    We can simplify this expression to
    \begin{align*}
    \frac{\nadd{}!\cdot (\Size{V(H)} + \nadd{})!}{\nadd{}!\cdot ((\Size{V(H)} + \nadd{}) - \nadd{})!} = \frac{(\Size{V(H)} + \nadd{})!}{\Size{V(H)}!} = \Pi_{i = 1}^{\nadd{}}(\Size{V(H)} + i) = \BigO{\Size{\instance}^{\nadd{}}}.
	\end{align*}
	In each branch, the spine order $\prec_{G}$ is fixed and extends $\prec_{H}$.
	Hence, it only remains to check whether $\prec_{G}$ allows for a valid page assignment $\sigma_G$.
    As each branch corresponds to an instance %
    of \SLEShort{} where only edges are missing,
    we use \cref{thm:only-edges-fpt} to check in \BigO{{\madd{}}^{\madd{}}\cdot\Size{\instance}} time whether such an assignment $\sigma_G$ exists.
    The overall running time now follows readily.
\end{proof}
\noindent
The running time stated in \cref{thm:xp-kappa} not only proves that \SLEShort{} is in \XP{} when parameterized by~$\kappa$, but also \FPT{} when parameterized by $\madd{}$ for constant $\nadd{}$.
However, common complexity assumptions rule out an efficient algorithm parameterized by $\kappa$, as we show next.

\section{\SLEShort{} Parameterized by Missing Vertices and Edges is \W[1]-hard}
\label{sec:w-1}
\newcommand{\statemccXiViProperty}{\begin{restatable}{property}{mccXiViProperty}
		\label{property:w-1-x-i-v-i}
		In a solution \lSL{} to \SLEShort{} we have $u_{\alpha}^0 \prec x_{\alpha} \prec u_{\alpha + 1}^0$ for every~$\alpha \in [k]$.
	\end{restatable}}
\newcommand{\statemccLayerProperty}{\begin{restatable}{property}{mccLayerProperty}
		\label{property:w-1-layer}
		Let \lSL{} be a solution to an instance of \SLEShort{} that fulfills \cref{property:w-1-x-i-v-i} and for which we have $e = v_{\alpha}^iv_{\beta}^{j} \in E(G_C)$, $1 \leq \alpha < \beta \leq k$, and $x_{\alpha}, x_{\beta} \in \mathcal{X}$.
		If $\sigma(x_{\alpha}x_{\beta}) = p_e$ then~$x_{\alpha}$ is in $\intervalPlacing{v_{\alpha}^i}$ and $x_{\beta}$ is in $\intervalPlacing{v_{\beta}^j}$.
\end{restatable}}
\newcommand{\plotWOneLayerFigure}{\begin{figure}
	\centering
	\includegraphics[page=5]{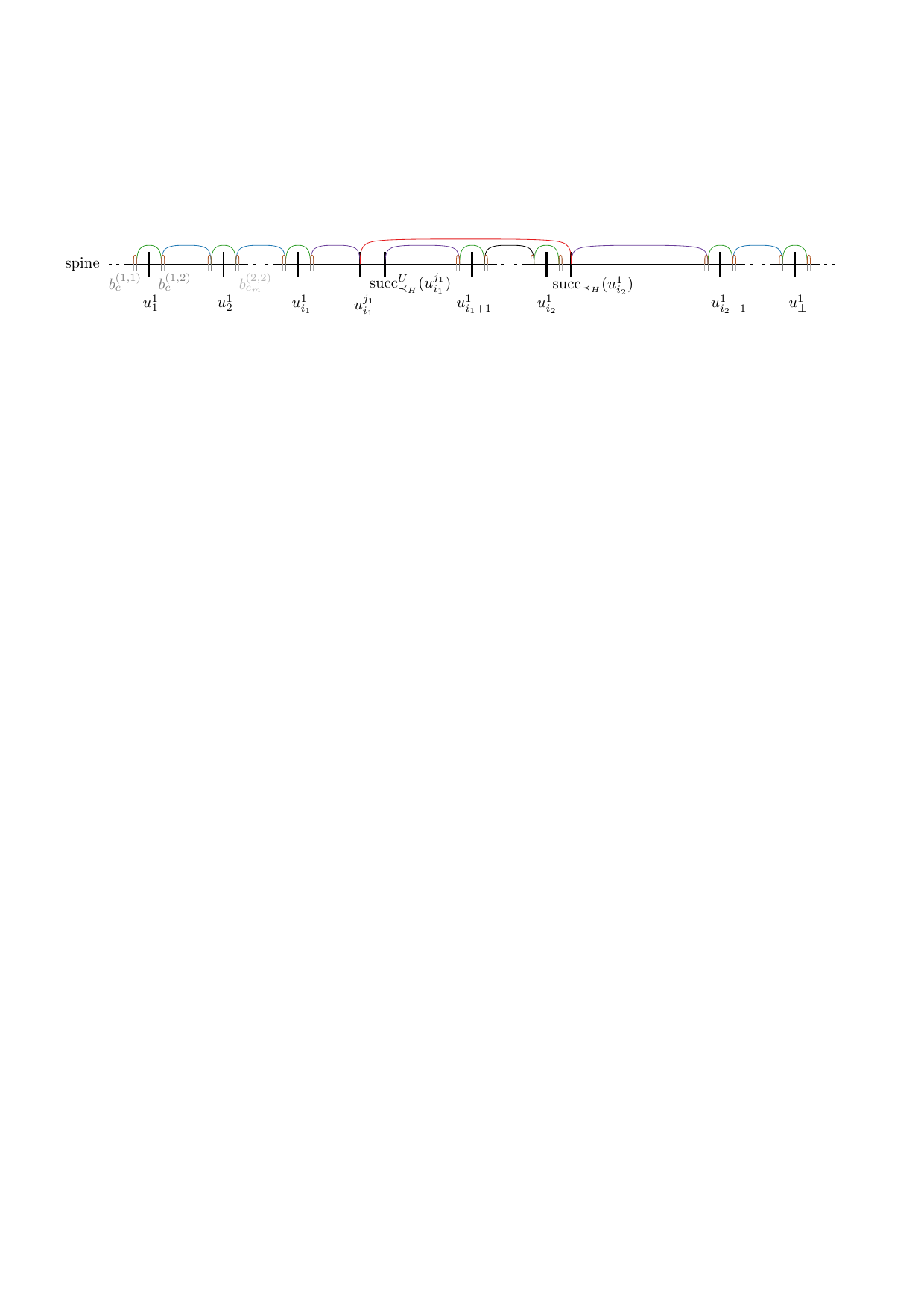}
	\caption{Edges of $H$ that model the adjacency given by the edge $e = v_{\alpha}^{i}v_{\beta}^j \in E(G_C)$. All of these edges are placed on the page $p_e$. Intuitively, we span the intervals induced~\textbf{\textsf{(a)}} by all vertices for each color $\gamma \in [\kappa] \setminus \{\alpha, \beta\}$ %
	and~\textbf{\textsf{(b)}} by vertices of the colors $\alpha$ and $\beta$ %
	that are not incident to $e$, here visualized for the color $\alpha$.~\textbf{\textsf{(c)}} Furthermore, we create a tunnel that connects $\intervalPlacing{v_{\alpha}^{i}}$ with $\intervalPlacing{v_{\beta}^{j}}$. The gray edges in~\textbf{\textsf{(c)}} are from~\textbf{\textsf{(a)}} and~\textbf{\textsf{(b)}}.}
	\label{fig:w-1-layer}
\end{figure}}

In this section, we show that \SLEShort{} parameterized by the number $\kappa$ of missing vertices and edges is \W[1]-hard.
To show \W[1]-hardness, we reduce from the \probname{Multi-colored Clique} (\MCC{}) problem.
Here, we are given a graph $G_C$, an integer $k > 0$, and a partition of~$V(G_C)$ into $k$ independent subsets $V_1, \ldots, V_{k}$, and ask whether there exists a \emph{colorful $k$-clique} $\mathcal{C} \subseteq V(G_C)$ in $G_C$, i.e., a clique on $k$ vertices that contains exactly one vertex of every set $V_i$, $i \in [k]$.
It is known that \MCC{} is \W[1]-hard when parameterized by $k$~\cite{Cygan2015}.
In the following, we will use Greek letters for the indices of the partition and denote with $n_{\alpha}$ the number of vertices in $V_{\alpha}$, i.e., $n_{\alpha} = \Size{V_{\alpha}}$.
Observe that $\sum_{\alpha \in [k]} n_{\alpha} = N$ with $N = \Size{V(G_C)}$.
As we can interpret the partitioning of the vertices into $V_1, \ldots, V_{k}$ as assigning to them one of $k$ colors, we will call a vertex~$v_{\alpha}^i$ %
with $\alpha \in [k]$ and $i \in [n_{\alpha}]$ a vertex with \emph{color} $\alpha$.
Our construction will heavily use the notion of a successor and predecessor of a vertex in a given spine order~$\prec$.
For a vertex~$u$, the function $\successorSpine{\prec}{u}$ returns the \emph{successor} of $u$ in the spine order $\prec$, %
i.e., the consecutive vertex in $\prec$ after $u$.
Note that $\successorSpine{\prec}{u}$ is undefined if there is no  vertex $v \in V(G)$ with $u \prec v$.
We write $\successor{u}$ if~$\prec$ is clear from context.
The \emph{predecessor} function $\predecessorSpine{\prec}{u}$ is defined analogously.
\begin{statelater}{mccIntro}

    Let $(G_C, k, (V_1, \ldots, V_{k}))$ be an instance of \MCC{}.
    We will construct an \SLEShort{} instance \instanceLong{} parameterized by $\kappa$ that will fulfill two crucial properties to ensure its correctness.
	While, at the time of stating the property, our construction might not yet fulfill it, we show in \cref{sec:w-1-correctness} that in the end it indeed has the desired properties.
\end{statelater}%
\ifthenelse{\boolean{long}}{%
Our instance contains for every \emph{original} vertex $v_{\alpha}^i \in V(G_C)$ a \emph{copy} $u_{\alpha}^i \in V(H)$.
Furthermore, we add for each color $\alpha \in [k]$ to $H$ two additional vertices and, overall, three further dummy vertices that we use to ensure correctness of the reduction.
We place the vertices on the spine based on their color $\alpha$ and index $i$; see \cref{fig:w-1-base-layout} and \cref{sec:w-1-base}, where we give the full details of the base layout.
Observe that every vertex $v_{\alpha}^i \in V(G_C)$ \emph{induces} the interval $[u_{\alpha}^i, u_{\alpha}^{i + 1}]$ in $\prec_H$, which we denote with \intervalPlacing{v_{\alpha}^i}.
The equivalence between the two problems will be obtained by adding a $k$-clique to $G$ that consists of the $k$ new vertices $\mathcal{X} = \{x_1, \ldots, x_{k}\}$.
Placing $x_{\alpha} \in \mathcal{X}$ in \intervalPlacing{v_{\beta}^i} indicates that $v_{\beta}^i$ will be part of the colorful $k$-clique in~$G_C$.

To establish the correctness of our reduction, we have to ensure two things.
First, we have to model the adjacencies in $G_C$.
In particular, two new vertices $x_{\alpha}$ and $x_{\beta}$, with $\alpha < \beta$, should only be placed in intervals induced by vertices adjacent in $G_C$.
We enforce this by adding for every edge $e = v_{\alpha}^iv_{\beta}^j \in E(G_C)$ a page $p_e$ that contains a set of edges creating a \emph{tunnel} on~$p_e$, see \cref{fig:w-1-layer}, and thereby allowing us to place the edge $x_{\alpha}x_{\beta} \in E(G)$ in the page $p_e$ if and only if $x_{\alpha}$ is placed in \intervalPlacing{v_{\alpha}^i} and~$x_{\beta}$ in \intervalPlacing{v_{\beta}^j}. 
Hence, the page assignment verifies that only pairwise adjacent vertices are in %
the solution, i.e., new vertices can only be placed in intervals induced by a clique in $G_C$.
We %
describe the tunnel further in \cref{sec:w-1-layer}.

Second, we have to ensure that we select exactly one vertex $v_{\alpha}^i \in V_{\alpha}$ for every color $\alpha \in [k]$.
In particular, the new vertex $x_{\alpha}$ should only be placed in intervals that are induced by vertices from $V_{\alpha}$.
To this end, we modify $H$ to include an appropriate fixation gadget by %
re-using some vertices of the base layout; see \cref{sec:w-1-fixation-gadget} for details.
As the whole base layout thereby forms the fixation gadget, our construction trivially satisfies \cref{prop:fixation-gadget-dummy-page}.

The above two ideas are formalized in \cref{property:w-1-x-i-v-i,property:w-1-layer}.
With these properties at hand, we show at the end of \cref{sec:w-1-correctness} that \SLEShort{} is \W[1]-hard when parameterized by $\kappa$.

As in the reduction from \cref{sec:paranp-hardness}, we will allow multi-edges in the graph $H$ to facilitate presentation and understanding.
In \cref{app:multi-edges-w-1} we will discuss a way to remove the multi-edges by distributing the individual edges over auxiliary vertices.

}{
In the following, we first give an overview of and intuition behind our reduction in \cref{sec:w-1-overview}, before we show its correctness in \cref{sec:w-1-correctness}.
Note that the full details of the construction can be found in \cref{app:w-1}.
Furthermore, as in the reduction from \cref{sec:paranp-hardness}, we will allow multi-edges in the graph $H$ to facilitate the presentation of the reduction. The procedure for removing multi-edges by distributing the individual edges over auxiliary vertices is detailed in \cref{app:multi-edges-w-1}.

\subsection{An Overview of the Construction}
\label{sec:w-1-overview}
\mccIntro

First, we define the base layout of our reduction, see \cref{sec:w-1-base} for full details.
In the base layout, we create the $N + 2k + 3$ vertices $\{u_{\alpha}^j \mid {\alpha} \in [k],\ j \in [n_{\alpha} + 1]_0\} \cup \{u_0^0, u_{\bot}^0, u_{\bot}^1\}$ in $H$.
Note that for each \emph{original} vertex $v_{\alpha}^i \in V(G_C)$, we have a \emph{copy} $u_{\alpha}^i$.
We will refer to the vertices $u_0^0$, $u_{\bot}^0$, and $u_{\bot}^1$ as \emph{dummy vertices} and set, for ease of notation, $\bot = k + 1$ and $n_{\bot} = 1$.
The vertices are placed on the spine based on their color $\alpha$ and index $i$; see \cref{fig:w-1-base-layout}.
Finally, observe that $\successor{u_{\alpha}^i} =  u_{\alpha}^{i + 1}$ for every $v_{\alpha}^i \in V(G_C)$.
Furthermore, every vertex $v_{\alpha}^i \in V(G_C)$ \emph{induces} the interval $[u_{\alpha}^i, u_{\alpha}^{i + 1}]$ in $\prec_H$, which we denote with~\intervalPlacing{v_{\alpha}^i}.
The equivalence between the two problems will be obtained by adding a $k$-clique to $G$ that consists of the $k$ new vertices $\mathcal{X} = \{x_1, \ldots, x_{k}\}$.
Placing $x_{\alpha} \in \mathcal{X}$ in \intervalPlacing{v_{\beta}^i} indicates that $v_{\beta}^i$ will be part of the colorful $k$-clique in~$G_C$, i.e., we will have the equivalence $u_{\alpha}^i \prec x_{\alpha} \prec 
\successorSpine{\prec_{H}}{u_{\alpha}^i}\ 
\overset{\text{p.~d.}}{\Longleftrightarrow}\ x_{\alpha}\ \text{is placed in}\ \intervalPlacing{v_{\alpha}^i}\ \Longleftrightarrow\ v_{\alpha}^i \in \mathcal{C}$ between a solution \lSL{G} to \SLEShort{} and a solution $\mathcal{C}$ to \MCC.
To guarantee that $\mathcal{C}$ is colorful, i.e., contains exactly one vertex from each color, we will ensure the following property with our construction.%
\statemccXiViProperty

\begin{figure}
	\centering
	\includegraphics[page=2]{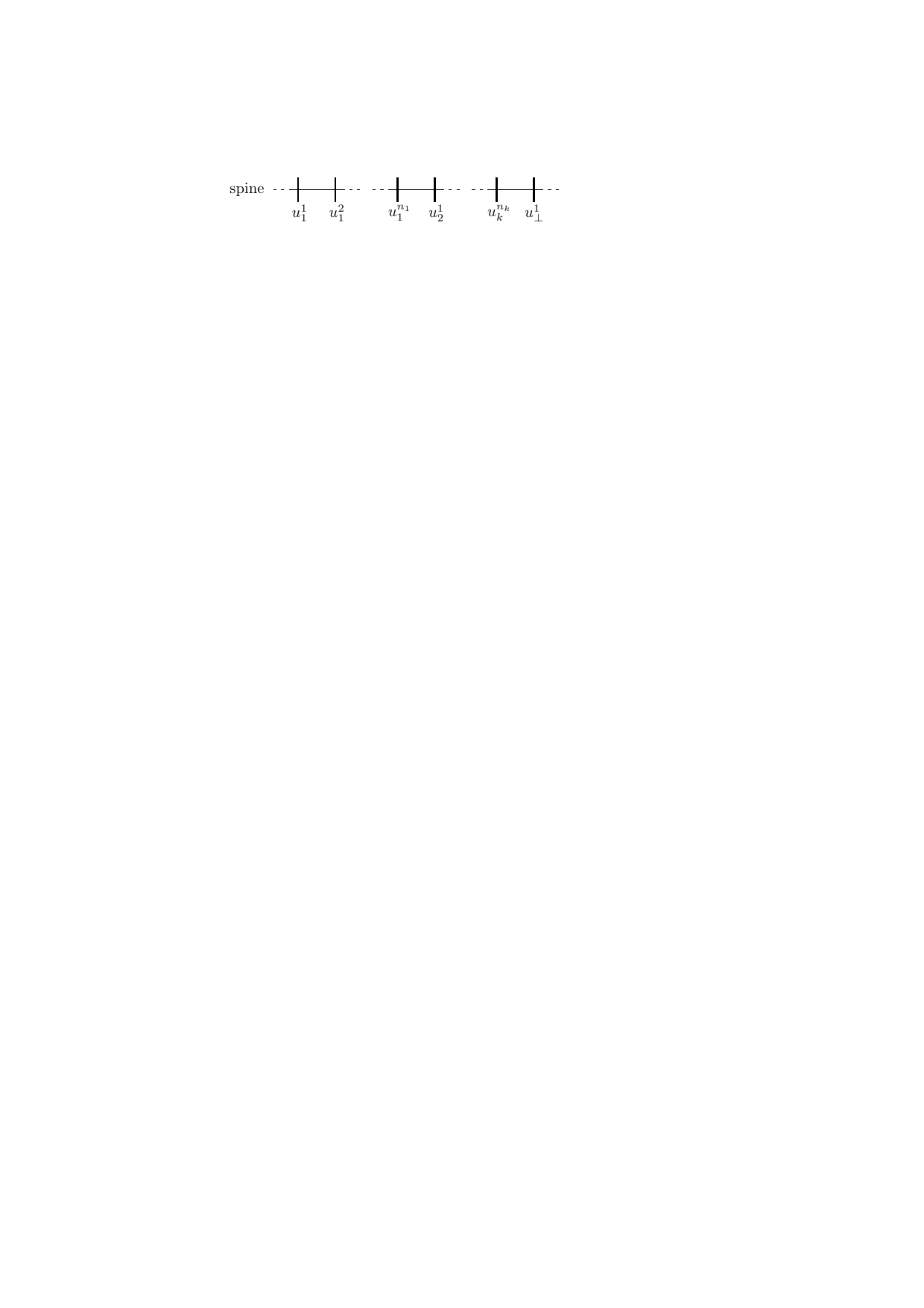}
	\caption{The base layout of our reduction. We use colors to additionally differentiate vertices that originate from different vertex sets $V_{\alpha}$, for $\alpha \in [k]$, and the dummy vertices $u_0^0$, $u_{\bot}^0$ and $u_{\bot}^1$.}
	\label{fig:w-1-base-layout}
\end{figure}

To establish the correctness of our reduction, we have to ensure two things.
First, we have to model the adjacencies in $G_C$.
In particular, two new vertices $x_{\alpha}$ and $x_{\beta}$, with $\alpha < \beta$, should only be placed in intervals induced by vertices adjacent in $G_C$.
We enforce this by adding for every edge $e = v_{\alpha}^iv_{\beta}^j \in E(G_C)$ a page $p_e$.
On this page $p_e$, we create the following edges in $H$; see also \cref{fig:w-1-layer} for a visualization.
Firstly, we create for every color $\gamma \in [k] \setminus \{\alpha, \beta\}$ an edge that spans exactly the intervals induced by vertices of color $\gamma$, thereby intuitively blocking visibility to any interval induced by a vertex of a color different to $\alpha$ and $\beta$; see \cref{fig:w-1-layer}a.
Secondly, we create up to two edges that span all intervals induced by vertices of color $\alpha$ except~\intervalPlacing{v_{\alpha}^i}; see \cref{fig:w-1-layer}b.
We do so similarly for color $\beta$.
These edges in concert with a \emph{tunnel} that we create on page~$p_e$, see \cref{fig:w-1-layer}c and \cref{sec:w-1-layer} for a formal description, allow us to place the edge $x_{\alpha}x_{\beta} \in E(G)$ in the page $p_e$ if and only if $x_{\alpha}$ is placed in \intervalPlacing{v_{\alpha}^i} and~$x_{\beta}$ in \intervalPlacing{v_{\beta}^j}.
More formally, our construction will ensure the following property.
\statemccLayerProperty
\plotWOneLayerFigure

Second, we have to ensure that we select exactly one vertex $v_{\alpha}^i \in V_{\alpha}$ for every color $\alpha \in [k]$.
In particular, the new vertex $x_{\alpha}$ should only be placed in intervals that are induced by vertices from $V_{\alpha}$.
To this end, we modify $H$ to include a fixation gadget on $F = k$ vertices by %
re-using some vertices of the base layout.
Most importantly, we identify $v_{\alpha} = u_{\alpha}^0$ for every $\alpha \in[k + 1]$ and $f_{\alpha} = x_{\alpha}$ for every $\alpha \in[k]$; see \cref{sec:w-1-fixation-gadget} for details.
As the whole base layout thereby forms the fixation gadget, our construction trivially satisfies \cref{prop:fixation-gadget-dummy-page}.

\ifthenelse{\boolean{long}}{
    Throughout the description of the reduction, we explicitly highlight crucial places where our construction should fulfill certain properties to ensure its correctness.
    While, at the time of stating the property, our construction might not yet fulfill it, we argue in \cref{sec:w-1-correctness}\ifthenelse{\boolean{long}}{}{ (and \cref{app:w-1-properties})} that it in the end indeed has the desired properties.
    With these properties at hand, we show at the end of \cref{sec:w-1-correctness} that \SLEShort{} is \W[1]-hard when parameterized by $\kappa$.
}{}

}

\begin{statelater}{sectionWOneBaseLayout}
	\subsection{Creating Intervals on the Spine: Our Base Layout}
	\label{sec:w-1-base}

	\ifthenelse{\boolean{long}}{\begin{figure}
			\centering
			\includegraphics[page=2]{w-1-base-layout}
			\caption{The base layout of our reduction. We use colors to additionally differentiate vertices that originate from different vertex sets $V_{\alpha}$, for $\alpha \in [k]$, and the dummy vertices $u_0^0$, $u_{\bot}^0$ and $u_{\bot}^1$.}
			\label{fig:w-1-base-layout}
	\end{figure}}{}
	
	Recall that $G_C$ has the vertex set $V(G_C) = \{v_1^1, \ldots, v_{k}^{n_{k}}\}$ partitioned into $V_1, \ldots, V_{k}$ with $V_{\alpha} = \{v_{\alpha}^1, \ldots, v_{\alpha}^{n_{\alpha}}\}$ for ${\alpha} \in [k]$.
	We create the $N + 2k + 3$ vertices $\{u_{\alpha}^j \mid {\alpha} \in [k],\ j \in [n_{\alpha} + 1]_0\} \cup \{u_0^0, u_{\bot}^0, u_{\bot}^1\}$ in $H$.
	Note that for each \emph{original} vertex $v_{\alpha}^i \in V(G_C)$, we have a \emph{copy} $u_{\alpha}^i$.
	We will refer to the vertices $u_0^0$, $u_{\bot}^0$, and $u_{\bot}^1$ as \emph{dummy vertices} and set, for ease of notation, $\bot = k + 1$ and $n_{\bot} = 1$.
	We order the vertices of $H$ on the spine by setting $u_{\alpha}^i \prec u_{\alpha}^{i + 1}$ and $u_{\alpha}^{n_{\alpha} + 1} \prec u_{\alpha+1}^0$ for every $\alpha \in [k]$ and $i \in [n_{\alpha}]_0$.
	Furthermore, we set $u_{0}^0 \prec u_{1}^{0}$ and $u_{\bot}^0 \prec u_{\bot}^{1}$.
	The spine order is then the transitive closure of the above partial orders.
	We visualize it in \cref{fig:w-1-base-layout} and observe that $\successor{u_{\alpha}^i} =  u_{\alpha}^{i + 1}$ for every $v_{\alpha}^i \in V(G_C)$.

	As already indicated, we define the set $\mathcal{X} = \{x_1, \ldots, x_{k}\}$ to contain $k$ (new) vertices which we add to $G$.
	Furthermore, we form a $k$-clique on $\mathcal{X}$, i.e., we add the edges $x_{\alpha}x_{\beta}$ for $1 \leq \alpha < \beta \leq k$ to $E(G)$.
	Recall that \intervalPlacing{v_{\alpha}^i} denotes the interval $[u_{\alpha}^i, \successorSpine{\prec_{H}}{u_{\alpha}^i}]$.
	We use the following equivalence between a solution \lSL{G} to \SLEShort{} and a solution $\mathcal{C}$ to \MCC.
	\begin{align}
		\label{eq:w-1-equivalence}
		u_{\alpha}^i \prec x_{\alpha} \prec \successorSpine{\prec_{H}}{u_{\alpha}^i}\ \overset{\text{p.~d.}}{\Longleftrightarrow}\ x_{\alpha}\ \text{is placed in}\ \intervalPlacing{v_{\alpha}^i}\ \Longleftrightarrow\ v_{\alpha}^i \in \mathcal{C}
	\end{align}
	To guarantee that $\mathcal{C}$ is colorful, i.e., contains exactly one vertex from each color, we will ensure the following property with our construction.
	\ifthenelse{\boolean{long}}{\statemccXiViProperty}{\mccXiViProperty*}
	\ifthenelse{\boolean{long}}{\noindent
	Of course, our construction does not yet fulfill \cref{property:w-1-x-i-v-i}.
	We will show in \cref{sec:w-1-correctness} that the finished construction indeed does fulfill \cref{property:w-1-x-i-v-i}.}{}
	
\end{statelater}

\begin{statelater}{sectionWOneEdges}
\subsection{Creating One Page Per Edge: Encoding the Adjacencies}
\label{sec:w-1-layer}
Having fixed the base order on the spine, we now ensure that we only select vertices that are adjacent in $G_C$, i.e., we encode the edges of $G_C$ in our stack layout \lSL{H}.
Let $e = v_{\alpha}^{i}v_{\beta}^{j}$ be an edge of $G_C$ and recall that, by our assumption, we have $\alpha \neq \beta$.
Furthermore, we assume for ease of presentation that $\alpha < \beta$ holds, which implies $u_{\alpha}^{i} \prec u_{\beta}^{j}$.
We create the following edges in $H$; see \cref{fig:w-1-layer}.
Note that each of the them is assigned to the page $p_e$ in~$\sigma$, where $p_e$ is a new empty page that we associate with the edge $e$.
Firstly, we create the edge $u_{\gamma}^{1}u_{\gamma}^{n_{\gamma} + 1}$ for each $\gamma \in [k] \setminus \{\alpha, \beta\}$; see \cref{fig:w-1-layer}a.
Secondly, we create the edges $u_{\alpha}^{1}u_{\alpha}^{i}$ and $u_{\alpha}^{i + 1}u_{\alpha}^{n_{\alpha} + 1}$ as shown in \cref{fig:w-1-layer}b.
Similarly, we add the edges $u_{\beta}^{1}u_{\beta}^{j}$ and $u_{\beta}^{j + 1}u_{\beta }^{n_{\beta} + 1}$.
If $i \in \{1, n_{\alpha}\}$ or $j \in \{1, n_{\beta}\}$, we omit creating the respective edge to not introduce self-loops in~$H$.
Thirdly, we create the edges $u_{\alpha}^{i}u_{\beta}^{j + 1}$ and $u_{\alpha}^{i + 1}u_{\beta}^{j}$, which we mark in \cref{fig:w-1-layer}c in black.
\ifthenelse{\boolean{long}}{\plotWOneLayerFigure}{}
One can readily verify that these edges in the page $p_e$ do not cross.
Furthermore, observe that the edges $u_{\alpha}^{i}u_{\beta}^{j + 1}$ and $u_{\alpha}^{i + 1}u_{\beta}^{j}$ create a \emph{tunnel} on the page $p_e$ connecting~$\intervalPlacing{v_{\alpha}^{i}}$ and~$\intervalPlacing{v_{\beta}^{j}}$.
Intuitively, the edges on the page $p_e$ ensure that if $x_i$ is placed in \intervalPlacing{v_{\alpha}^i} then $x_i$ sees only~\intervalPlacing{v_{\beta}^j} on page $p_e$.
More formally, we will have the following property.
\ifthenelse{\boolean{long}}{\statemccLayerProperty}{\mccLayerProperty*}
\end{statelater}

\begin{statelater}{sectionWOneFixationGadget}
\subsection{Restricting The Placement of New Vertices: Encoding the Colors}
\label{sec:w-1-fixation-gadget}
Until now, the new vertices in $\mathcal{X}$ can be placed in any interval on the spine.
However, in \MCC{} we must select exactly one vertex from each color.
Recall \cref{property:w-1-x-i-v-i}, which intuitively states that each new vertex should only be placed in intervals that correspond to its color.
We now use the fixation gadget to ensure that our construction fulfills \cref{property:w-1-x-i-v-i}.
Observe that we already introduced in \cref{sec:w-1-base} the required vertices of the fixation gadget when creating the base layout of our reduction.
More specifically, we (implicitly) create in $H$ and our construction the fixation gadget on $F = k$ vertices $\mathcal{F} = \mathcal{X}$ by identifying, for $\alpha\in [k + 1]$, $v_{i} = u_{\alpha}^0$, $b_{i} = \predecessor{u_{\alpha^0}}$, and $a_{i} = \successor{u_{\alpha}^0}$, where we use $i = \alpha$ to differentiate between the vertices from the fixation gadget and the graphs $G_C$ and $H$.
We also introduce the corresponding edges of the fixation gadget, which we visualize for the vertices in this construction in \cref{fig:w-1-fixation-gadget}.
\begin{figure}
	\centering
	\includegraphics[page=5]{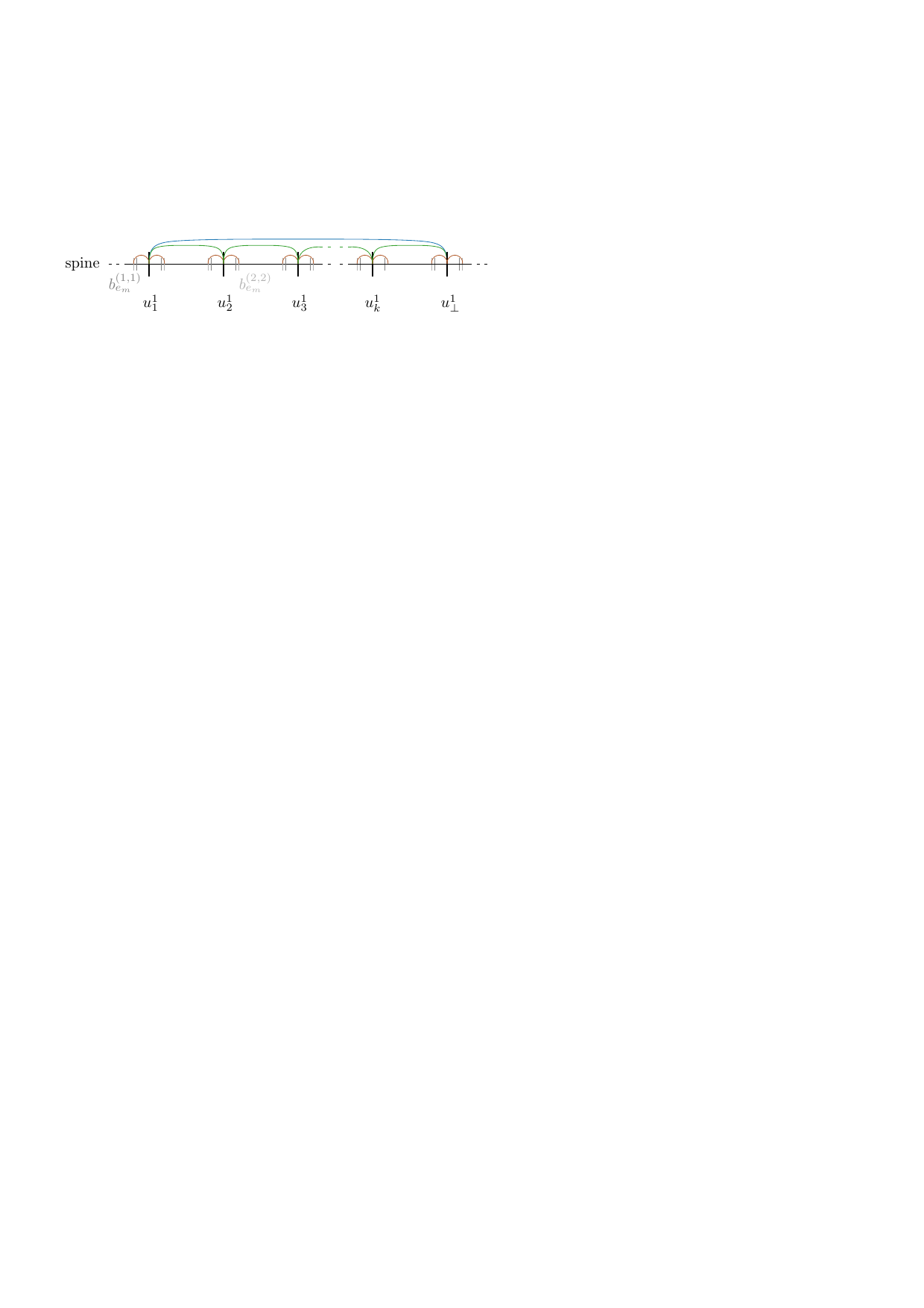}
	\caption{We use a fixation gadget to encode the color constraints. This figure is an adapted version of \cref{fig:fixation-gadget-example} that shows the fixation gadget on $k$ vertices with the vertices used in this reduction. We visualize the dummy page $p_d$ and a page $p_e$ created for an edge $e \in E(G_C)$.}
	\label{fig:w-1-fixation-gadget}
\end{figure}
Recall that when introducing the fixation gadget in \cref{sec:fixation-gadget}, we required that our instance must fulfill \cref{prop:fixation-gadget-dummy-page}, which states that $p_d$ must only be used by new edges that were introduced in the fixation gadget.
However, observe that any new edge $e \in E(G) \setminus (E(H) \cup \{x_{\alpha}u_{\alpha}^0, x_{\alpha}u_{\alpha + 1}^0 \mid \alpha \in [k]\})$, i.e., that was not introduced in the fixation gadget, is of the form $e = x_{\alpha}x_{\beta}$ for $1 \leq \alpha < \beta \leq k$, i.e., one of the edges in the $k$-clique.
\cref{lem:fixation-gadget-properties} tells us that for every new vertex $x_{\alpha} \in \mathcal{X}$ we have $u_{\alpha}^0 \prec x_{\alpha} \prec u_{\alpha + 1}^0$ in any extension \lSL{G} of \lSL{H}.
This implies that we have $u_{\alpha}^0 \prec x_{\alpha} \prec u_{\alpha + 1}^0 \preceq u_{\beta}^0 \prec x_{\beta} \prec u_{\beta + 1}^0$.
Together with having $u_{\alpha}^0u_{\alpha + 1}^0, u_{\beta}^0u_{\beta + 1}^0 \in E(H)$ and $\sigma_H(u_{\alpha}^0u_{\alpha + 1}^0) = \sigma(u_{\beta}^0u_{\beta + 1}^0) = p_d$, this rules out $\sigma_G(x_{\alpha}x_{\beta}) = p_d$.
Hence, we observe that the fixation gadget is formed on the base layout and our construction thus (trivially) fulfills \cref{prop:fixation-gadget-dummy-page}.
\end{statelater}

\subsection{Bringing It Together: Showing Correctness of the Reduction}
\label{sec:w-1-correctness}
\ifthenelse{\boolean{long}}{%
We start by shortly summarizing our construction.
Recall that we want to insert $k$ new vertices that form a clique.
On a high level, we first created for each vertex $v_{\alpha}^i \in V(G_C)$ a copy $u_{\alpha}^i \in V(H)$ and ordered the latter vertices depending on the color $\alpha$ and the index~$i$.
Then, we created for each edge $e \in E(G_C)$ a page $p_e$ on which we formed a tunnel that will enforce for every new edge assigned to $p_e$ that its endpoints lie in specific intervals via \cref{property:w-1-layer}.
Finally, we used the fixation gadget to ensure that a new vertex can only be placed in intervals for its color, i.e., that our construction will enforce \cref{property:w-1-x-i-v-i}.
Overall, we obtain for an instance $(G_C, k, (V_1, \ldots, V_{k}))$ of \MCC{} an instance $\instance{} = \instanceLong{}$ of \SLEShort{} that we parameterize by the number $\kappa$ of missing vertices and edges.
Before we show correctness of the reduction, we first argue in \cref{obs:w-1-fulfills-property-x-i-v-i,lem:w-1-fulfills-property-layer} that \instance{} fulfills \cref{property:w-1-x-i-v-i,property:w-1-layer}, respectively.
}{%
With the overview of the construction and the intuition behind the reduction settled, we now proceed to show its correctness in \cref{thm:w-1}.
In the proof, we make use of \cref{property:w-1-x-i-v-i,property:w-1-layer}.
Therefore, on our path to obtain \cref{thm:w-1},
we first have to show that our construction fulfills them.
}%
Recall that \cref{property:w-1-x-i-v-i} is defined as follows.
\mccXiViProperty*
\noindent
\ifthenelse{\boolean{long}}{In \cref{sec:w-1-fixation-gadget}, when}{When}
incorporating the fixation gadget on $F = k$ vertices in our construction, we identified%
\ifthenelse{\boolean{long}}{
$v_{i} = u_{\alpha}^0$ and $f_{i} = x_{\alpha}$ for every $\alpha \in [k]$ and $i = \alpha$.}{
$v_{\alpha} = u_{\alpha}^0$ and $f_{\alpha} = x_{\alpha}$ for every $\alpha \in [k]$.}
Similarly, we identified $v_{F + 1} = u_{k + 1}^0$.
The fixation gadget now guarantees thanks to \cref{lem:fixation-gadget-properties} that we have $v_\alpha \prec_{G} f_\alpha \prec_{G} v_{\alpha + 1}$, i.e., $u_{\alpha}^0 \prec x_{\alpha} \prec u_{\alpha + 1}^0$, in any solution \lSL{G}.
Hence, we can observe the following.
\begin{observation}
	\label{obs:w-1-fulfills-property-x-i-v-i}
	Our instance \instance{} of \SLEShort{} fulfills \cref{property:w-1-x-i-v-i}.
\end{observation}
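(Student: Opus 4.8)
The plan is to obtain the statement as an immediate consequence of \cref{lem:fixation-gadget-properties}. First I would recall the setup from \cref{sec:w-1-fixation-gadget}: into $H$ we (implicitly) embedded a fixation gadget on $F = k$ new vertices $\mathcal{F} = \mathcal{X}$, identifying the gadget vertices $v_\alpha$ with the copies $u_\alpha^0$ for every $\alpha \in [k+1]$, the vertices $b_\alpha$ with $\predecessor{u_\alpha^0}$, the vertices $a_\alpha$ with $\successor{u_\alpha^0}$, and the gadget's new vertices $f_\alpha$ with the clique vertices $x_\alpha$ for every $\alpha \in [k]$. I would then verify the (purely syntactic) fact that this embedding really yields a fixation gadget in the sense of \cref{sec:fixation-gadget}: the base-layout spine order restricted to these vertices has the required form $b_1 \prec v_1 \prec a_1 \prec b_2 \prec v_2 \prec a_2 \prec \ldots \prec b_{k+1} \prec v_{k+1} \prec a_{k+1}$, and all of the prescribed gadget edges are present with the prescribed pages --- namely the edges $b_\alpha a_\alpha$ on every page other than $p_d$, the edges $b_\alpha v_\alpha$, $v_\alpha a_\alpha$, $v_\alpha v_{\alpha+1}$, and $b_1 a_{k+1}$ on $p_d$, together with the new edges $f_\alpha v_\alpha$ and $f_\alpha v_{\alpha+1}$.

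Granting this, \cref{lem:fixation-gadget-properties} applies to every solution \lSL{G} of \instance{} and yields $v_\alpha \prec_{G} f_\alpha \prec_{G} v_{\alpha+1}$ for every $\alpha \in [k]$. Substituting the identifications above, this reads $u_\alpha^0 \prec x_\alpha \prec u_{\alpha+1}^0$, which is precisely the assertion of \cref{property:w-1-x-i-v-i}.

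Since \cref{lem:fixation-gadget-properties} does all of the heavy lifting, there is no genuine obstacle here; the only point requiring (brief) care is confirming that the reuse of base-layout vertices carried out in \cref{sec:w-1-fixation-gadget} does produce a structurally valid fixation gadget, so that the hypothesis of \cref{lem:fixation-gadget-properties} is indeed met. Note that we do not need \cref{prop:fixation-gadget-dummy-page} for this particular step; it is only relevant for the later correctness arguments of the reduction.
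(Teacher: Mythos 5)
Your proposal is correct and matches the paper's argument: the paper likewise recalls the identifications $v_\alpha = u_\alpha^0$, $f_\alpha = x_\alpha$ (and $v_{k+1} = u_{k+1}^0$) from the fixation-gadget construction and then directly invokes \cref{lem:fixation-gadget-properties} to conclude $u_\alpha^0 \prec x_\alpha \prec u_{\alpha+1}^0$. Your added remark that the structural validity of the embedded gadget should be double-checked, and that \cref{prop:fixation-gadget-dummy-page} is not needed at this step, is a sensible bit of bookkeeping but does not change the substance.
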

Recall that \cref{lem:fixation-gadget-properties} furthermore tells us that we have in any solution \lSL{G} the page assignment $\sigma(x_{\alpha}u_{\alpha}^0) = \sigma(x_{\alpha}u_{\alpha + 1}^0) = p_d$ for every $\alpha \in [k]$.
As we have by \cref{property:w-1-x-i-v-i} $u_{\alpha}^0 \prec x_{\alpha} \prec u_{\alpha + 1}^0$ and furthermore by the construction of the fixation gadget
	$\sigma_{H}(\predecessor{u_{\alpha}^0}u_{\alpha}^0) = \sigma_{H}(u_{\alpha}^0\successor{u_{\alpha}^0}) = p_d$
for every $\alpha \in [k]$, we cannot have in \lSL{G} $u_{\alpha}^0 \prec x_{\alpha} \prec \successorSpine{\prec_{H}}{u_{\alpha}^0}$ or $\predecessorSpine{\prec_{H}}{u_{\alpha + 1}^0} \prec x_{\alpha} \prec u_{\alpha + 1}^0$, as this would introduce a crossing on page $p_d$.
As we have in $\prec_H$ the equality $\successor{u_{\alpha}^0} = u_{\alpha}^1$ and $\predecessor{u_{\alpha + 1}^0} = u_{\alpha}^{n_{\alpha} + 1}$ for every $\alpha \in [k]$, we can strengthen \cref{property:w-1-x-i-v-i} and obtain the following.
\begin{corollary}
	\label{cor:w-1-x-i-v-i-strong}
 In a solution \lSL{} to \SLEShort{} we have $u_{\alpha}^1 \prec x_{\alpha} \prec u_{\alpha}^{n_{\alpha} + 1}$ for every~$\alpha \in [k]$.
\end{corollary}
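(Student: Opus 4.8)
The plan is to upgrade \cref{property:w-1-x-i-v-i} (i.e.\ \cref{obs:w-1-fulfills-property-x-i-v-i}) by exploiting the two short edges that the fixation gadget places on the dummy page $p_d$ immediately to the left and to the right of each vertex $u_{\alpha}^0$. Fix a color $\alpha \in [k]$ and an arbitrary solution \lSL{} to \SLEShort{}. First I would recall, via \cref{obs:w-1-fulfills-property-x-i-v-i} and \cref{lem:fixation-gadget-properties} (applied through the identifications $v_{\alpha} = u_{\alpha}^0$ and $f_{\alpha} = x_{\alpha}$), two facts: the placement $u_{\alpha}^0 \prec x_{\alpha} \prec u_{\alpha + 1}^0$, and the page assignment $\sigma(x_{\alpha}u_{\alpha}^0) = \sigma(x_{\alpha}u_{\alpha + 1}^0) = p_d$. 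I would then recall from the construction of the fixation gadget (the edges playing the role of ``$b_iv_i$'' and ``$v_ia_i$'') that $H$ contains the edges $\predecessorSpine{\prec_{H}}{u_{\alpha}^0}u_{\alpha}^0$ and $u_{\alpha}^0\successorSpine{\prec_{H}}{u_{\alpha}^0}$, both assigned to $p_d$ in $\sigma_H$.

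The core of the proof is then a two-case analysis ruling out the two extreme spine positions for $x_{\alpha}$ inside the interval $(u_{\alpha}^0, u_{\alpha + 1}^0)$. If $u_{\alpha}^0 \prec x_{\alpha} \prec \successorSpine{\prec_{H}}{u_{\alpha}^0}$, then since $\successorSpine{\prec_{H}}{u_{\alpha}^0} \prec u_{\alpha + 1}^0$ the four endpoints appear in the order $u_{\alpha}^0 \prec x_{\alpha} \prec \successorSpine{\prec_{H}}{u_{\alpha}^0} \prec u_{\alpha + 1}^0$, so the edges $u_{\alpha}^0\successorSpine{\prec_{H}}{u_{\alpha}^0}$ and $x_{\alpha}u_{\alpha + 1}^0$ alternate on the common page $p_d$, contradicting that \lSL{} is a stack layout. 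Symmetrically, if $\predecessorSpine{\prec_{H}}{u_{\alpha + 1}^0} \prec x_{\alpha} \prec u_{\alpha + 1}^0$, then $u_{\alpha}^0 \prec \predecessorSpine{\prec_{H}}{u_{\alpha + 1}^0} \prec x_{\alpha} \prec u_{\alpha + 1}^0$ makes the edges $x_{\alpha}u_{\alpha}^0$ and $\predecessorSpine{\prec_{H}}{u_{\alpha + 1}^0}u_{\alpha + 1}^0$ alternate on $p_d$, again a contradiction. Hence $\successorSpine{\prec_{H}}{u_{\alpha}^0} \preceq x_{\alpha} \preceq \predecessorSpine{\prec_{H}}{u_{\alpha + 1}^0}$, and both inequalities are strict since $x_{\alpha}$ is a new vertex whereas the two boundary vertices lie in $V(H)$.

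To conclude, I would translate these bounds back through the base layout, where $\successorSpine{\prec_{H}}{u_{\alpha}^0} = u_{\alpha}^1$ and $\predecessorSpine{\prec_{H}}{u_{\alpha + 1}^0} = u_{\alpha}^{n_{\alpha} + 1}$, so the chain becomes $u_{\alpha}^1 \prec x_{\alpha} \prec u_{\alpha}^{n_{\alpha} + 1}$; as $\alpha$ was arbitrary, this holds for all $\alpha \in [k]$. I do not expect a genuine obstacle here: the entire argument amounts to reading off two alternation patterns on page $p_d$, and the only steps needing a moment of care are checking that the fixation gadget really does contribute those two $p_d$-edges incident to $u_{\alpha}^0$ (which is exactly how it was set up), and that each color class is nonempty (a harmless preprocessing assumption for \MCC{}), so that $u_{\alpha}^1$ and $u_{\alpha}^{n_{\alpha} + 1}$ are well-defined and satisfy $u_{\alpha}^1 \prec u_{\alpha}^{n_{\alpha} + 1}$.
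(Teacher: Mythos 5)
Your proof is correct and follows essentially the same route as the paper: you combine \cref{property:w-1-x-i-v-i} with the forced page assignment $\sigma(x_{\alpha}u_{\alpha}^0) = \sigma(x_{\alpha}u_{\alpha + 1}^0) = p_d$ from \cref{lem:fixation-gadget-properties}, and then use the two $p_d$-edges $\predecessorSpine{\prec_H}{u_{\alpha}^0}u_{\alpha}^0$ and $u_{\alpha}^0\successorSpine{\prec_H}{u_{\alpha}^0}$ to exclude the first and last interval inside $(u_{\alpha}^0, u_{\alpha+1}^0)$, exactly as the paper does in the paragraph preceding the corollary. You merely spell out the alternation patterns more explicitly and note the (harmless) nonemptiness assumption on the color classes, neither of which changes the argument.
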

Finally, we now show that our construction fulfills \cref{property:w-1-layer}, which was defined as follows.

\mccLayerProperty*
\begin{restatable}%
{lemma}{lemmaWOneFulfillsPropertyLayers}
	\label{lem:w-1-fulfills-property-layer}
	Our instance \instance{} of \SLEShort{} fulfills \cref{property:w-1-layer}.
\end{restatable}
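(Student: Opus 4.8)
The plan is to read the conclusion directly off the edges placed on the page $p_e$, using that once the edge $x_{\alpha}x_{\beta}$ is committed to $p_e$ it must not alternate with any of them. First I would apply \cref{cor:w-1-x-i-v-i-strong} --- available since the solution \lSL{} is assumed to satisfy \cref{property:w-1-x-i-v-i} --- to get $u_{\alpha}^1 \prec x_{\alpha} \prec u_{\alpha}^{n_{\alpha}+1}$ and $u_{\beta}^1 \prec x_{\beta} \prec u_{\beta}^{n_{\beta}+1}$. By \cref{property:w-1-x-i-v-i}, no new vertex other than $x_{\alpha}$ lies strictly between $u_{\alpha}^1$ and $u_{\alpha}^{n_{\alpha}+1}$ (and likewise for $x_{\beta}$ in color $\beta$), so each of $x_{\alpha}, x_{\beta}$ falls strictly inside a unique interval, say $x_{\alpha} \in \intervalPlacing{v_{\alpha}^{i'}}$ and $x_{\beta} \in \intervalPlacing{v_{\beta}^{j'}}$ for some $i' \in [n_{\alpha}]$ and $j' \in [n_{\beta}]$. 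Since $\alpha < \beta$, the base layout places all copies of color $\alpha$ before all copies of color $\beta$, whence $x_{\alpha} \prec u_{\alpha}^{n_{\alpha}+1} \prec u_{\beta}^1 \prec x_{\beta}$.

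The heart of the argument is then a short case analysis bounding $i'$. If $i < n_{\alpha}$, the edge $u_{\alpha}^{i+1}u_{\alpha}^{n_{\alpha}+1}$ lies on $p_e$, and since $x_{\alpha} \prec u_{\alpha}^{n_{\alpha}+1} \prec x_{\beta}$, the edge $x_{\alpha}x_{\beta}$ alternates with it unless $x_{\alpha} \prec u_{\alpha}^{i+1}$, giving $i' \le i$; if $i = n_{\alpha}$, then $i' \le n_{\alpha} = i$ is immediate. Symmetrically, if $i > 1$, the edge $u_{\alpha}^1 u_{\alpha}^{i}$ lies on $p_e$, and since $u_{\alpha}^1 \prec x_{\alpha}$ and $u_{\alpha}^{i} \prec u_{\alpha}^{n_{\alpha}+1} \prec x_{\beta}$, the edge $x_{\alpha}x_{\beta}$ alternates with it unless $u_{\alpha}^{i} \prec x_{\alpha}$, giving $i' \ge i$; if $i = 1$, then $i' \ge 1 = i$ is immediate. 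Combining the two bounds yields $i' = i$, i.e. $x_{\alpha}$ is in $\intervalPlacing{v_{\alpha}^{i}}$. The mirror-image argument, using the edges $u_{\beta}^1 u_{\beta}^{j}$ and $u_{\beta}^{j+1}u_{\beta}^{n_{\beta}+1}$ on $p_e$ together with $x_{\alpha} \prec u_{\beta}^1$ and $x_{\beta} \prec u_{\beta}^{n_{\beta}+1}$ (and treating $j \in \{1, n_{\beta}\}$ as the trivial boundary cases), gives $j' = j$, so $x_{\beta}$ is in $\intervalPlacing{v_{\beta}^{j}}$, which establishes \cref{property:w-1-layer}.

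I do not expect a genuine obstacle here: the proof is careful bookkeeping of spine positions and of which edge forbids which alternation. The one subtlety is the boundary cases $i \in \{1, n_{\alpha}\}$ and $j \in \{1, n_{\beta}\}$, where one of the two blocking edges is omitted to avoid a self-loop --- but in exactly those cases the inequality that edge would have enforced already follows from \cref{cor:w-1-x-i-v-i-strong}, so the conclusion still holds. Finally, note that neither the tunnel edges $u_{\alpha}^{i}u_{\beta}^{j+1}$, $u_{\alpha}^{i+1}u_{\beta}^{j}$ nor the blocking edges $u_{\gamma}^1 u_{\gamma}^{n_{\gamma}+1}$ for colors $\gamma \notin \{\alpha, \beta\}$ are needed for this direction; they will only enter when showing conversely that a colorful $k$-clique in $G_C$ induces an extension.
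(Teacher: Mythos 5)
Your proof is correct and follows essentially the same approach as the paper's: apply \cref{cor:w-1-x-i-v-i-strong} to localize $x_{\alpha}$ and $x_{\beta}$ within their color blocks, then use the two blocking edges $u_{\alpha}^1u_{\alpha}^{i}$ and $u_{\alpha}^{i+1}u_{\alpha}^{n_{\alpha}+1}$ on page $p_e$ to pin $x_{\alpha}$ into $\intervalPlacing{v_{\alpha}^i}$, and symmetrically for $x_{\beta}$. You are slightly more careful than the paper in spelling out the boundary cases $i\in\{1,n_{\alpha}\}$ (where one blocking edge is omitted) and in observing that the tunnel edges and the other-color blocking edges play no role in this direction; the paper glosses over the former and briefly mentions the latter without needing it.
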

\begin{proof}%
	First, recall that we made \cref{obs:w-1-fulfills-property-x-i-v-i}, i.e., our construction fulfills \cref{property:w-1-x-i-v-i}.
	Let \lSL{G} be a solution to \SLEShort{} with $\sigma(x_{\alpha}x_{\beta}) = p_e$, for $e = v_{\alpha}^{i}v_{\beta}^{j} \in E(G_C)$, $1 \leq \alpha < \beta \leq k$.
	\cref{cor:w-1-x-i-v-i-strong} tells us that $u_{\alpha}^1 \prec x_{\alpha} \prec u_{\alpha}^{n_{\alpha} + 1}$ and $u_{\beta}^1 \prec x_{\beta} \prec u_{\beta}^{n_{\beta} + 1}$ holds.
	\cref{cor:w-1-x-i-v-i-strong} also holds for any new vertices $x_{\gamma}$ and $x_{\delta}$ with $\gamma, \delta \in [k] \setminus \{\alpha, \beta\}$ and $\gamma \neq \delta$.
	Furthermore, we have the edges $u_{\gamma}^{1}u_{\gamma}^{n_{\gamma} + 1}$ and $u_{\delta}^{1}u_{\delta}^{n_{\delta} + 1}$ on page $p_e$.
 Hence, all new edges on page $p_e$ must be among new vertices placed in intervals induced by vertices of color $\alpha$ or $\beta$. 
	
	Now assume that we have $u_{\alpha}^1 \preceq x_{\alpha} \preceq u_{\alpha}^{i}$.
	Using $\sigma_{H}(u_{\alpha}^1u_{\alpha}^{i}) = p_e$ together with $u_{\alpha}^1 \preceq x_{\alpha} \preceq u_{\alpha}^i \prec x_{\beta}$, we derive that $u_{\alpha}^1 \preceq x_{\alpha} \preceq u_{\alpha}^{i}$ results in a crossing on page~$p_e$.
	Hence, $u_{\alpha}^1 \preceq x_{\alpha} \preceq u_{\alpha}^{i}$ cannot hold.
	Now assume that we have $u_{\alpha}^{i + 1} \preceq x_{\alpha} \preceq u_{\alpha}^{n_{\alpha} + 1}$.
	From $\sigma_{H}(u_{\alpha}^{i + 1}u_{\alpha}^{n_{\alpha} + 1}) = p_e$ and $u_{\alpha}^{i + 1} \preceq x_{\alpha} \preceq u_{\alpha}^{n_{\alpha} + 1} \prec x_{\beta}$ we get that $u_{\alpha}^{i + 1} \preceq x_{\alpha} \preceq u_{\alpha}^{n_{\alpha} + 1}$ results in a crossing on page~$p_e$.
	Hence, $u_{\alpha}^{i + 1} \preceq x_{\alpha} \preceq u_{\alpha}^{n_{\alpha} + 1}$ cannot hold.
	Since we can exclude $u_{\alpha}^1 \preceq x_{\alpha} \preceq u_{\alpha}^{i}$ and $u_{\alpha}^{i + 1} \preceq x_{\alpha} \prec u_{\alpha}^{n_{\alpha} + 1}$ by the construction of the tunnel on page $p_e$, we can derive that $x_{\alpha}$ must be placed in $\intervalPlacing{v_{\alpha}^i}$.
	As similar arguments can be made for $x_{\beta}$, we can conclude that we get a crossing on page $p_e$ unless $x_{\alpha}$ is placed in $\intervalPlacing{v_{\alpha}^i}$ and $x_{\beta}$ in $\intervalPlacing{v_{\beta}^i}$.
\end{proof}%
\ifthenelse{\boolean{long}}{%
\noindent
We are now ready to show correctness of our reduction, i.e., show the following theorem.}{}
\begin{restatable}\restateref{thm:w-1}{theorem}{theoremWOne}
	\label{thm:w-1}
	\SLEShort{} parameterized by the number $\kappa$ of missing vertices and edges is \W\textup{[1]}-hard.
\end{restatable}
\begin{proofsketch}
Let $(G_C, k, (V_1, \ldots, V_{k}))$ be an instance of \MCC{} with $N = \Size{V(G_C)}$ and $M = \Size{E(G_C)}$ and let $\instance{} = \instanceLong{}$ be the instance of \SLEShort{} parameterized by the number $\kappa$ of missing vertices and edges created by our construction described above. Closer analysis %
reveals that the size of $\instance$ is bounded by $\BigO{N + Mk + k^2}$, and we have $\kappa = 3k + \binom{k}{2}$ as $\nadd{} = k$ and $\madd{} = \binom{k}{2} + 2k$; recall that the fixation gadget contributes $2k$ new edges.

Towards arguing correctness, assume that $(G_C, k, (V_1, \ldots, V_{k}))$ contains a colorful $k$-clique~$\mathcal{C}$. We construct a solution to $\instance$ by, for every new vertex $x_{\alpha} \in \mathcal{X}$, considering the vertex $v_{\alpha}^i \in \mathcal{C}$ and placing $x_{\alpha}$ immediately to the right of the copy $u_{\alpha}^i$ of $v_{\alpha}^i$ in $H$. The fact that $\mathcal{C}$ is a clique then guarantees that, for each edge $e\in E(G_C[\mathcal{C}])$, there exists the page $p_e$ in which the corresponding edge $e' \in E(G[\mathcal{X}])$ can be placed in. For the remaining edges from the fixation gadget, we can use the page assignment from \cref{lem:fixation-gadget-properties}.

For the converse (and more involved) direction, assume that \SLEShort\ admits a solution \lSL{G}. By \cref{property:w-1-x-i-v-i}, we have that each $x_\alpha\in \Vadd$ must be placed between $u_{\alpha}^0$ and~$u_{\alpha + 1}^0$. 
Moreover, our construction together with the page assignment forced by \cref{lem:fixation-gadget-properties} guarantees that $x_\alpha$ is placed between precisely one pair of consecutive vertices $u_{\alpha}^{i_\alpha}$ and~$u_{\alpha}^{i_\alpha+1}$, for some $i_\alpha \in [n_\alpha]$; recall \cref{cor:w-1-x-i-v-i-strong}.
Our solution $\mathcal{C}$ to the instance of \MCC{} will consist of the vertices $v_{\alpha}^{i_{\alpha}}$, i.e., exactly one vertex per color $\alpha$.
Moreover, each new edge $x_\alpha x_\beta \in E(G[\mathcal{X}])$ must be placed by $\sigma_G$ on some page, and as our construction satisfies \cref{prop:fixation-gadget-dummy-page,property:w-1-layer}, this page must be one that is associated to one edge $e = v_\alpha^{i_\alpha}v_\beta^{i_\beta}$ of $G_C$. 
\cref{property:w-1-layer} now also guarantees that this page assignment enforces that $x_{\alpha}$ and $x_{\beta}$ are placed precisely between the consecutive vertices $u_{\alpha}^{i_\alpha}$ and $u_{\alpha}^{i_\alpha+1}$ and $u_{\beta}^{i_\beta}$ and $u_{\beta}^{i_\beta+1}$ of $H$, respectively.
This means that the vertices in~$\mathcal{C}$ are pairwise adjacent, which implies that $\mathcal{C}$ is a colorful $k$-clique.
\end{proofsketch}
\begin{prooflater}{ptheoremWOne}
	Let $(G_C, k, (V_1, \ldots, V_{k}))$ be an instance of \MCC{} with $N = \Size{V(G_C)}$ and $M = \Size{E(G_C)}$.
	Furthermore, let $\instance{} = \instanceLong{}$ be the instance of \SLEShort{} parameterized by the number $\kappa$ of missing vertices and edges created by our construction described above.
	
	We first bound the size of \instance{} and note that we create $N + 2k + 3$ vertices in $H$ in \cref{sec:w-1-layer} and $k$ additional new vertices in $G$ in \cref{sec:w-1-base}.
	We enrich $H$ by $k + 4$ edges for each edge $e \in E(G_C)$.
	This gives us $M\left(k + 4\right)$ edges in $H$ so far.
	Furthermore, in \cref{sec:w-1-fixation-gadget}, we use a fixation gadget to keep the new vertices in place.
	As we introduce for the fixation gadget no new vertex but rather identify vertices of the fixation gadget with already introduced ones from $H$, it only remains to account for the edges of the gadget, which are $(M + 5)k + M + 3$; see \cref{lem:fixation-gadget-properties} and note that \lSL{H} has $\ell = M + 1$ pages as we create one page for each edge in $G_C$ and have the dummy page from the fixation gadget.
	Finally, we also add a clique among the $k$ new vertices, which are $\binom{k}{2}$ additional edges.
	Overall, the size of $H$ and $G$ is therefore in \BigO{N + Mk + k^2}.
	Hence, the size of the constructed instance is polynomial in the size of $G_C$ and the new parameters are bounded by a (computable) function of the old parameter, more specifically, we have $\nadd{} = k$ and $\madd{} = \binom{k}{2} + 2k$, thus $\kappa = 3k + \binom{k}{2}$. 
	The instance \instance{} can trivially be created in \FPT($\kappa$)-time.
	We conclude with showing the correctness of our construction.
	
	\proofsubparagraph*{($\boldsymbol{\Rightarrow}$)}
	Let $(G_C, k, (V_1, \ldots, V_{k}))$ be a positive instance of \MCC{} with solution $\mathcal{C} = \{v_{1}^{i}, \ldots, v_{k}^{j}\}$.
	We construct a witness extension \lSL{G} of \lSL{H} to show that \instance{} is a positive instance of \SLEShort.
	First, we copy \lSL{H} to ensure that 
	\lSL{G} extends \lSL{H}.
        Then, we extend \lSL{G} as follows.
	
	For every $v_{\alpha}^{i} \in~\mathcal{C}$, we set $u_{\alpha}^{i} \prec x_{\alpha} \prec u_{\alpha}^{i + 1}$.
	We also set $\sigma(x_{\alpha}u_{\alpha}^0) = \sigma(x_{\alpha}u_{\alpha + 1}^0) =~p_d$.
	For every $x_{\alpha}, x_{\beta} \in \mathcal{X}$ with $1 \leq \alpha < \beta \leq k$ let $x_{\alpha}$ be placed in~$\intervalPlacing{v_{\alpha}^i}$ and $x_{\beta}$ be placed in~$\intervalPlacing{v_{\beta}^j}$.
	We set $\sigma(x_{\alpha}x_{\beta}) = p_e$ for the edge $e = v_{\alpha}^{i}v_{\beta}^{j}$.
	As $\mathcal{C}$ is a clique, we must have $v_{\alpha}^{i}v_{\beta}^{j} \in E(G_C)$ and thus we have the page $p_e$ in \instance{}, i.e., this page assignment is well-defined.
	This completes the creation of \lSL{G}.
	As it is an extension of \lSL{H} by construction, we only show that no two edges on the same page cross.
	
	It is trivial that no two new edges, i.e., edges from \Eadd{} can cross as they are all put on different pages.
	For the edges $x_{\alpha}u_{\alpha}^0$ and $x_{\alpha}u_{{\alpha} + 1}^0$ it is sufficient to observe that we assemble the necessary page assignment from \cref{lem:fixation-gadget-properties}.
	What remains to do is to analyze edges of the form $x_{\alpha}x_{\beta}$.
	Recall that we set $\sigma(x_{\alpha}x_{\beta}) = p_e$ for $e = v_{\alpha}^{i}v_{\beta}^{j}$ and $x_{\alpha}$ is placed in \intervalPlacing{v_{\alpha}^i} and $x_{\beta}$ is placed in \intervalPlacing{v_{\beta}^j}.
	To see that there does not exist an old edge $e' = uv \in E(H)$ that crosses~$e$, i.e., with $\sigma(e') = p_e$ and $u \prec x_{\alpha} \prec v \prec x_{\beta}$, recall that the only old edge $e' = uv$ on the page~$p_e$ for which we have $u \prec x_{\alpha} \prec v$ is the edge $e' = u_{\alpha}^iu_{\beta}^{j + 1}$ but we have $x_{\beta} \prec u_{\beta}^{j + 1}$, i.e., the edge $e'$ ``spans over'' the edge $x_{\alpha}x_{\beta}$.
	As a similar argument can be made to show that there cannot exist an edge $e' = u v \in E(H)$ with $\sigma(e') = p_e$ and $x_{\alpha} \prec u \prec x_{\beta} \prec v$, we conclude that there are no crossings on the page $p_e$.
	
	As all edges are covered by these cases, we conclude that no two edges of the same page in \lSL{G} can cross, i.e., \lSL{G} is a witness that \instance{} is a positive instance of \SLEShort. 
	
	\proofsubparagraph*{($\boldsymbol{\Leftarrow}$)}
	Let \instance{} be a positive instance of \SLEShort{}.
	Hence, their exists a stack layout \lSL{G} that extends \lSL{H}.
	We now construct, based on \lSL{G}, a set $\mathcal{C}$ of $k$ vertices and show that it forms a colorful clique in $G_C$.
	Recall that \instance{} fulfills \cref{property:w-1-x-i-v-i,property:w-1-layer} and contains the fixation gadget.
	From \cref{property:w-1-x-i-v-i} and \cref{cor:w-1-x-i-v-i-strong}, we conclude that we have $u_{\alpha}^1 \prec x_{\alpha} \prec u_{\alpha}^{n_{\alpha} + 1}$ for each $\alpha \in [k]$.
	Let $x_{\alpha}$ be placed in some \intervalPlacing{v_{\alpha}^i}.
	We now employ our intended semantics and add $v_{\alpha}^i$ to $\mathcal{C}$.
	\cref{property:w-1-x-i-v-i} ensures that each vertex in $\mathcal{C}$ will have a different color, i.e., for each $\alpha \in [k]$ there exists exactly one vertex $v_{\alpha}^i \in \mathcal{C}$ such that $v_{\alpha}^i \in V_{\alpha}$.
	Hence, it remains to show that $\mathcal{C}$ forms a clique in $G_C$.
	
	Let $x_{\alpha}, x_{\beta} \in \Vadd{}$ be two arbitrary new vertices placed in \intervalPlacing{v_{\alpha}^i} and \intervalPlacing{v_{\beta}^j}, respectively.
	Assume without loss of generality $x_{\alpha} \prec x_{\beta}$.
	To show $v_{\alpha}^{i}v_{\beta}^{j} \in E(G_C)$ let us consider the edge $x_{\alpha}x_{\beta} \in E(G)$.
	We have $\sigma(x_{\alpha}x_{\beta}) = p$ for some page $p \in [\ell]$.
	Trivially, $p \neq p_d$ because $u_{\alpha}^0 \prec x_{\alpha} \prec u_{\alpha + 1}^0 \prec x_{\beta}$ and $\sigma_{H}(u_{\alpha}^0u_{\alpha + 1}^0) = p_d$.
	Furthermore, for any $e  = uv \in E(G_C)$ with $u \not\in V_{\alpha}$ and $v \neq V_{\alpha}$ we get $p \neq p_e$.
	This follows from $\sigma_{H}(u_{\alpha}^1u_{\alpha}^{n_{\alpha} + 1}) = p_e$ and $u_{\alpha}^1 \prec x_{\alpha} \prec  u_{\alpha}^{n_{\alpha} + 1} \prec x_{\beta}$.
	Similar arguments also hold if $u \not\in V_{\beta}$ and $v \not\in V_{\beta}$.
	Hence, $p = p_e$ for an edge $e \in E(G_C) \cap (V_{\alpha} \times V_{\beta})$ must hold.
	However, now all prerequisites for \cref{property:w-1-layer} are fulfilled.
	Thus, we can conclude that the only possible edge $e$ is $e = v_{\alpha}^{i}v_{\beta}^{j}$.
	For any other edge $e' \in E(G_C) \cap (V_{\alpha} \times V_{\beta})$ either $x_{\alpha}$ or $x_{\beta}$ are not positioned in the right interval with respect to $\prec_{G}$.
    Thus, \cref{property:w-1-layer} tells us (indirectly) that we cannot use this page for the edge $x_{\alpha}x_{\beta}$.
	As the edge $x_{\alpha}x_{\beta}$ has to be placed in some page, and we ruled out every possibility but the page that would be created for the edge $v_{\alpha}^{i}v_{\beta}^{j}$, we conclude that $v_{\alpha}^{i}v_{\beta}^{j} \in E(G_C)$ must hold.
	As $x_{\alpha}$ and $x_{\beta}$ were two arbitrary new vertices from $\Vadd{}$, we derive that $\mathcal{C}$ forms a (colorful) clique in $G_C$, i.e., $(G_C, k, (V_1, \ldots, V_{k}))$ is a positive instance of \MCC{}.
\end{prooflater}
\begin{figure}[t]
    \centering
    \includegraphics[page=1]{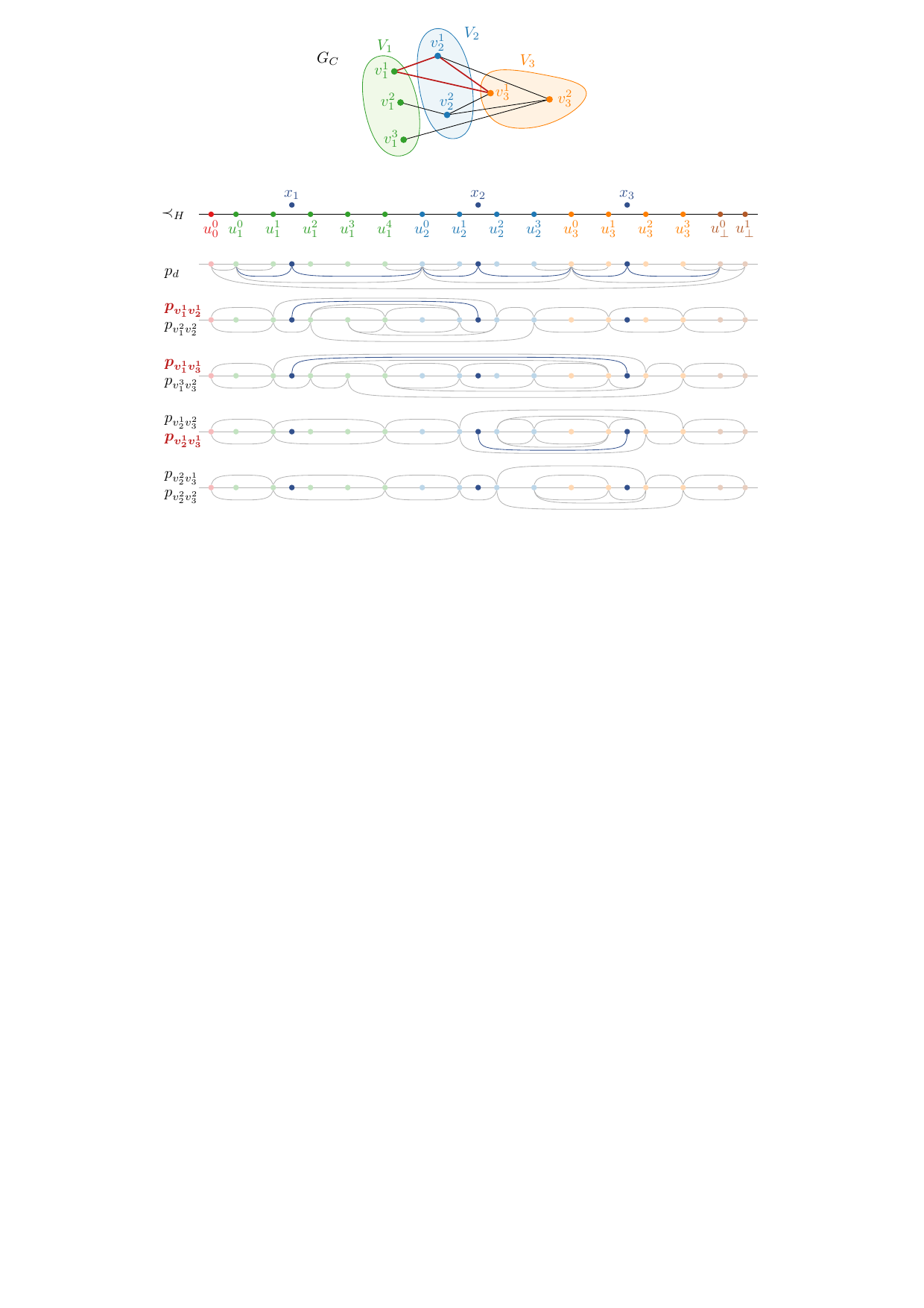}
    \caption{An instance $(G_C, 3, (V_1, V_2, V_3))$ of \MCC{} (top) and the \SLEShort{} instance resulting from our construction (bottom). %
    Colors indicate (correspondence to) the partition. The extension \lSL{} indicated in saturated colors induces the colorful 3-clique $\mathcal{C} = \{v_1^1, v_2^1, v_3^1\}$ in~$G_C$. The edges in $G_C[\mathcal{C}]$ and their corresponding pages are highlighted in red.}
    \label{fig:w-1-hardness-example}
\end{figure}
\noindent
\cref{fig:w-1-hardness-example} shows an example of the reduction for a small graph $G_C$ with three colors.
\ifthenelse{\boolean{long}}{%
    Taking a closer look at our construction for \cref{thm:w-1} (and \cref{fig:w-1-hardness-example}), we make the following observation.
    Consider a line $l$ perpendicular to the spine.
    On the page $p_e$ for an edge $e = v_{\alpha}^iv_{\beta}^j \in E(G_C)$, the line $l$ intersects at most one edge if placed in the interval for a color $\gamma$ with $\gamma < \alpha$ or $\gamma > \beta$.
    If $\alpha < \gamma < \beta$, the line $l$ can in addition intersect the edges $u_{\alpha}^iu_{\beta}^{j + 1}$ and $u_{\alpha}^{i + 1}u_{\beta}^{j}$ of \cref{fig:w-1-layer}c.
    Finally, if $\gamma \in \{\alpha, \beta\}$, the line $l$ intersects at most the full span of the tunnel, which has a width of three, see again \cref{fig:w-1-layer}.
    Hence, the width of the page $p_e$ is at most three.
    Similarly, for the page $p_d$, \cref{fig:fixation-gadget-example,fig:w-1-fixation-gadget} show that its width is also at most three. Hence, the page width $\omega$ of \lSL{H} is constant and we obtain \cref{cor:w-1}.
}{%
Since in a stack layout constructed by our reduction each line perpendicular to the spine intersects a constant number of edges, see also \cref{fig:w-1-hardness-example}, we also obtain:
}%

\begin{corollary}
    \label{cor:w-1}
    \SLEShort{} parameterized by the number $\kappa$ of missing vertices and edges and the page width $\omega$ of the given layout, i.e., by $\kappa + \omega$, is \W\textup{[1]}-hard.
\end{corollary}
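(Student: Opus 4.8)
The plan is to observe that Corollary~\ref{cor:w-1} follows essentially for free from Theorem~\ref{thm:w-1} together with a close reading of the construction used to prove it, rather than requiring any new reduction. Since Theorem~\ref{thm:w-1} already establishes \W[1]-hardness of \SLEShort{} parameterized by $\kappa$ via a reduction from \MCC{}, it suffices to show that the instance $\instance = \instanceLong$ produced by that reduction has the additional feature that the page width $\omega$ of the given layout \lSL{H} is bounded by an absolute constant (here, $\omega \le 3$). Then the map $(G_C, k, (V_1,\dots,V_k)) \mapsto \instance$ is simultaneously a parameterized reduction with respect to $\kappa$ and with respect to $\kappa + \omega$, because $\kappa + \omega \le \kappa + 3$ is still bounded by a computable function of the source parameter $k$; \W[1]-hardness of the combined parameterization is immediate.

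The core of the argument is therefore to bound $\omega = \pageWidth{\lSL{H}}$ in the constructed instance. I would do this page by page. Fix a line $l$ perpendicular to the spine and a page $p$ of \lSL{H}; I want to bound the number of edges of $H$ on page $p$ that $l$ crosses. For a page $p_e$ associated to an edge $e = v_\alpha^i v_\beta^j \in E(G_C)$ with $\alpha < \beta$, recall that the edges of $H$ placed on $p_e$ (see \cref{fig:w-1-layer}) are: the ``color-blocking'' edges $u_\gamma^1 u_\gamma^{n_\gamma+1}$ for $\gamma \in [k]\setminus\{\alpha,\beta\}$ (one per color, occupying disjoint spine ranges), the at most two edges $u_\alpha^1 u_\alpha^i$ and $u_\alpha^{i+1} u_\alpha^{n_\alpha+1}$ within color $\alpha$ (again occupying disjoint spine ranges, and nested inside nothing), the symmetric pair for color $\beta$, and the two ``tunnel'' edges $u_\alpha^i u_\beta^{j+1}$ and $u_\alpha^{i+1} u_\beta^{j}$. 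A case distinction on where $l$ falls relative to the colors then shows: if $l$ lies inside the spine range of a color $\gamma \notin \{\alpha, \beta\}$ it crosses at most the single color-$\gamma$ edge plus possibly the two tunnel edges (the latter only when $\alpha < \gamma < \beta$), for a total of at most $3$; if $l$ lies in the range of color $\alpha$ or $\beta$, it crosses at most the two within-color edges plus the two tunnel edges, but the nesting structure (the two within-color edges have disjoint spine ranges, and each tunnel edge has an endpoint adjacent to the interval $\intervalPlacing{v_\alpha^i}$ or $\intervalPlacing{v_\beta^j}$) caps the count at $3$ as well. For the dummy page $p_d$ introduced by the fixation gadget, the structure is exactly that of \cref{fig:fixation-gadget-example,fig:w-1-fixation-gadget}; there one inspects directly that any perpendicular line crosses at most the outermost edge $b_1 a_{F+1}$ plus the path-like chain $b_i v_i, v_i v_{i+1}, v_i a_i$ in one ``cell'', which again is at most $3$. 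Hence $\omega \le 3$.

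With $\omega$ bounded by a constant, the statement follows: our construction is a polynomial-time algorithm mapping an \MCC{} instance with parameter $k$ to an \SLEShort{} instance whose combined parameter $\kappa + \omega$ satisfies $\kappa + \omega = 3k + \binom{k}{2} + O(1)$, which is a computable function of $k$; correctness of the reduction is inherited verbatim from the proof of Theorem~\ref{thm:w-1} since we have changed nothing about the instance. Therefore \SLEShort{} parameterized by $\kappa + \omega$ is \W[1]-hard.

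I expect the only genuine obstacle to be the width bookkeeping on the pages $p_e$: one must be careful that the two within-color edges and the tunnel edges do not ``stack up'' additively, which relies on the fact that, for a given color, the blocking edges $u_\alpha^1 u_\alpha^i$ and $u_\alpha^{i+1} u_\alpha^{n_\alpha+1}$ cover complementary spine ranges and that the tunnel edges $u_\alpha^i u_\beta^{j+1}$ and $u_\alpha^{i+1} u_\beta^{j}$ are ``almost nested.'' Everything else — verifying the parameter bound and invoking the reduction from Theorem~\ref{thm:w-1} — is routine.
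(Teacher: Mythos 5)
Your proposal is correct and matches the paper's argument essentially verbatim: the paper likewise observes that in the reduction of Theorem~\ref{thm:w-1}, any line perpendicular to the spine crosses at most three edges on each page $p_e$ (one blocking edge plus the two tunnel edges, with a short case analysis on which color's range the line falls in) and at most three on $p_d$, so $\omega\le 3$ and the combined parameter $\kappa+\omega$ remains bounded by a function of $k$. No new reduction is needed, just as you say.
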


\section{Adding the Number of Pages as Parameter for \SLEShort{}}
\label{sec:fpt}
\newcommand{\stateFPTTheorem}{\begin{restatable}\restateref{thm:kappa-page-number-page-width-fpt}
{theorem}{theoremPageWidthFPT}
	\label{thm:kappa-page-number-page-width-fpt}
	Let $\instance = \instanceLong$ be an instance of \SLEShort{}.
	We can find %
    an $\ell$-page stack layout of $G$ that extends \lSL{} or report that none exists in
    \BigO{\ell^{\madd} \cdot {\nadd{}}! \cdot \madd{}^{\nadd{}} \cdot \omega^{\madd{}} \cdot (\nadd{}\cdot\madd{} \cdot \Size{\instance{}})} time.
\end{restatable}}

In this section, we complete the landscape of \cref{fig:complexity-landscape} by showing that \SLEShort becomes fixed-parameter tractable once we add $\ell$ to the parameterization considered by \cref{cor:w-1}\ifthenelse{\boolean{long}}{, i.e., we show the following theorem.
\stateFPTTheorem}{.}
We will make use of the following concepts.
\ifthenelse{\boolean{long}}{}{

}%
Consider a page $p$ of a stack layout \lSL{} of $G$ and recall that we can interpret it as a plane drawing of the graph $G'$ with $V(G') = V(G)$ and $E(G') = \{e \in E(G) \mid \sigma(e) = p\}$ on a half-plane, where the edges are drawn as (circular) arcs.
A \emph{face} on the page $p$ in \lSL{} coincides with the notion of a face in the drawing (on the half-plane $p$) of $G'$.
This also includes the definition of the \emph{outer-face}.
See \cref{fig:faces} for a visualization of \ifthenelse{\boolean{long}}{these and the following concepts}{this concept} and observe that we can identify every face, except the outer-face, by the unique edge $e = uv \in E(G)$ with $u \prec v$ and $\sigma(e) = p$ that bounds it from upwards.
\begin{figure}[t]
	\centering
	\includegraphics[page=1]{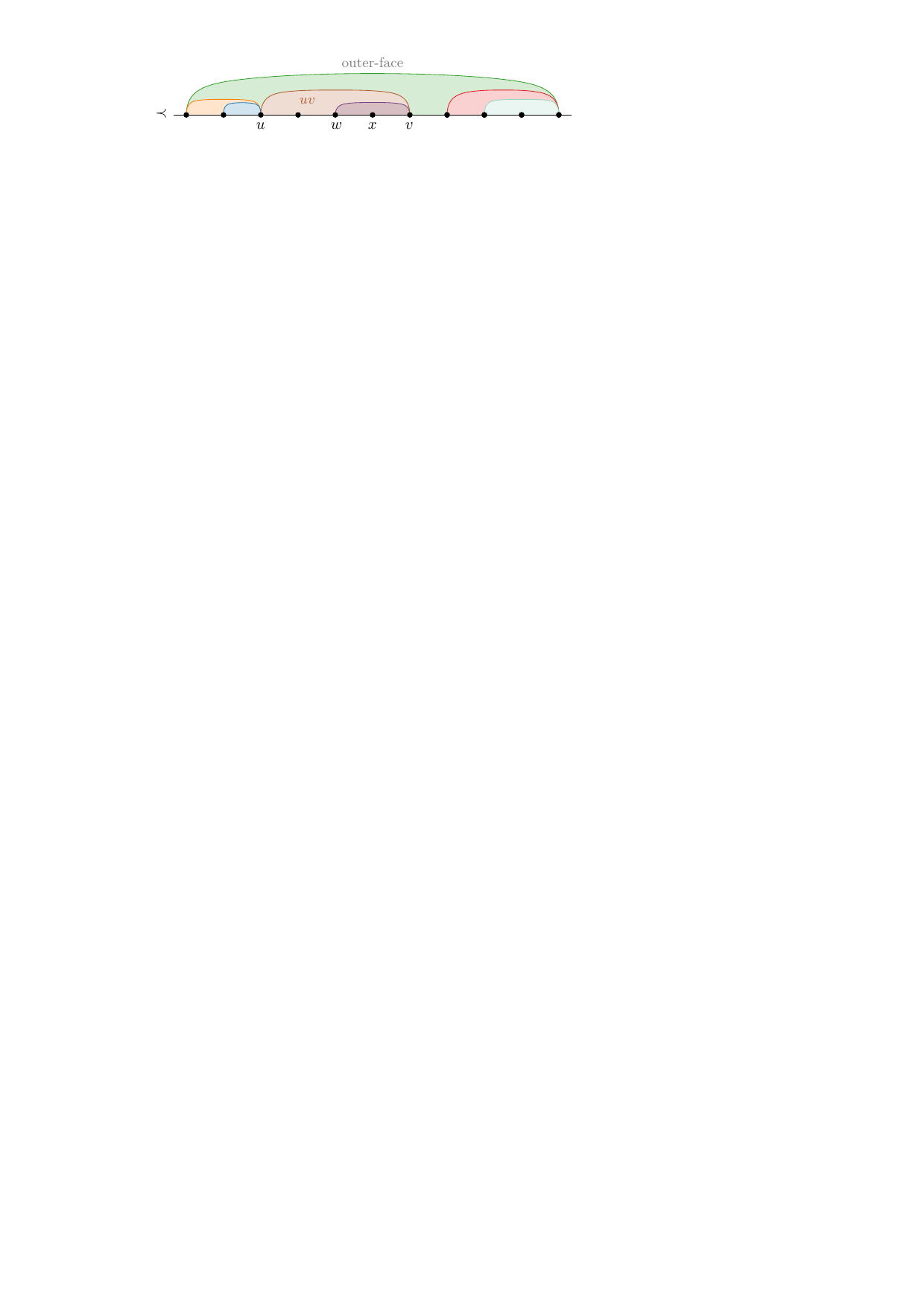}
	\caption{A stack layout \lSL{} and the faces on page $p$. Note that each edge has the same color as the face it identifies.
    \ifthenelse{\boolean{long}}{%
 While the vertex $w$ is incident to the face $uv$, the vertex $x$, and therefore also the interval $[w, x]$, is not. However, the face $uv$ still spans the interval $[w, x]$.}{}}
	\label{fig:faces}
\end{figure}
\ifthenelse{\boolean{long}}{%
In the following, we will address a face by the edge it is identified with.
Similarly, we say that an edge induces the face it identifies.
We say that a vertex $w$ is \emph{incident} to the face $uv$ (on some page $p$) if $u \preceq w \preceq v$ holds and there does not exist a different face $u'v'$ (on the page $p$) with $u \preceq u' \preceq w \preceq v' \preceq v$.
Similarly, an interval $[w, x]$ is incident to a face if $w$ and $x$ are incident to the face.
Finally, we say that a face~$uv$ \emph{spans} an interval $[w, x]$ if $u \preceq w \prec x \preceq v$ holds; note that $[w, x]$ might not be incident to the face~$uv$.}{}

Let $\Vinc{} \subseteq V(H)$ be the vertices of $H$ that are incident to new edges, i.e., $\Vinc{} \coloneqq\{u \in V(H) \mid \text{there is an edge}\ e = uv \in \Eadd{}\}$.
The size of \Vinc{} is upper-bounded by $2\madd{}$. 
We will define an equivalence class on the intervals of~$\prec_{H}$ based on the location of the vertices from~\Vinc{}.
Consider the two intervals $[u_1, v_1]$ and $[u_2, v_2]$ defined by the old vertices $u_1$, $v_1$,~$u_2$ and $v_2$, respectively.
These two intervals are in the same equivalence class if and only if 
$\{w \in \Vinc{} \mid w \preceq u_1\} = \{w \in \Vinc{} \mid w \preceq u_2\}$ and $\{w \in \Vinc{} \mid v_1 \preceq w\} = \{w \in \Vinc{} \mid v_2 \preceq w \}$
holds.
\begin{figure}
    \centering
    \includegraphics[page=1]{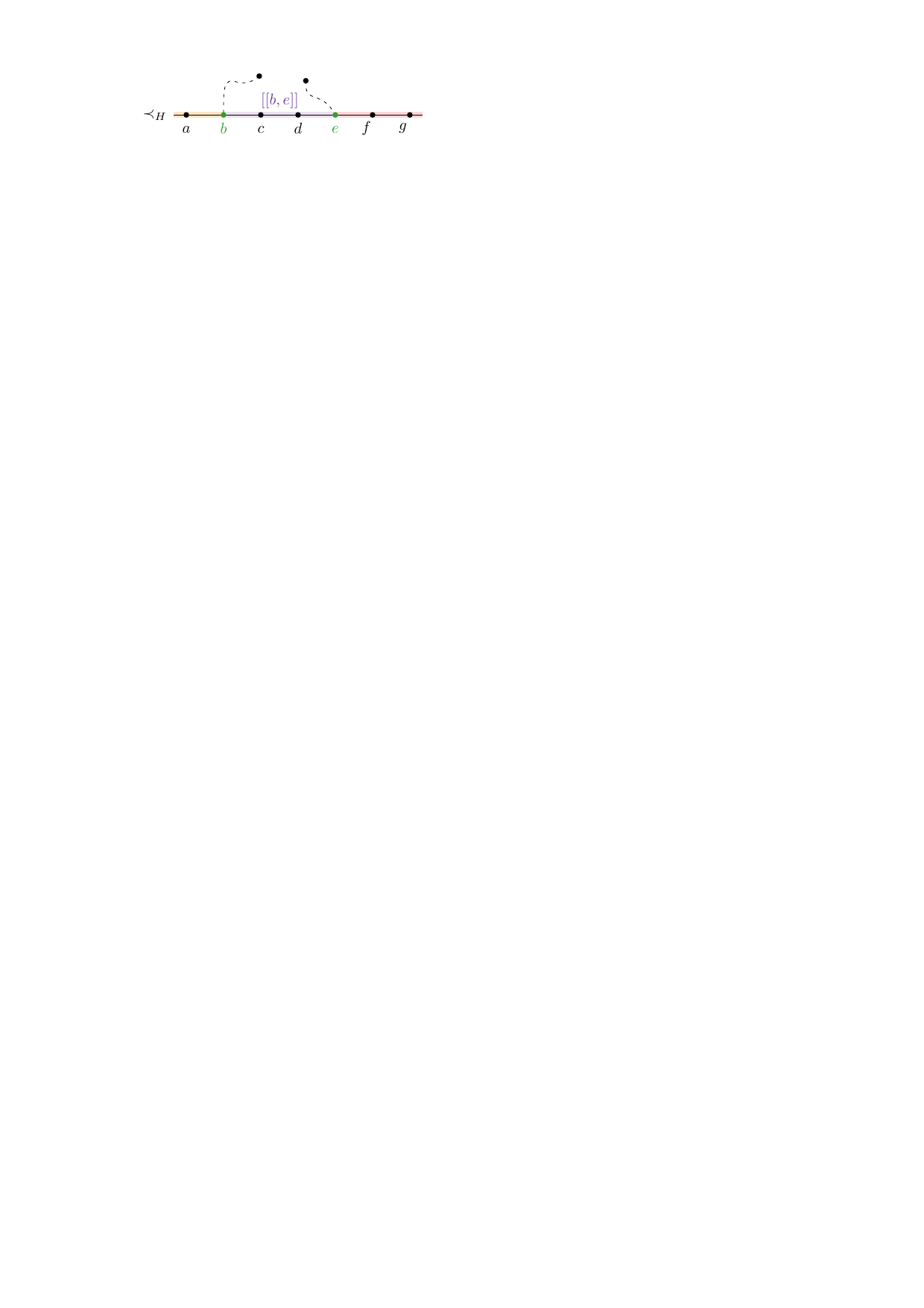}
    \caption{Visualization of \superIntervals{}. Each color represents one \superInterval{}. Vertices from~\Vinc{} are marked in green.}
    \label{fig:super-intervals}
\end{figure}
Each equivalence class, which we call \emph{\superInterval{}}, consists of a set of consecutive intervals delimited by (up to) two old vertices; see \cref{fig:super-intervals}.
Note that the first and last \superInterval{} are defined by a single vertex~$v \in \Vinc{}$.
The number of \superIntervals{} is bounded by $2\madd{} + 1$.
\begin{statelater}{notationSuperInterval}%
    We denote the \superInterval{} delimited by the two vertices~$u, v \in \Vinc{}$ with $u \prec v$ by $[[u, v]]$. 
    For the remainder of this paper, we assume that every \superInterval{} is bounded by two vertices.
    This is without loss of generality, since we can place dummy vertices at the beginning and end of the spine and assume that they bound the first and the last interval.
    Furthermore, we write $\assignedSuperInterval{w} = [[u, v]]$ to denote that the \emph{new} vertex $w \in \Vadd{}$ is placed, with respect to a given spine order~$\prec_{G}$, in the \superInterval{} $[[u, v]]$.
\end{statelater}
Furthermore, for a given $\prec_{G}$, we define \restrictedSpineOrder{} to be its restriction to new vertices, i.e., for every two vertices $u, v \in \Vadd{}$ we have that $u \prec v$ implies $u \restrictedSpineOrder{} v$.
\ifthenelse{\boolean{long}}{
    \subparagraph*{A Helpful Lemma Towards the Fixed-Parameter Algorithm.}
}{    
    \subparagraph*{The Algorithm.}
}
With the above concepts at hand, we can now describe our algorithm.
It consists of a branching step, where we consider all possible page assignments for the new edges, all relative orders among the new vertices, all their possible assignment to \superIntervals{}, and all distances new edges can have from the outer-face.
\ifthenelse{\boolean{long}}{%
    The core of our algorithm is a (dynamic programming) algorithm that we apply in each branch.
    In particular, we aim to show the following lemma.
}{%
    In the following, we show that we can verify in polynomial-time whether a branch can be extended to a solution $\lSL{}$ or not.
    The core of our algorithm is a dynamic program that we apply in each branch.}%
\begin{restatable}\restateref{lem:dp}{lemma}{lemmaDP}
	\label{lem:dp}
    Given an instance $\instance = \instanceLong$ of \SLEShort{},
\textbf{\textup{(i)}} a page assignment~$\sigma_G$ for all edges,
\textbf{\textup{(ii)}} an order \restrictedSpineOrder{} in which the new vertices will appear along the spine,
\textbf{\textup{(iii)}} for every new vertex $v \in \Vadd{}$ an assignment to a \superInterval{}, and
\textbf{\textup{(iv)}} for every new edge~$e$ an assigned distance $\omega_e$ to the outer face with respect to~$H$ and \lSL{}.
    In 
    \BigO{\nadd{}\cdot\madd{}\cdot \Size{\instance{}}}
     time we can compute an $\ell$-page stack layout of~$G$ that extends \lSL{} and respects the given assignments \textbf{\textup{(i)}}--\textbf{\textup{(iv)}} or report that no such layout exists.
\end{restatable}
\begin{proofsketch}
    We first observe that assignments (i)--(iv) fix everything except for the actual position of the new vertices within their \superInterval{}.
    Especially, assignment (i) allows us to check whether an edge $e\in\EaddH{}$ incident to two old vertices crosses any old edge or another new edge from $\EaddH{}$.
    Furthermore, assignments (i) and (ii) allow us to check whether two new edges $e = ua, e' = vb \in \Eadd{}$ with $u,a,v,b\in\Vadd$ will cross. 
    Adding assignment~(iii), we can also check this for new edges with some endpoints in $V(H)$, i.e., extend this to all $u,a,v,b\in V(G)$.
    If the assignments imply a crossing or contradict each other, we can directly return that no desired layout exists.
    These checks can be performed in %
    \BigO{{\nadd{}}^2 + \madd{}\cdot \Size{\instance{}}} time.
    It remains to check whether there exists a stack layout in which no edge of $\Eadd \setminus \EaddH$ intersects an old edge.
    This depends on the exact intervals new vertices are placed in.
    
    To do so, we need to assign new vertices to faces such that adjacent new vertices are in the exact same face and not two different faces with the same distance to the outer face.
    We will find this assignment using a dynamic program that models whether there is a solution that places the first $j$ new vertices (according to \restrictedSpineOrder{}) within the first $i$ intervals in $\prec_{H}$.
    When placing vertex $v_{j + 1}$ in the $i$th interval, we check that all preceding neighbors are visible in the faces assigned by~(iv).
    When advancing to the interval $i + 1$, we observe that when we leave a face, all edges with the same or a higher distance to the outer face need to have both endpoints placed or none.
    We thus ensure that for no edge only one endpoint has been placed; see also \cref{fig:kappa-page-number-page-width-admissible-predecessor-sketch}.
    These checks require \BigO{\madd{}} time for each of the \BigO{\nadd{}\cdot \Size{V(H)}} combinations of $j$ and $i$.
    Once we reach the interval $\Size{V(H)} + 1$ and have successfully placed all $\nadd$ new vertices, we know that there exists an $\ell$-page stack layout of $G$ that extends \lSL{} and respects the assignments.
    Finally, by applying standard backtracing techniques, we can extract the spine positions of the new vertices to also obtain the layout.
    \begin{figure}
    	\centering
    	\includegraphics[page=2]{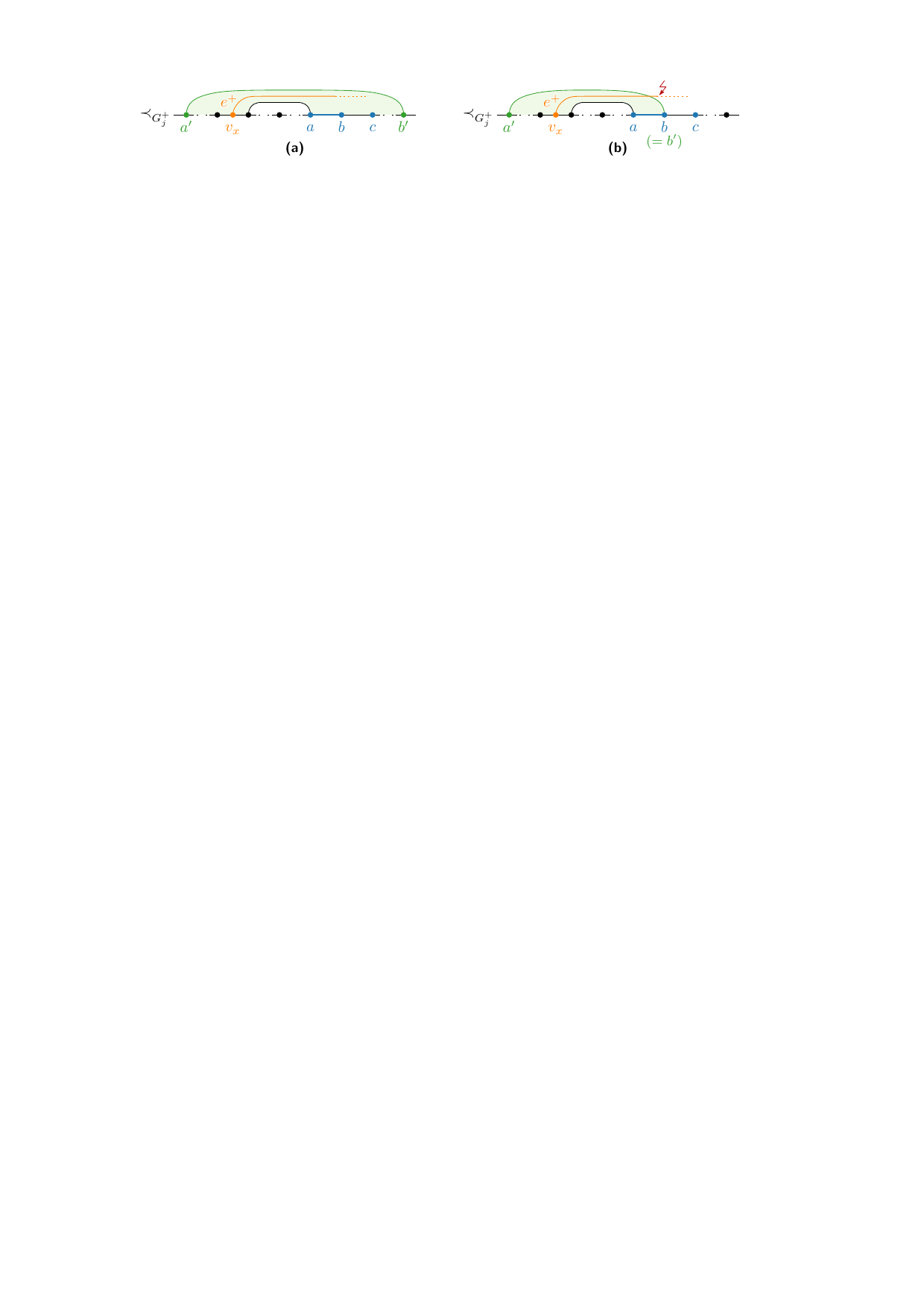}
    	\caption{Illustration of advancing from the $i$th interval, marked in blue, to the interval $i + 1$.
        In~\textbf{\textsf{(a)}} and~\textbf{\textsf{(b)}}, we leave the green face and there exists an edge $e \in \Eadd$, marked in orange, with the same distance to the outer face as the green face.
        However, in~\textbf{\textsf{(a)}}, both end points of the edge $e$ have already been placed, whereas in~\textbf{\textsf{(b)}} only one has, which implies a crossing.}
    	\label{fig:kappa-page-number-page-width-admissible-predecessor-sketch}
    \end{figure}
\end{proofsketch}%
\ifthenelse{\boolean{long}}{}{%
We observe that there are \BigO{\ell^{\madd} \cdot {\nadd{}}! \cdot \madd{}^{\nadd{}} \cdot \omega^{\madd{}}} different possibilities for assignments~(i)--(iv).}
\begin{statelater}{sectionDP}
\noindent
Before we show \cref{lem:dp}, we first make some observations on the assignments (i)--(iv) and their immediate consequences.
In the following, we only consider \emph{consistent} branches, i.e., we discard branches where from assignment (ii) we get $u \restrictedSpineOrder{} v$ but from assignment (iii) $\assignedSuperInterval{u} = [[a, b]]$ and $\assignedSuperInterval{v} = [[c, d]]$ with $c \prec d \preceq a \prec b$, as this implies $v \prec u$.

First, we observe that assignment (i) fully determines the page assignment $\sigma_G$.
Thus, it allows us to check whether an edge $e\in\EaddH{}$, i.e., a new edge incident to two old vertices, crosses any old edge or another new edge from $\EaddH{}$.
From now on, we consider all edges from $\EaddH{}$ as old since their placement is completely determined by assignment (i).

Second, assignments (i) and (ii) allow us to check whether two new edges $e = ua, e' = vb \in \Eadd{}$ with $u,a,v,b\in\Vadd$ will cross each other.

Third, adding assignment (iii), we can also check this for new edges with some endpoints in $V(H)$, i.e., extend this to all $u,a,v,b\in V(G)$.
Hence, assignments (ii) and (iii) together with $\prec_H$ fix the relative order among vertices incident to new edges.
\cref{fig:crossing-super-interval} shows an example where the assignments imply a crossings among two new edges. 
\begin{figure}
	\centering
	\includegraphics{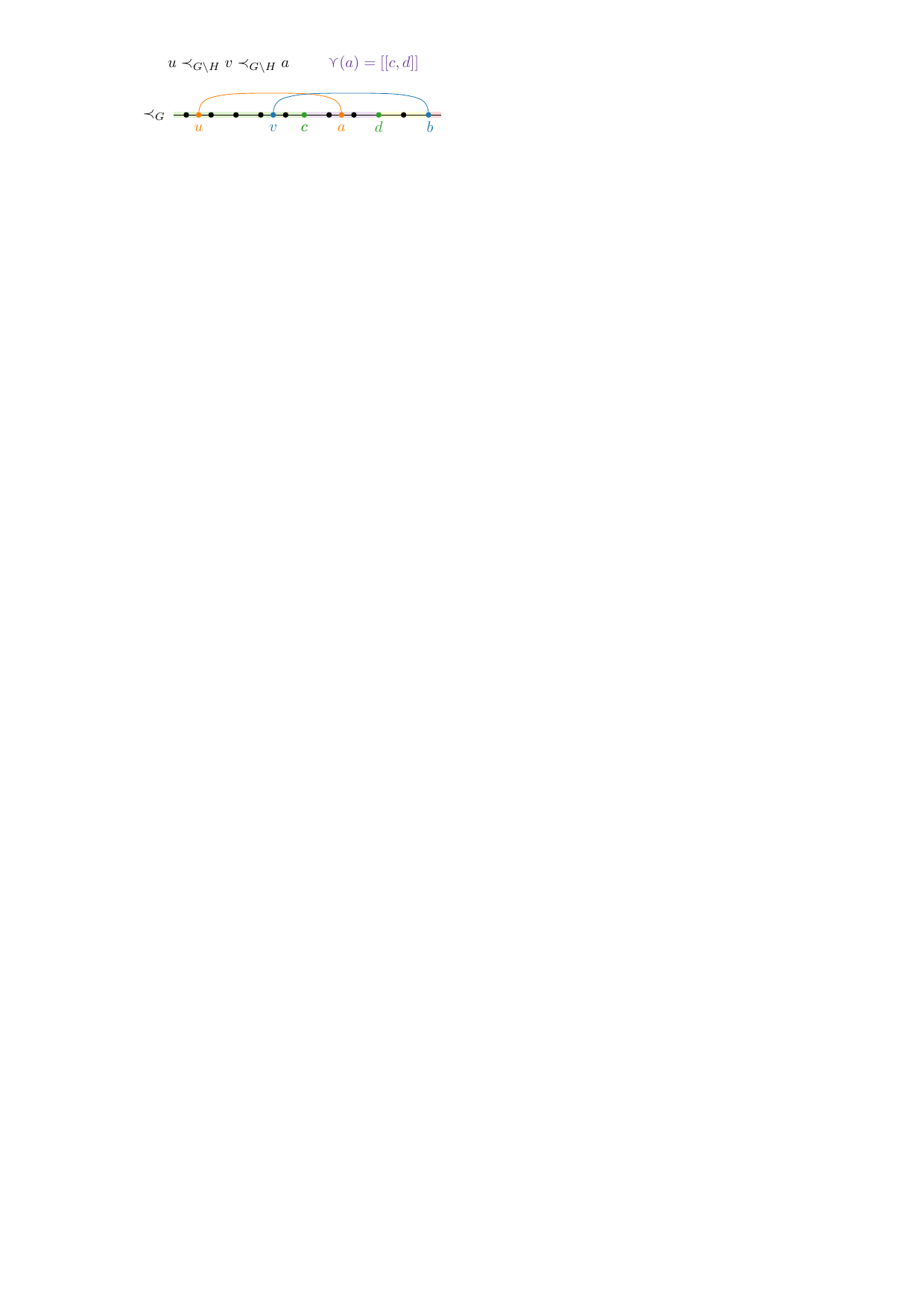}
	\caption{The crossing between the orange edge $e=ua$ and the blue edge $e' = vb$, where the vertices $a$, $u$, and $v$ are new, is already implied by the page assignment $\sigma$ (assignment~(i)), relative order among the new vertices $u$, $v$, and~$a$, i.e., by $u \prec v \prec a$ (assignment~(ii)), and their assignment to the \superIntervals{}, in particular by $\assignedSuperInterval{a}$ (assignment~(iii)). The \superIntervals{} are indicated by color and we visualize one possible spine position of the new vertices consistent with the branch.}
	\label{fig:crossing-super-interval}
\end{figure}

All of the above checks together can be done in
\BigO{{\nadd{}}^2 + \madd{}\cdot \Size{\instance{}}} time.
Clearly, if the assignments (i)--(iv) imply a crossing in an $\ell$-page stack layout of $G$ or contradict each other, we report that no layout exists that respects assignments (i)--(iv).
However, if not, we still need to find concrete spine positions for the new vertices. %
The main challenge here is to assign new vertices to faces such that adjacent new vertices are in the same face and not two different faces with the same distance to the outer-face.
For this, we use assignment~(iv) together with the following dynamic programming (DP) algorithm.

\subparagraph*{The Intuition Behind the DP-Algorithm.}
Recall that assignment~(iv) determines for every new edge $e \in \Eadd{}$ the distance $\omega_e$ to the outer-face with respect to $\lSL{H}$, i.e., how many edges of $H$ we need to remove until $e$ lies in the outer-face of the adapted stack layout \lSL{H}; see also \cref{fig:semantics-omega}.
\begin{figure}
    \centering
    \includegraphics[page=1]{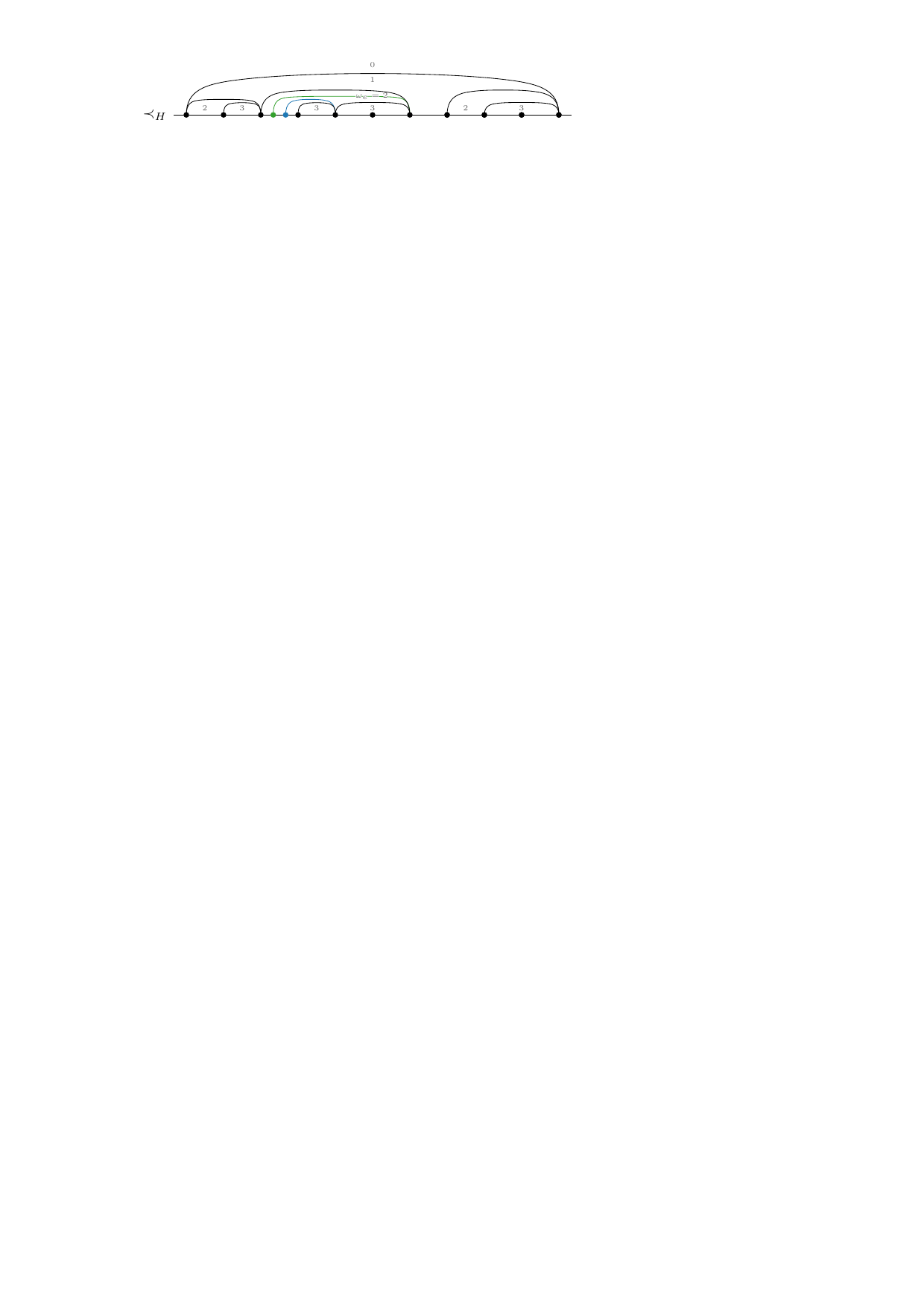}
    \caption{Visualization of the semantics of $\omega_e$. Every face of \lSL{H} has one particular distance~$\omega_e$ to the outer-face, shown in gray. Both highlighted new edges are in a face with $\omega_e= 2$.}
    \label{fig:semantics-omega}
\end{figure}
We first observe that~$\omega_e$ together with~$\sigma$ and the intervals in $\prec_H$ in which the endpoint vertices of $e$ are placed in uniquely determine a single face, namely the one the edge $e$ is embedded in.
Furthermore, for every possible distance $\omega_e \in [\omega]_0$ and every interval in $\prec_H$, there is at most one face on page $\sigma(e)$ with distance $\omega_e$ to the outer-face that spans the interval.
Hence, we address in the following for a given interval $[a, b]$ with $a \prec_H b$ the face on page $\sigma(e)$ at the distance~$\omega_e$ to the outer-face, if it exists, with $\omega_e^{[a, b]}$.
Note that $\omega_e^{[a, b]} = 0$ always refers to the outer-face, independent of the vertices $a$ and~$b$.
However, %
for two different intervals $[a, b]$ and $[a', b']$ the expressions $\omega_e^{[a, b]}$ and~$\omega_e^{[a', b']}$ can identify two different faces.

As a consequence of the above observation, we can decide for each interval of~$\prec_{H}$ whether we can position a new vertex $v$ there, i.e., whether $v$ sees its adjacent vertices using the faces (in the assigned pages) at the corresponding distance from the outer-face.
We now consider the ordering of the new vertices as in \restrictedSpineOrder{}, i.e., we have $v_1 \prec v_2 \prec \ldots \prec v_{\nadd{}}$.
Furthermore, we number the intervals of $\prec_H$ from left to right and observe that there are $\Size{V(H)} + 1$ intervals.

Consider a hypothetical solution \lSL{} which we cut vertically at the $i$th interval of~$\prec_{H}$.
This partitions the new vertices into those that have been placed left and right of the cut.
For new vertices placed in the $i$th interval of $\prec_{H}$, different cuts at the $i$th interval yield different partitions into left and right.
Furthermore, some of the new edges lie completely on one side of the cut, while others span the cut.
For $j \in [\nadd{}]$, let $G_j$ be the graph $G[V(H) \cup \{v_1, \ldots, v_j\}]$, i.e., the subgraph of $G$ induced by the vertices of $H$ and the first $j$ new vertices.
We will refer to edges that span the cut and thereby only have one endpoint in $G_j$ as \emph{half-edges} and denote with $v\cdot$ a half-edge with endpoint $v\in G_j$.
Let~$G_j^+$ be $G_j$ extended by the half-edges $E_j^+ = \{v_x\cdot \mid v_xv_y \in \Eadd{},\ 1 \leq x \leq j < y \leq \nadd{}\}$; see \cref{fig:kappa-page-number-page-width-gj}.
\begin{figure}
	\centering
	\includegraphics[page=1]{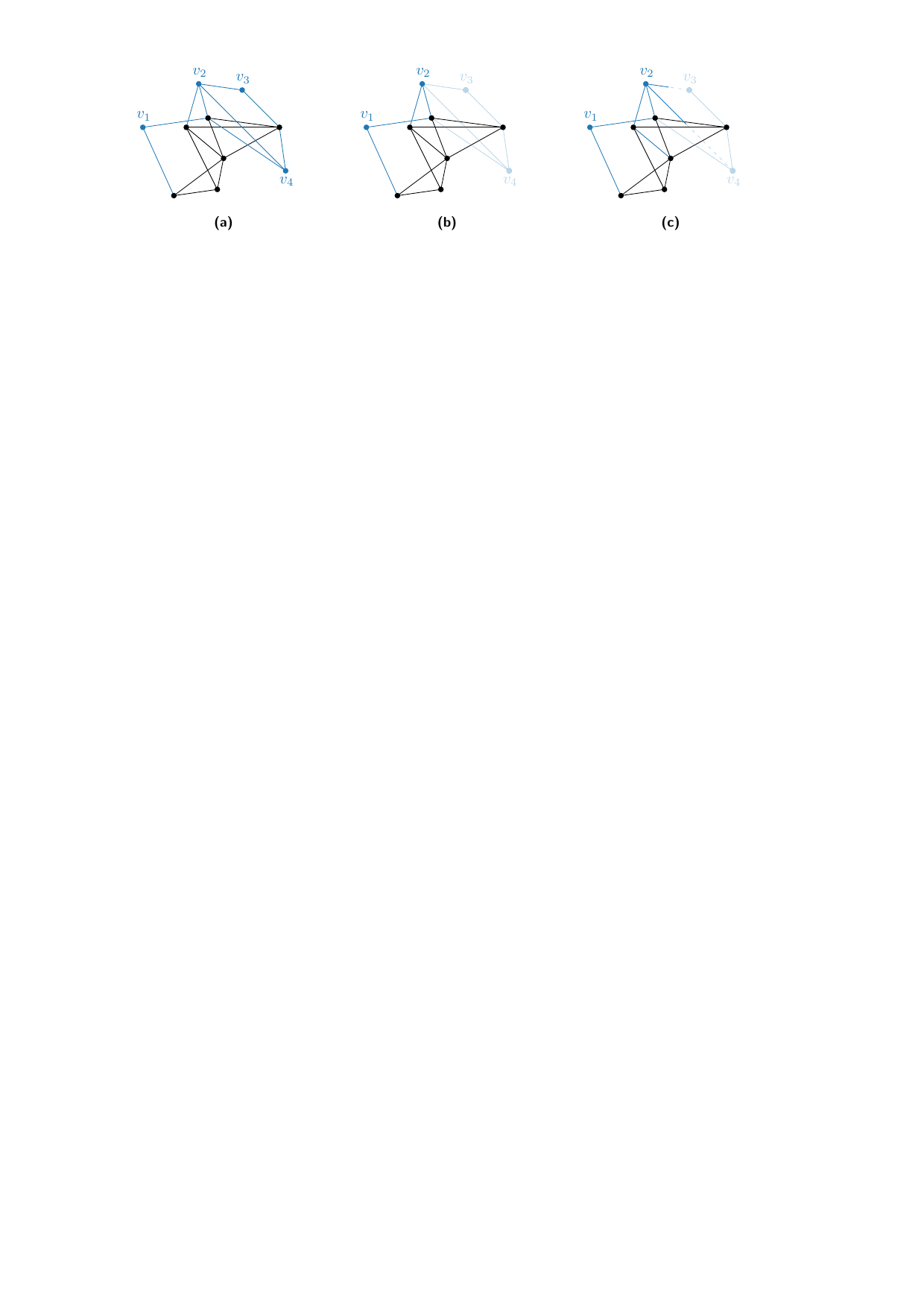}
	\caption{\textbf{\textsf{(a)}} The graph $G$ with the new vertices and edges highlighted in blue and~\textbf{\textsf{(b)}} the graphs $G_2$ and~\textbf{\textsf{(c)}} $G_2^+$. Half-edges are indicated with dashed lines.}
	\label{fig:kappa-page-number-page-width-gj}
\end{figure}
In the following, we denote with $e^+$ the half-edge that we create for the edge~$e$.
For the half-edge $e^+ = v_x\cdot$ and the edge $e = v_xv_y$, we call the vertex~$v_x$ \emph{inside} $G_j$, denoted as \vertexInside{V_j}{e}, and the vertex $v_y$ \emph{outside} $G_j$, denoted as \vertexOutside{V_j}{e}.

Consider again the hypothetical solution \lSL{} and its vertical cut at the $i$th interval.
Assume that $j$ vertices have been placed in \lSL{} left of the cut.
The stack layout \lSL{} witnesses the existence of a stack layout for $G_j$ that extends \lSL{H} and uses only the first~$i$ intervals.
Furthermore, for every half-edge $e^+$ in $G_j^+$ the face where $e$ is placed in gives us a set of candidate intervals for the vertex \vertexOutside{V_j}{e}, namely those incident to that face.
Hence, we can describe a partial solution of \lSL{} by a tuple $(i, j)$.

\subparagraph*{The DP-Algorithm.}
Let $D$ be an $(\Size{V(H)} + 1) \times (\nadd{} + 1) \times 2$ binary table.
In the following, we denote an entry $(i, j, r)$ for $i \in [\Size{V(H)} + 1]$, $j \in [\nadd{} + 1]$, and $r \in \{0, 1\}$ as a \emph{state} of the algorithm.
A state $(i, j, r)$ for $i \in [\Size{V(H)} + 1]$, $j \in [\nadd{} + 1]$, and $r \in \{0, 1\}$ is called \emph{feasible} if and only if there exists, in the current branch, an extension of \lSL{H} for the graph $G_j$ with the following properties.
\begin{enumerate}[(FP~1)]
    \item The $j$ new vertices are positioned in the first $i$ intervals and their placement respects assignment~(iii).\label[FP]{dp:feasible-j-in-i}
    \item If $r = 1$, the last vertex $v_j$ has been placed in the $i$th interval.
    Otherwise, i.e., if $r = 0$, the last vertex $v_j$ has been placed in some interval $i'$ with $i' < i$.\label[FP]{dp:feasible-r}
    \item For every half-edge $e^+$ of $G_j^+$, the face in which we placed (the first endpoint of) $e^+$ spans the $i$th interval.\label[FP]{dp:feasible-half-edges}
\end{enumerate}

Note that for \cref{dp:feasible-half-edges}, we neither require that there exists some interval $i'$ with $i \leq i'$ for the vertex \vertexOutside{V_j}{e} that is incident to $\omega_e^i$ nor that this $i'$th interval is part of \assignedSuperInterval{\vertexOutside{V_j}{e}}, i.e., the \superInterval{} for \vertexOutside{V_j}{e} according to assignment~(iii).
However, we will ensure all of the above points in the DP  when placing \vertexOutside{V_j}{e}.
Furthermore, while the last (binary) dimension is technically not necessary, it simplifies our following description.
Finally, note that we (correctly, as required in some solutions) allow to position multiple vertices in the same interval of $\prec_H$.

Our DP will mark a state $(i, j, r)$ for $i \in [\Size{V(H)} + 1]$, $j \in [\nadd{} + 1]$, and $r \in \{0, 1\}$ as feasible by setting $D[i, j, r] = 1$.
Before we can show in \cref{lem:kappa-page-number-page-width-dp-recurrence-correctness} that our DP indeed captures this equivalence, let us first relate different states and thus also partial solutions to each other.
We observe that if we have a solution for $i \in [\Size{V(H)}]$, $j \in [\nadd{}]_0$, and $r \in \{0,1\}$, and for every half-edge $e^+$ of $G_j^+$ the face $\omega_e^i$ also spans the $i + 1$th interval, then we also have a solution in the state $(i + 1, j, 0)$.
Otherwise, we cannot find an interval for the vertex \vertexOutside{V_j}{e} for some half-edge $e^+$ of $G_j^+$; see \cref{fig:kappa-page-number-page-width-admissible-predecessor}b.
\begin{figure}
	\centering
	\includegraphics[page=1]{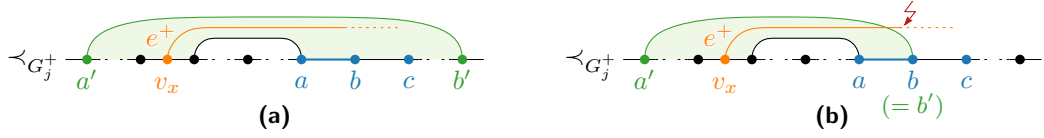}
	\caption{Illustration of some partial solution $(i, j)$ with the half-edge $e^+ = v_x\cdot$ in $G_j^+$ that we place in the green face. 
	The $i$th interval is $[a,b]$, highlighted in blue, and $\omega_e^i$ is identified by the green edge $a'b'$.
	In~\textbf{\textsf{(a)}}, the partial solution $(i, j, r)$ is an admissible predecessor of $(i + 1, j, 0)$, whereas in~\textbf{\textsf{(b)}} it is not as we can no longer find an interval for the vertex \vertexOutside{V_j}{e} that does not introduce a crossing between $e$ and $a'b'$. In particular, observe that the face $\omega_e^i$ does not span the interval $[b, c]$. \ifthenelse{\boolean{long}}{}{Extended version of \cref{fig:kappa-page-number-page-width-admissible-predecessor-sketch}.}}
	\label{fig:kappa-page-number-page-width-admissible-predecessor}
\end{figure}
More formally, let $[a, b]$ be the $i$th interval and $[b, c]$ the $i + 1$th interval.
Assume that we have the new edge $e = v_xv_y$ in~$G$ with $1 \leq x \leq j < y \leq k$, which is the half-edge~$v_x\cdot$ in $G_j^+$.
Let $a'b'$ be the edge (on page $\sigma_G(e)$) that bounds the face identified by $\omega_e^i$ upwards with $a' \prec b \preceq b'$.
We call $(i, j, r)$ an \emph{admissible predecessor} of $(i + 1, j, 0)$ if $b \neq b'$; see \cref{fig:kappa-page-number-page-width-admissible-predecessor}a.

If we decide to place the vertex $v_j$ in the $i$th interval, say $[a, b]$, we have to be more careful.
In particular, we have to ensure the following criteria before we can conclude that $(i, j, 1)$ has an extension.
\begin{enumerate}[(EC~1)]
	\item The interval $[a, b]$ is part of the \superInterval{} \assignedSuperInterval{v_j}, i.e., we place the new vertex in the determined \superInterval{}.\label[EC]{dp:extension-super-interval}
	\item For all new edges $e = v_{j}u$ with $u \in V(H)$ the face $\omega_e^i$ exists on the page $\sigma_G(e)$ and both the interval $i$ (and thus $v_j$) and $u$ are incident to the face $\omega_e^i$.\label[EC]{dp:extension-old-vertex}%
	\item For all new edges $e = v_jv_{q}$ incident to two new vertices the face $\omega_e^i$ exists on the page~$\sigma_G(e)$ and the interval~$i$ (and thus $v_j$) is incident to it.\label[EC]{dp:extension-new-vertex}
\end{enumerate}
The second criterion ensures that edges incident to $v_j$ and an old vertex can be inserted without introducing a crossing.
The third criterion ensures for a new edge~$e$ incident to two new vertices that the half-edge $e^+$ has been completed to a full edge without introducing a crossing or is placed in the face $\omega_e^i$ that spans the $i$th interval (if $j < q$).
Similar to before, we call $(i, j - 1, r)$ an \emph{admissible predecessor} of $(i, j, 1)$ if all of the above criteria are met.
Finally, note that if we decide to place the vertex $v_j$ in the $i$th interval, then the state $(i, j, 0)$ is not feasible due to \cref{dp:feasible-r}.
Our considerations up to now are summarized by the recurrence relation in \cref{def:kappa-page-number-page-width-dp-recurrence}.
In \cref{lem:kappa-page-number-page-width-dp-recurrence-correctness}, we show that the recurrence relation identifies exactly all feasible states and thus partial solutions.
\begin{definition}
	\label{def:kappa-page-number-page-width-dp-recurrence}
	We have the following relation for all $i \in [2,\Size{V(H)} + 1]$ and $j \in [\nadd{}]_0$.
	\begin{align}
		D[1, 0, 0] &= 1\label{eq:kappa-page-number-page-width-dp-recurrence-base-0}\\
        D[1, 0, 1] &= 0\label{eq:kappa-page-number-page-width-dp-recurrence-base-1}\\
		D[i, j, 0] &= \begin{cases}
			1 &\ \text{if}\ (i - 1, j, r)\ \text{is an admissible predecessor of}\ (i, j, 0)\\
			&\ \text{and}\ D[i - 1, j, r] = 1\ \text{for some}\ r \in \{0,1\}\\
			0 &\ \text{otherwise}
		\end{cases} \label{eq:kappa-page-number-page-width-dp-recurrence-recurrence-r-0}\\
		D[i, j, 1] &= \begin{cases}
			1 &\ \text{if}\ (i, j - 1, r)\ \text{is an admissible predecessor of}\ (i, j, 1)\\
			&\ \text{and}\ D[i, j - 1, r] = 1\ \text{for some}\ r \in \{0,1\}\\
			0 &\ \text{otherwise}
		\end{cases} \label{eq:kappa-page-number-page-width-dp-recurrence-recurrence-r-1}
	\end{align}
\end{definition}

\begin{restatable}%
{lemma}{lemmaDPRecurrenceCorrectness}
	\label{lem:kappa-page-number-page-width-dp-recurrence-correctness}
	For all $i \in [\Size{V(H)} + 1]$, $j \in [\nadd{}]_0$, and $r \in \{0,1\}$ we have $D[i, j, r] = 1$ if and only if the state $(i, j, r)$ is feasible.
	Furthermore, evaluating the recurrence relation of \cref{def:kappa-page-number-page-width-dp-recurrence} takes \BigO{\madd{}} time.
\end{restatable}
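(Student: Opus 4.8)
The plan is to prove the correctness claim by induction on the states, ordered lexicographically by $(i,j)$ with ties broken so that $(i,j,0)$ comes before $(i,j,1)$ and both come after all states with smaller $i$ or with the same $i$ and smaller $j$. For the forward direction ($D[i,j,r]=1 \Rightarrow$ feasibility), I would unfold the recurrence: if \cref{eq:kappa-page-number-page-width-dp-recurrence-recurrence-r-0} fired, there is an admissible predecessor $(i-1,j,r')$ which by induction is feasible, witnessed by some extension of \lSL{H}; since $(i-1,j,r')$ is admissible with respect to $(i,j,0)$, the ``$b\neq b'$'' condition ensures that for every half-edge $e^+$ of $G_j^+$ the face $\omega_e^i$ still spans the $i+1$-st interval, so the very same extension witnesses \cref{dp:feasible-j-in-i,dp:feasible-r,dp:feasible-half-edges} for state $(i,j,0)$ (we have merely ``widened'' our view of where the $j$th vertex may lie). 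If \cref{eq:kappa-page-number-page-width-dp-recurrence-recurrence-r-1} fired, the predecessor $(i,j-1,r')$ is feasible via some extension of $G_{j-1}$; we place $v_j$ in the $i$th interval, and \cref{dp:extension-super-interval,dp:extension-old-vertex,dp:extension-new-vertex} guarantee, respectively, that this placement respects assignment~(iii), that every new edge from $v_j$ to an old vertex can be routed crossing-free in the face $\omega_e^i$, and that every half-edge incident to $v_j$ either becomes a completed edge in the correct face or remains a half-edge whose face spans the $i$th interval, giving a valid extension of $G_j$ in state $(i,j,1)$.

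For the reverse direction (feasibility $\Rightarrow D[i,j,r]=1$), suppose state $(i,j,r)$ is feasible via an extension $\mathcal{E}$ of \lSL{H} satisfying (FP~1)--(FP~3). If $r=0$, then $v_j$ lies in some interval $i'<i$; consider the largest such solution-compatible predecessor among states $(i-1,j,0)$ and $(i-1,j,1)$ (the former if $v_j$ also lies in an interval $<i-1$, the latter if $v_j$ lies exactly in the $(i-1)$-st interval), which is feasible via the restriction of $\mathcal{E}$; the fact that $\mathcal{E}$ satisfies (FP~3) for interval $i$ forces $b\neq b'$ for every half-edge, so the predecessor is admissible and the recurrence sets $D[i,j,0]=1$. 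If $r=1$, then $v_j$ lies exactly in the $i$th interval of $\mathcal{E}$; restricting $\mathcal{E}$ to $G_{j-1}$ gives a feasible state $(i,j-1,r')$ for the appropriate $r'$, and since $\mathcal{E}$ is a valid stack layout, all of \cref{dp:extension-super-interval,dp:extension-old-vertex,dp:extension-new-vertex} must hold (otherwise $\mathcal{E}$ would contain a crossing or violate assignment~(iii)), so $(i,j-1,r')$ is an admissible predecessor and \cref{eq:kappa-page-number-page-width-dp-recurrence-recurrence-r-1} sets $D[i,j,1]=1$. The base cases \cref{eq:kappa-page-number-page-width-dp-recurrence-base-0,eq:kappa-page-number-page-width-dp-recurrence-base-1} are immediate: placing zero new vertices in the (empty) first interval trivially satisfies all three feasibility properties with $r=0$, while $r=1$ is impossible since no vertex has been placed.

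Finally, for the running-time claim I would observe that evaluating the recurrence at a single state amounts to: checking admissibility of at most two predecessor states, each of which requires inspecting the face $\omega_e^i$ for each half-edge (of which there are $O(\madd{})$) or verifying \cref{dp:extension-super-interval,dp:extension-old-vertex,dp:extension-new-vertex} for each new edge incident to $v_j$ (again $O(\madd{})$ of them), where each individual check — determining which face of a fixed page spans a given interval at a prescribed distance $\omega_e$ from the outer face, and whether a specified vertex is incident to it — can be answered in constant time after linear-time preprocessing of the faces of each page of \lSL{} (which we do once, globally, not per state). Hence each recurrence evaluation is $O(\madd{})$. The main obstacle in writing this out carefully will be the bookkeeping in the reverse direction for $r=0$: one must argue that the ``correct'' predecessor state is feasible and that the $b\neq b'$ admissibility condition is exactly what (FP~3) at interval $i$ encodes, i.e. that a feasible solution's half-edge faces spanning interval $i$ is equivalent to the upward-bounding edge of that face not ending at the left endpoint $b$ of interval $i$ — this requires being precise about the correspondence between faces, their identifying edges, and the ``spans an interval'' relation as set up before the lemma.
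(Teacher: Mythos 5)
Your proposal is correct and follows essentially the same line of argument as the paper: induction on $(i,j)$ with the same state ordering, the same role of admissible predecessors in each step, and the same precomputed face look-up to achieve $\BigO{\madd}$ per evaluation. The only real difference is organizational (you split into forward/reverse directions while the paper splits into $D=1$/$D=0$ cases within each inductive step), plus a minor off-by-one slip in the forward direction where you write that the face $\omega_e^i$ ``spans the $i+1$-st interval'' when the predecessor is $(i-1,j,r')$ of $(i,j,0)$; the correct statement there is that $\omega_e^{i-1}$ spans the $i$th interval.
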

\begin{proof}%
	We first use induction over $i$ and $j$ to show correctness of the recurrence relation and later argue the time required to evaluate it.
	In the following, we let $[a, b]$ be the $i$th interval.
	
	\proofsubparagraph*{Base Case ($\boldsymbol{i = 1}$ and $\boldsymbol{j = 0}$).}
	If we have $i = 1$ and $j = 0$, we are in the first interval and have not placed any new vertex.
	Thus, $G_j = G_j^+ = H$ holds and $G_j$ has clearly a solution, namely \lSL{H}.
	Furthermore, there are no half-edges in $G_j^+$ and thus $(1, 0, r)$ is a feasible state if and only if $r = 0$.
	Note that $(1, 0, 1)$ contradicts \cref{dp:feasible-r} and thus cannot be a feasible state.
	Hence, \cref{eq:kappa-page-number-page-width-dp-recurrence-base-0,eq:kappa-page-number-page-width-dp-recurrence-base-1} are correct and serve as our base case.

	In our inductive hypothesis, we assume that the table $D$ has been correctly filled up until some value $i' \in [\Size{V(H)} + 1]$ and $j' \in [\nadd{}]_0$.

	\proofsubparagraph*{Inductive Step for $\boldsymbol{i}$ ($\boldsymbol{i = i' + 1}$ and $\boldsymbol{j = j'}$).}
	First, note that by moving one interval to the right, having $r = 1$ is not possible and hence we focus on \cref{eq:kappa-page-number-page-width-dp-recurrence-recurrence-r-0} in this step.
	We consider the cases $D[i, j, 0] = 1$ and $D[i, j, 0] = 0$ separately.
	
	For $D[i, j, 0] = 1$, there exists by the definition of \cref{eq:kappa-page-number-page-width-dp-recurrence-recurrence-r-0} an admissible predecessor $(i', j, r)$ for some $r \in \{0,1\}$ with $D[i', j, r] = 1$.
	By our inductive hypothesis, this means that the state $(i', j, r)$ is feasible.
	In particular, it has a solution for $G_j$ that places all new vertices in the first $i'$ intervals.
	Clearly, the same solution positions them also in the first $i$ intervals.
	Furthermore, every half-edge~$e^+$ of~$G_j^+$ is assigned to a face that spans the $i'$th interval.
	As $(i', j, r)$ is an admissible predecessor, we know that for every $e \in G_j^+$ the face $\omega_e^{i'}$ also spans the $i$th interval.
	Hence, the state $(i, j, 0)$ is feasible, i.e., $D[i, j, 0] = 1$ correctly holds.
	
	For $D[i, j, 0] = 0$, there are two cases to consider by the definition of \cref{eq:kappa-page-number-page-width-dp-recurrence-recurrence-r-0}.
    Either $(i, j, 0)$ does not have an admissible predecessor, or for all admissible predecessors $(i', j, r)$ of $(i, j, 0)$ we have $D[i', j, r] = 0$.
    Observe that only states of the form $(i', j, r)$ for some $r \in \{0,1\}$ can be admissible predecessors.
    In the former case, there exists by our definition of admissible predecessor some half-edge $e^+$ in $G_j^+$ that is assigned to the face $\omega_e^{i'}$ which does not span the $i$th interval.
    Hence, $(i, j, 0)$ is not feasible by \cref{dp:feasible-half-edges}.
	In the latter case, we know by our inductive hypothesis that $D[i', j, r] = 0$ implies that $(i', j, r)$ is not feasible, i.e., there does not exist a solution for the graph $G_j$ in which we place the new vertices in the first $i'$ intervals.
	As we do not place the vertex $v_j$ in the $i$th interval, no solution can exist for the state $(i, j, 0)$ either and $D[i, j, 0] = 0$ correctly holds.
	This concludes the inductive step for $i$.
	
	\proofsubparagraph*{Inductive Step for $\boldsymbol{j}$ ($\boldsymbol{i = i'}$ and $\boldsymbol{j = j' + 1}$).}
	Analogous to before, we note that by placing~$v_j$ in the $i$th interval, $r = 0$ is not possible and hence we focus on \cref{eq:kappa-page-number-page-width-dp-recurrence-recurrence-r-1} in this step.
	We again consider the cases $D[i, j, 1] = 1$ and $D[i, j, 1] = 0$ separately.
	
	For $D[i, j, 1] = 1$, there exists an admissible predecessor $(i, j', r)$ for some $r \in \{0,1\}$ with $D[i, j', r] = 1$ by the definition of \cref{eq:kappa-page-number-page-width-dp-recurrence-recurrence-r-1}. 
	By our inductive hypothesis, this means that the state $(i, j', r)$ is feasible and has a solution \lSL{G_{j - 1}} that places all new vertices of $G_{j - 1}$ in the first $i$ intervals.
	We now construct a solution \lSL{G_j} by setting~$\sigma_{G_j}$ according to assignment~(i), and extending the spine order $\prec_{G_{j - 1}}$ by placing~$v_j$ in the $i$th interval.
	More concretely, we take~$\prec_{G_{j - 1}}$, set $a \prec v_j \prec b$, and take the transitive closure to obtain a linear order on the vertices of~$G_j$.
	As $(i, j', r)$ is an admissible predecessor, $v_j$ is placed within \assignedSuperInterval{v_j} in \lSL{G_j} and \lSL{G_j} extends \lSL{H}.
	So it remains to show that \lSL{G_j} is crossing-free.
	As we discard all assignments~(i)--(iv) that imply a crossing among two new edges, a new edge $e$ incident to~$v_j$ could only cross an old edge~$e'$.
	However, as $(i, j', r)$ is an admissible predecessor, we have that the vertices incident to $e$ lie in the same face $\omega_e^i$.
    Every old edge induces a face of \lSL{H}.
    Therefore, we deduce that a crossing between an old and a new edge whose endpoints lie in the same face is impossible.
    Consequently, also $e$ and $e'$ cannot cross.
	Thus, \lSL{G_j} is a solution for the graph~$G_j$.
	Finally, it is clear that for every half-edge $e^+$ of $G_j^+$, the face assigned to its associated edge $e$ spans the $i$th interval.
	For half-edges that already existed in $G_{j'}^+$, this holds as $(i, j', r)$ is feasible.
	For half-edges introduced in $G_j^+$, this holds by the definition of admissible predecessor; see \cref{dp:extension-old-vertex,dp:extension-new-vertex}.
	Thus, $(i, j, 1)$ is feasible and $D[i, j, 1] = 1$ correctly holds.

    For $D[i, j, 1] = 0$, there are again two cases to consider by the definition of \cref{eq:kappa-page-number-page-width-dp-recurrence-recurrence-r-1}.
    Either $(i, j, 1)$ does not have an admissible predecessor, or for all admissible predecessors $(i, j', r)$ of $(i, j, 0)$ we have $D[i, j', r] = 0$.
    Again, we observe that it suffices to consider only states of the form $(i, j', r)$ for some $r \in \{0,1\}$ as potential admissible predecessors.
    For the former case, clearly, if both such states $(i, j', r)$ are not admissible predecessors, then $(i, j, 1)$ cannot be feasible:
	Either, we violate \cref{dp:extension-super-interval} by placing $v_j$ outside \assignedSuperInterval{v_j}, which contradicts assignment~(iii), or an edge $e$ incident to $v_j$ crosses an old edge as one of its endpoints is not incident to $\omega_e^i$; see \cref{dp:extension-old-vertex,dp:extension-new-vertex}.
	Note that we can assume that all relevant faces span the $i$th interval, as we have already shown that the inductive step for $i$ is correct.
	In both cases, $(i, j, 1)$ is clearly not feasible; see \cref{dp:feasible-j-in-i} for the former and observe that a crossing contradicts the existence of an extension for the latter case.
	We now consider the case where all admissible predecessor $(i, j', r)$ of $(i, j, 1)$ have $D[i, j', r] = 0$.
	Using proof by contradiction, we show that in this case $(i, j, 1)$ cannot be feasible either.
	Assume that $(i, j, 1)$ would be a feasible state.
	Then, there exists a solution \lSL{G_j} for the graph $G_j$.
    Using \lSL{G_j}, we can create a solution \lSL{G_{j'}} for $G_{j'}$ by removing $v_j$ from $\prec_{G_j}$ and all its incident edges from $\sigma_{G_j}$.
	Clearly, \lSL{G_{j'}} respects the assignments~(i)--(iv).
	Furthermore, for every half-edge $e^+$ in $G_j^+$, the assigned face spans the $i$th interval as $(i, j, 1)$ is feasible; see \cref{dp:feasible-half-edges}.
	For every half-edge $e^+$ in $G_{j'}^+$ with $\vertexOutside{V_{j'}}{e} = v_j$, i.e., that was completed to an ordinary edge in $G_j$, the assigned face spans the $i$th interval as it is incident to it.
	Hence, if $(i, j, 1)$ is feasible and $(i, j', r)$ is an admissible predecessor, then $(i, j', r)$ is also feasible.
	However, this contradicts the inductive hypothesis, as we have $D[i, j - 1, r] = 0$.
	Thus, the state $(i, j, 1)$ cannot be feasible and $D[i, j, 1] = 0$ correctly holds.
	This concludes the inductive step for $j$.
	
	\proofsubparagraph*{Evaluation Time of Recurrence Relation.}
	We observe that, apart from checking whether a state is an admissible predecessor of $(i, j, r)$, the steps required to perform in order to evaluate the recurrence relation take constant time.
	In the following, we assume that we can access a look-up table that stores the faces in \lSL{H} that span a given interval in $\prec_{H}$  on a given page $p \in [\ell]$ ordered from outside in, i.e., starting with the outer-face.
	We will account for this in our proof of \cref{lem:dp}.
	In \cref{eq:kappa-page-number-page-width-dp-recurrence-recurrence-r-0}, we ensure for every half-edge $e^+$ in $G_j^+$ that the face $\omega_e^{i - 1}$ does not end at the interval $i - 1$.
	As there are at most \madd{} half-edges, we can do this in \BigO{\madd{}} time.
	For \cref{eq:kappa-page-number-page-width-dp-recurrence-recurrence-r-1}, we have to ensure that the $i$th interval $[a, b]$ is part of \assignedSuperInterval{v_j}, which takes constant time.
	Furthermore, we have to check for every new edge $e = v_ju$ that the face~$\omega_e^i$ exists on the page $\sigma(e)$ and that~$v_j$ is incident to it.
	For a single edge, this takes constant time, as we can look up the faces that span the $i$th interval and~$v_j$ is always incident to the bottom-most face.
	Furthermore, if $u \in V(H)$ holds, we also have to check if $u$ is incident to $\omega_e^i$.
	If $u \preceq a \prec b$, we can access the faces that span the interval $[u, \successorSpine{\prec_{H}}{u}]$ and check if $u$ is incident to the face $\omega_e^i$.
    An analogous check can be made if we have $a \prec b \preceq u$.
	This takes constant time per edge $e$.
	Hence, we can evaluate \cref{eq:kappa-page-number-page-width-dp-recurrence-recurrence-r-1} in \BigO{\madd{}} time.
	Combining all, the claimed running time follows.
\end{proof}
\noindent
\subparagraph*{Putting Everything Together.}
With our DP at hand, we are now ready to prove \cref{lem:dp}.
\begin{proof}[Proof of \cref{lem:dp}.]
    First, recall that we can check in 
    \BigO{{\nadd{}}^2 + \madd{} \cdot \Size{\instance{}}} 
    time whether assignments (i)--(iv) are consistent and do not imply  a crossing.
    Thus, we assume for the remainder of the proof that they are.
    Furthermore, recall that there are $\Size{V(H)} + 1$ intervals in $\lSL{H}$ and observe that we have $G_k = G_k^+ = G$.
    Therefore, $G_k^+$ does not not contain any half-edge and the feasibility propery \cref{dp:feasible-half-edges} is trivially satisfied.
    Hence, as a consequence of \cref{lem:kappa-page-number-page-width-dp-recurrence-correctness}, we deduce that there exists an $\ell$-page stack layout of $G$ that extends \lSL{} and respects assignments (i)--(iv) if and only if $D[\Size{V(H)}  + 1, \nadd{}, r] = 1$ for $r = 0$ or $r = 1$.
    Furthermore, by applying standard backtracing techniques, we can also determine the concrete spine positions for every new vertex, i.e., compute such a stack layout.

    We now bound the running time of the DP.
    For that, we first observe that the DP-table $D$ has \BigO{\nadd{}\cdot\Size{V(H)}} entries.
	We have seen in \cref{lem:kappa-page-number-page-width-dp-recurrence-correctness} that the time required to evaluate the recurrence relation is in \BigO{\madd{}}.
	However, we assumed that we have access to a lookup-table that stores for each interval and each page the faces that span it.
	We can compute this table in a pre-processing step by iterating from left to right over the spine order~$\prec_H$ and keeping at each interval $[a, b]$ for each page $p$ track of the edges $a' b'$ with $\sigma(a'b') = p$ and $a' \preceq a \prec b \preceq b$.
	This can be done in \BigO{\Size{\instance{}}^2} time.
    However, as this table can be re-used in different invocations of the DP-Algorithm, it has to be computed only once in the beginning.
    As the overall running time of the \FPT-algorithm will clearly dominate \BigO{\Size{\instance{}}^2}, we neglect this pre-computation step.
	Hence, the running time of the DP is \BigO{\nadd{} \cdot \Size{V(H)} \cdot \madd{}} and together with the initial checks, this amounts to \BigO{\nadd{}\cdot\madd{}\cdot \Size{\instance{}}} time.
\end{proof}
\end{statelater}%
\ifthenelse{\boolean{long}}{%
Finally, we observe that for assignment~(i), i.e., $\sigma_G$, there are \BigO{\ell^{\madd}} different possibilities, for assignment~(ii), i.e., \restrictedSpineOrder{}, there are \BigO{{\nadd{}}!} possibilities, for assignment~(iii), i.e., the assignment of new vertices to \superIntervals{}, there are \BigO{\madd{}^{\nadd{}}} possibilities, and for assignment~(iv), i.e., the distance to the outer face, there are \BigO{\omega^{\madd{}}} different possibilities.
This gives us overall \BigO{\ell^{\madd} \cdot {\nadd{}}! \cdot \madd{}^{\nadd{}} \cdot \omega^{\madd{}}} different possibilities for assignments~(i)--(iv). 
}{}%
\noindent%
\ifthenelse{\boolean{long}}{%
Applying \cref{lem:dp} to each of these, we get the desired theorem:%
\theoremPageWidthFPT*
}{
Applying \cref{lem:dp} to each of these, we get the following theorem.%
\stateFPTTheorem
}

\newcommand{\stateIndependentSetTheorem}{\begin{restatable}\restateref{thm:kappa-page-number-fpt}{theorem}{theoremIndependentSetFPT}
    \label{thm:kappa-page-number-fpt}
    Let $\instance = \instanceLong$ be an instance of \SLEShort{} where $G[\Vadd{}]$ is an independent set.
    We can find an $\ell$-page stack layout of $G$ that extends \lSL{} or report that none exists in \BigO{\ell^{\madd{}} \cdot \nadd{}! \cdot {\madd{}}^{\nadd{}} \cdot (\madd{} {\Size{\instance{}}}^2)} time.
\end{restatable}}

\ifthenelse{\boolean{long}}{
\section{Towards a Tighter Fixed-Parameter Algorithm for \SLEShort{}}
\label{sec:fpt-independent-set}
As a natural next step, we would like to generalize \cref{thm:kappa-page-number-page-width-fpt} by considering only $\kappa$ and~$\ell$ as parameters.
However, the question of whether one can still achieve fixed-parameter tractability for \SLEShort when parameterizing by $\kappa+\ell$ is still open.
Nevertheless, as our final result, we show that strengthening \cref{thm:kappa-page-number-page-width-fpt} is indeed possible at least in the restricted case where no two missing vertices are adjacent, as we can then greedily assign the first ``possible'' interval to each vertex that complies with assignment~(i)--(iii).
\stateIndependentSetTheorem
}{}
\begin{prooflater}{ptheoremIndependentSetFPT}
Observe that $G[\Vadd{}]$ being an independent set removes the need for synchronizing the position of adjacent new vertices to ensure that they are incident to the same face.
We propose a fixed-parameter algorithm that loosely follows the ideas introduced in  \cref{sec:fpt} and adapts them to the considered setting.
The following claim will become useful.
\begin{claim*}
    \label{claim:greedy}
    Given an instance $\instance = \instanceLong$ of \SLEShort{} where $G[\Vadd{}]$ is an independent set,
\textbf{(i)} a page assignment~$\sigma_G$ for all edges,
\textbf{(ii)} an order \restrictedSpineOrder{} in which the new vertices will appear along the spine, and
\textbf{(iii)} for every new vertex $v \in \Vadd{}$ an assignment to a \superInterval{}.
    In %
    \BigO{\madd{} \cdot {\Size{\instance{}}}^2} 
    time we can compute an $\ell$-page stack layout of~$G$ that extends \lSL{} and respects the given assignments (i)--(iii) or report that no such layout exists.
\end{claim*}
Towards showing the claim, we first note that we only miss assignment~(iv) from \cref{lem:dp}.
    Hence, by the same arguments as in the proof of \cref{lem:dp}, we can check in %
    \BigO{{\nadd{}}^2 + \madd{}\cdot \Size{\instance{}}}
    time whether assignments (i)--(iii) are consistent and do not imply a crossing.
    For the remainder of the proof, we assume that they are, as we can otherwise immediately return that there does not exist an $\ell$-page stack layout of $G$ that respects the assignments.

    We still need to to assign concrete spine positions to new vertices.
    However, in contrast to \cref{thm:kappa-page-number-page-width-fpt}, there is no need to ensure that adjacent new vertices are in the same face, because there are no two new vertices are adjacent by assumption.
    This allows us to use a greedy variant of the DP from \cref{thm:kappa-page-number-page-width-fpt}.

    \proofsubparagraph{The Greedy Algorithm.}
    We maintain a counter $j$ initialized at $j = 1$ and consider the $i$th interval $[a, b]$.
    If $[a, b] \in \assignedSuperInterval{v_j}$, i.e., if $[a, b]$ is part of the \superInterval{} for $v_j$, we check the following for every new edge $e = v_ju \in \Eadd{}$ incident to~$v_j$.
    Assuming that $v_j$ would be placed in $[a, b]$, we check whether $v_j$ sees $u$ on the page $\sigma(e)$.
    These checks can be done in \BigO{\madd{}\cdot \Size{\instance{}}} time.
    If this is the case, we place $v_j$ in the interval $[a, b]$ and increase the counter by one, otherwise we continue with the next interval $[b, c]$.
    We stop once we have $j = k + 1$ as we have assigned an interval to all new vertices.
    To obtain the $\ell$-page stack layout of $G$, we can store in addition for each vertex the interval we have placed it in.
    If after processing the last interval there are still some new vertices that have not been placed, we can return that there does not exist an $\ell$-page stack layout of $G$ that extends \lSL{H} and respects the assignments~(i)---(iii).
    The greedy algorithm runs in \BigO{\madd{} \cdot{\Size{\instance{}}}^2} time.

    \proofsubparagraph{Correctness of the Greedy Algorithm.}
    In the following, we show that if there is an $\ell$-page stack layout of $G$ that extends \lSL{H} and respects assignments (i)--(iii), then our greedy algorithm finds also some.
    To that end, we assume that there is such a stack layout $\langle\prec^*_{G}, \sigma^*_{G}\rangle$.
    As we only assign intervals and thus spine positions to the new vertices, it suffices to show that if $\langle\prec^*_{G}, \sigma^*_{G}\rangle$ is a solution, so is $\langle\prec, \sigma^*_G\rangle$, where $\prec$ is the spine order we obtain with our greedy algorithm.
    Observe that we must find some tuple \lSL{G} that extends \lSL{H} and respects the assignments (i)--(iii), as we only ensure that no new edge incident to a new vertex crosses old edges.
    As this is clearly not the case in $\langle\prec^*_{G}, \sigma^*_{G}\rangle$, there must be a feasible interval for each new vertex.
    For the remainder of the proof, we assume that $\prec^*_{G}$ and $\prec$ differ only in the position of some new vertex $v$.
    This is without loss of generality, as we can apply the following arguments for all new vertices iteratively from left to right according to~$\prec^*_{G}$, until all of them are placed as in the greedy solution.
    As we assign the intervals greedily, we assume that $v \prec^*_{G} u$ implies $v \prec u$ for all old vertices $u$, i.e., $v$ appears in $\prec$ earlier than in~$\prec^*_{G}$.
    Clearly, $\langle\prec, \sigma^*_G\rangle$ extends \lSL{H} as $\langle\prec^*_{G}, \sigma^*_G\rangle$ does.
    Therefore, we only need to show that $\langle\prec, \sigma^*_G\rangle$ does not contain crossings.
	
    Towards a contradiction, assume that $\langle\prec, \sigma^*_G\rangle$ contains a crossing among the edges $e = vu$ and $e' = ab$.
    We assume that $e$ is a new edge incident to the new vertex $v$ and observe that~$u$ must be an old vertex.
    Furthermore, we assume without loss of generality that $b$ is also an old vertex.
    As already argued in the beginning, we have checked, when placing~$v$, that $e$ does not cross an old edge.
    Hence, we observe that $e'$ cannot be an old edge.
    As we also treated all new edges incident to two old vertices as old edges, we conclude that $e'$ must be a new edge incident to a new vertex $a$ and an old vertex $b$.
    Furthermore, we assume $a \prec^*_{G} b$, $a \prec^*_{G} v$, and $v \prec^*_G u$.
    This is without loss of generality, as in any other case the arguments will be symmetric.
    As $\langle\prec^*_{G}, \sigma^*_G\rangle$ is crossing free, we have $a \prec^*_{G} b \prec^*_{G} v \prec^*_G u$ or $a \prec^*_{G} v \prec^*_{G} u \prec^*_{G} b$.
    Furthermore, as our greedy algorithm positions $v$ further to the left compared to $\prec^*_{G}$ and $e$ and $e'$ now cross, we must have $a \prec v \prec b \prec u$ or $v \prec a \prec u \prec b$, respectively.
    However, we observe that neither situation is possible.

    For the former case, i.e., when we turn $a \prec^*_{G} b \prec^*_{G} v \prec^*_G u$ into $a \prec v \prec b \prec u$, we observe that $b$ is an old vertex incident to a new edge, i.e., it defines a \superInterval{}, see also \cref{fig:fpt-is-correctness}a.
    Hence, having $b \prec^*_{G} v$ and $v \prec b$ implies that the \superInterval{} for $v$ differs between $\langle\prec^*_{G}, \sigma^*_G\rangle$ and $\langle\prec, \sigma^*_G\rangle$, which violates assignment~(iii) and thus contradicts our assumption on the existence of $\langle\prec^*_{G}, \sigma^*_{G}\rangle$ and $\langle\prec_{G}, \sigma^*_{G}\rangle$.
    
    In the latter case, when we turn $a \prec^*_{G} v \prec^*_{G} u \prec^*_{G} b$ into $v \prec a \prec u \prec b$, we move~$v$ left of the new vertex~$a$.
    Hence, we change the relative order among the new vertices, see \cref{fig:fpt-is-correctness}b.
    This is a contradiction to assignment~(ii) and thus our assumption on the existence of $\langle\prec^*_{G}, \sigma^*_{G}\rangle$ and $\langle\prec_{G}, \sigma^*_{G}\rangle$.
	
    As we obtain in all cases a contradiction, we conclude that $e$ and $e'$ cannot cross.
    Applying the above arguments inductively, we derive that our algorithm must find a solution if there exists one, i.e., is correct.

    \proofsubparagraph*{Putting Everything Together.}
    We can now branch over all possible assignments~(i)---(iii).
    For assignment~(i), i.e., $\sigma_G$, there are \BigO{\ell^{\madd}} different possibilities, for assignment~(ii), i.e., \restrictedSpineOrder{}, there are \BigO{{\nadd{}}!} possibilities, and for assignment~(iii), i.e., the assignment of new vertices to \superIntervals{}, there are \BigO{\madd{}^{\nadd{}}} possibilities
    This gives us overall \BigO{\ell^{\madd} \cdot {\nadd{}}! \cdot \madd{}^{\nadd{}}} different possibilities for assignments~(i)--(iii).
    Applying our greedy algorithm to each of them, we obtain the theorem.
\end{prooflater}
\begin{statelater}{figFPTISCorrectness}
\begin{figure}
	\centering
	\includegraphics[page=1]{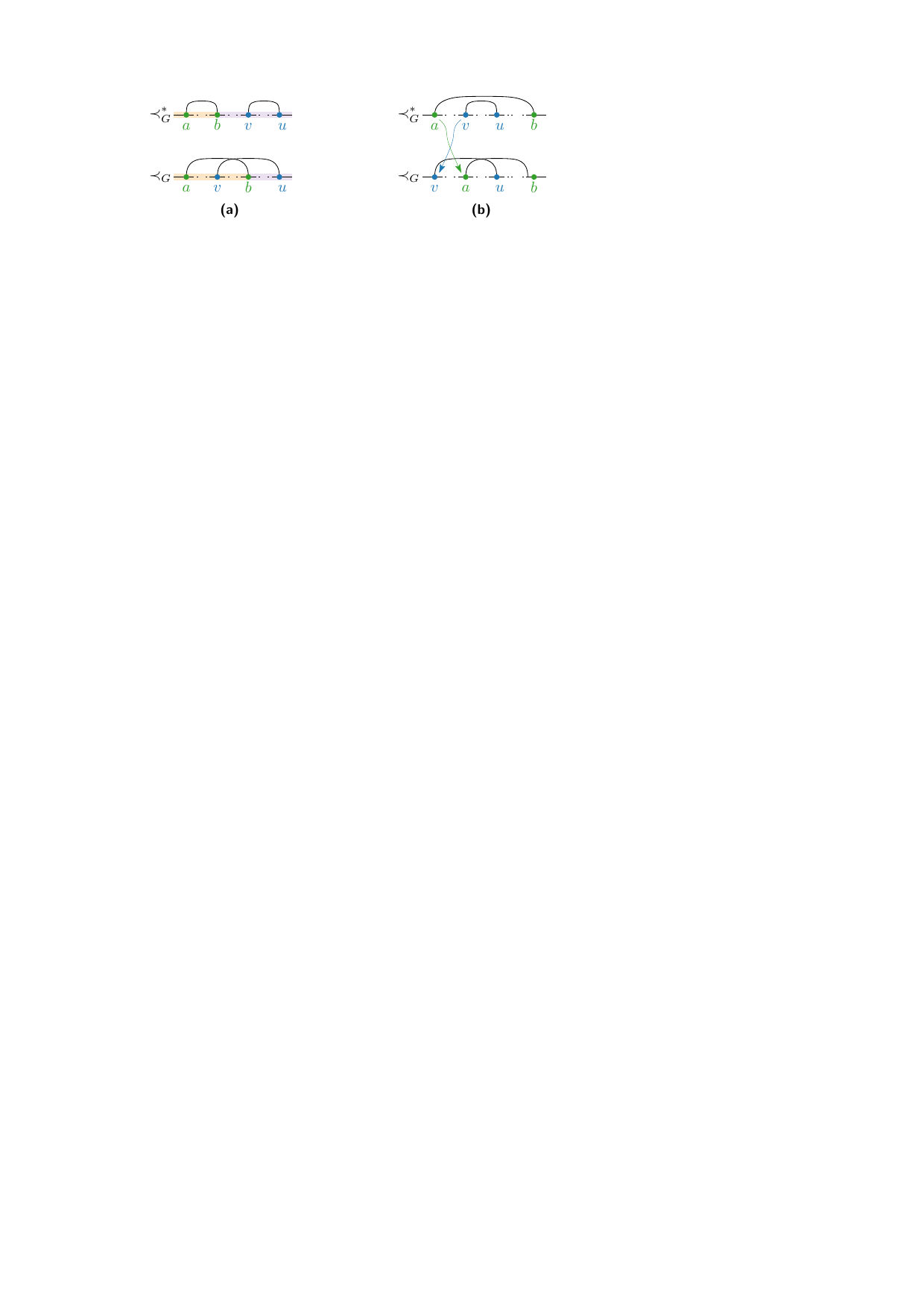}
	\caption{Illustration of the correctness arguments in the proof of \cref{thm:kappa-page-number-fpt}.
	In~\textbf{\textsf{(a)}}, we change the assignment of $v$ to \superIntervals{}, highlighted with the colors, and in~\textbf{\textsf{(b)}}, we change the relative order among the new vertices $a$ and $v$.
	Both scenarios lead to a contradiction.}
	\label{fig:fpt-is-correctness}
\end{figure}
\end{statelater}

\section{Concluding Remarks}
\label{sec:conclusion}
Our results provide the first investigation of the drawing extension problem for stack layouts through the lens of parameterized algorithmics. We show that the complexity-theoretic behavior of the problem is surprisingly rich and differs from that of previously studied drawing extension problems. One prominent question left for future work is whether one can still achieve fixed-parameter tractability for \SLEShort when parameterizing by $\kappa+\ell$, thus generalizing \cref{thm:kappa-page-number-page-width-fpt}\ifthenelse{\boolean{long}}{ and \cref{thm:kappa-page-number-fpt}}{As our final result, we show that this is indeed possible at least in the restricted case where no two missing vertices are adjacent, as we can then greedily assign the first ``possible'' interval to each vertex that complies with assignment~(i)--(iii).

\stateIndependentSetTheorem}.
A further natural and promising direction for future work is to consider generalizing the presented techniques to other types of linear layouts, such as queue layouts.
Finally, future work could also investigate the following generalized notion of extending linear layouts:
Given a graph $G$, the spine order for some subset of its vertices and the page assignment for some subset of its edges, does there exist a linear layout of $G$ that extends both simultaneously?

\bibliography{references-gd.bib}

\appendix

\ifthenelse{\boolean{long}}{}{
    \newpage
    \section{Omitted Proofs from Section~\ref{sec:only-edges}}

    \lemmaOnlyEdgesRemoveEasyEdges*
    \label{lem:only-edges-remove-easy-edges*}
    \plemmaOnlyEdgesRemoveEasyEdges

    \theoremOnlyEdgesFPT*
    \label{thm:only-edges-fpt*}
    \ptheoremOnlyEdgesFPT

    \newpage
    \section{Omitted Details from Section~\ref{sec:paranp-hardness}}

    \sectionParaNPBaseLayout

    \subsection{Omitted Proofs}
    \lemmaFixationGadgetProperties*
    \label{lem:fixation-gadget-properties*}
    \plemmaFixationGadgetProperties

    \theoremParaNPHardness*
    \label{thm:paranp-hardness*}
    \ptheoremParaNPHardness

    \newpage
    \section{Omitted Details from Section~\ref{sec:w-1}}
    \label{app:w-1}

	\sectionWOneBaseLayout
	\sectionWOneEdges
    \sectionWOneFixationGadget

    \subsection{Omitted Proofs}
    \theoremWOne*
    \label{thm:w-1*}
    \ptheoremWOne

    \newpage
    \section{Omitted Details from Section~\ref{sec:fpt}}

    The main task left open in \cref{sec:fpt} was the proof of the following lemma.
    \lemmaDP*
    \label{lem:dp*}
    In the following, we use the following notation related to \superIntervals{}.
    \notationSuperInterval\space
    \sectionDP

    \theoremPageWidthFPT*
    \label{thm:kappa-page-number-page-width-fpt*}
    \begin{proof}
        We observe that for assignment~(i), i.e., $\sigma_G$, there are \BigO{\ell^{\madd}} different possibilities, for assignment~(ii), i.e., \restrictedSpineOrder{}, there are \BigO{{\nadd{}}!} possibilities, for assignment~(iii), i.e., the assignment of new vertices to \superIntervals{}, there are \BigO{\madd{}^{\nadd{}}} possibilities, and for assignment~(iv), i.e., the distance to the outer face, there are \BigO{\omega^{\madd{}}} different possibilities.
        This gives us overall \BigO{\ell^{\madd} \cdot {\nadd{}}! \cdot \madd{}^{\nadd{}} \cdot \omega^{\madd{}}} different possibilities for assignments~(i)--(iv). 
        The theorem thus follows by applying \cref{lem:dp} to each of these.
    \end{proof}

    \newpage
    \section{Omitted Details from Section~\ref{sec:conclusion}}

    In \cref{sec:conclusion}, we stated the following theorem, which we now want to prove.
    \theoremIndependentSetFPT*
    \label{thm:kappa-page-number-fpt*}
    \ptheoremIndependentSetFPT
    \figFPTISCorrectness
}

\newpage
\section{Removing Multi-Edges}
\label{app:removing-multi-edges}
In the following section, we will describe how one can adapt the \NP- and \W[1]-hardness reductions to not rely on multi-edges.
While the basic idea is always the same, namely to introduce several auxiliary vertices in order to distribute the multiple edges among them, the concrete implementation depends on the (part of) the reduction we are currently discussing, see the following sections.

\subsection{Fixation Gadget (Section~\ref{sec:fixation-gadget})}
\label{app:multi-edges-fixation-gadget}
While the fixation gadget gives us a lot of power to show hardness of different parameter combinations, we use multi-edges to construct it.
As our graphs $H$ and $G$ are assumed to be simple, it is now time to remove these multi-edges by distributing them over several (additional) auxiliary vertices.

In order to do that, we no longer introduce the $2(F + 1)$-many vertices $b_1, \ldots, b_{F + 1}$ and $a_1, \ldots, a_{F + 1}$, but introduce for each page $p \neq p_d$ $2(F + 1)$ vertices $b_1^p, \ldots, b_{F + 1}^p$ and $a_1^p, \ldots, a_{F + 1}^p$.
We also adapt the spine order $\prec_{H}$ and set $b_i^p \prec v_i \prec a_i^p$ for $i \in [F + 1]$ and $p \in [\ell] \setminus \{p_d\}$.
Furthermore, we fix the order among the newly introduced vertices by enforcing $b_i^{p + 1} \prec b_i^{p}$ and $a_i^p \prec a_i^{p + 1}$ for every $i \in [F + 1]$ and $p \in [\ell] \setminus \{p_d\}$.
The linear order $\prec$ is then obtained by taking the transitive closure of the above (partial) orders.
Observe that the above spine order places for each page $p$ one vertex \emph{\textbf{b}}efore ($b_i^{p}$) and \emph{\textbf{a}}fter  ($a_i^{i}$) $v_{i}$.

Next, we adapt the edges that we created in \cref{sec:fixation-gadget} and their page assignment $\sigma_H$.
To that extend, recall that we introduced for every $i \in [F + 1]$ and every page $p \neq p_d$ the edge $e(b_i, a_i, p) = b_ia_i$ with $\sigma(e(b_i, a_i, p)) = p$.
Now, this edge should be incident to vertices created ``for the page $p$'', i.e., we create instead the edge $b_i^pa_i^p$ and set $\sigma(b_i^pa_i^p) = p$.
Although the remaining edges created in \cref{sec:fixation-gadget} are already simple, we have to adapt (some of) them, as their incident vertices no longer exist, i.e., have been replaced.
In particular, for every $i \in [F + 1]$, we now no longer introduce the edges $b_iv_i$ and $v_ia_i$ but the edges $b_i^{\ell - 1}v_i$ and $v_ia_i^{\ell - 1}$ and set $\sigma(b_i^{\ell - 1}v_i) = \sigma(v_ia_i^{\ell - 1}) = p_d$.
Finally, we create instead of the edge $b_1a_{F + 1}$ the edge $b_1^{\ell - 1}a_{F + 1}^{\ell - 1}$ and set $\sigma(b_1^
{\ell - 1}a_{F + 1}^{\ell - 1}) = p_d$.
\cref{fig:adapted-fixation-gadget-example} is an adapted version of \cref{fig:fixation-gadget-example} and shows the updated construction.
\begin{figure}[t]
	\centering
	\includegraphics[page=5]{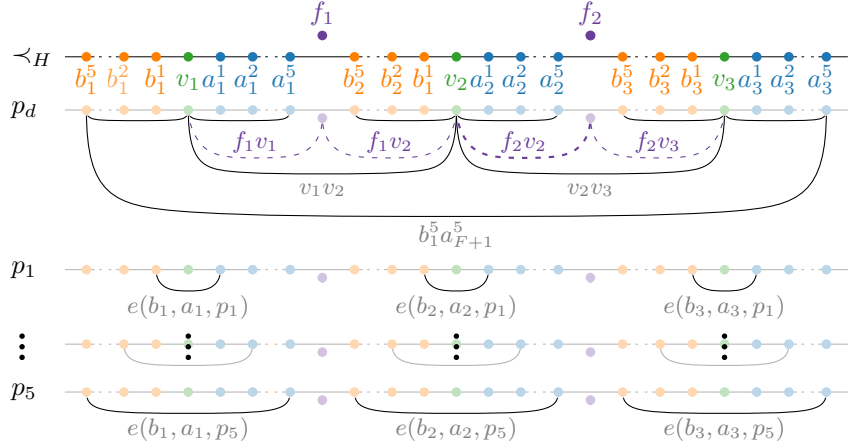}
	\caption{An example for an adapted fixation gadget for $F = 2$ with five other pages. Adapted version of \cref{fig:fixation-gadget-example} that shows how we can remove multi-edges by distributing them over additional vertices.}
	\label{fig:adapted-fixation-gadget-example}
\end{figure}
We end this section with showing \cref{lem:adapted-fixation-gadget-properties}, which is an adapted version of \cref{lem:fixation-gadget-properties}.
\begin{lemma}
	\label{lem:adapted-fixation-gadget-properties}
	Let $\instance = \instanceLong$ be an instance of \SLEShort{} that contains the \emph{adapted} fixation gadget (without multi-edges) on $F$ vertices $\{f_1, \ldots, f_{F}\}$.
	In any solution \lSL{G} to~\instance{} and for every $i \in [F]$, we have $v_i \prec f_i \prec v_{i + 1}$ and $\sigma(f_iv_i) = \sigma(f_{i}v_{i + 1}) = p_d$.
	Furthermore, the \emph{adapted} fixation gadget contributes $2F\ell + 2\ell - 1$ vertices and $(\ell + 4)F + \ell + 2$ edges to the size of \instance.
\end{lemma}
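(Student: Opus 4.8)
The plan is to mirror the proof of \cref{lem:fixation-gadget-properties} almost verbatim, since the adapted construction was designed precisely so that it behaves identically to the original one — the only difference being the internal implementation of the ``blocking'' edges $b_ia_i$ and the terminal edges on page $p_d$. The key structural facts we need are: (1) for every $i\in[F+1]$ and every page $p\neq p_d$, the vertex $v_i$ lies strictly between $b_i^p$ and $a_i^p$ in $\prec_H$, and the edge $b_i^pa_i^p$ is present on page $p$; and (2) the edges $b_i^{\ell-1}v_i$, $v_ia_i^{\ell-1}$ (for all $i$), the edges $v_iv_{i+1}$ (for $i\in[F]$), and the long edge $b_1^{\ell-1}a_{F+1}^{\ell-1}$ are all present on page $p_d$, and together they span the whole gadget on that page. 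These are exactly the properties that the original edges $b_ia_i$, $b_iv_i$, $v_ia_i$, $v_iv_{i+1}$, and $b_1a_{F+1}$ had; the subscript-$p$ bookkeeping only needs to be checked to confirm that no two edges on the same page collide in $\prec_H$.

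First I would verify the two structural facts above from the construction: the spine order $b_i^{p+1}\prec b_i^p$ and $a_i^p\prec a_i^{p+1}$ ensures that on each page $p$ the edge $b_i^pa_i^p$ is the ``innermost'' of the blocking edges at position $v_i$, and crucially still satisfies $b_i^p\prec v_i\prec a_i^p$, so it blocks visibility to $v_i$ from outside $[b_i^p,a_i^p]$ on page $p$ exactly as $b_ia_i$ did. Likewise the page-$p_d$ edges $b_i^{\ell-1}v_i$ and $v_ia_i^{\ell-1}$ are incident to $v_i$ and, since $b_i^{\ell-1}$ is the leftmost of all the $b_i^{\cdot}$ vertices and $a_i^{\ell-1}$ the rightmost of all the $a_i^{\cdot}$ vertices, these edges span the entire ``cluster'' around $v_i$; combined with $v_iv_{i+1}$ on $p_d$ and the global edge $b_1^{\ell-1}a_{F+1}^{\ell-1}$ on $p_d$, the page $p_d$ structure is order-isomorphic to that in the original gadget.

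With those facts in hand, the argument that $v_i\prec f_i\prec v_{i+1}$ and $\sigma(f_iv_i)=\sigma(f_iv_{i+1})=p_d$ carries over word for word from the proof of \cref{lem:fixation-gadget-properties}: assuming $f_i\prec v_i$, the edge $b_{i+1}^pa_{i+1}^p$ (for any $p\neq p_d$) forces $f_iv_{i+1}$ onto page $p_d$, and then either the edge $v_jv_{j+1}$ (for the appropriate $j<i$) or the global edge $b_1^{\ell-1}a_{F+1}^{\ell-1}$ yields a crossing on $p_d$, a contradiction; the case $v_{i+1}\prec f_i$ is symmetric. Having established $v_i\prec f_i\prec v_{i+1}$, assuming $\sigma(f_iv_i)=p\neq p_d$ forces $f_i\prec a_i^p$ via the edge $b_i^pa_i^p$, and then $v_ia_i^{\ell-1}$ on $p_d$ together with the blocking edges $b_{i+1}^{p'}a_{i+1}^{p'}$ for $p'\neq p_d$ leaves no feasible page for $f_iv_{i+1}$; again symmetric for $\sigma(f_iv_{i+1})$. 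Finally I would do the counting: we introduce $2(F+1)$ new vertices per page $p\neq p_d$, i.e.\ $2(F+1)(\ell-1)$ vertices, plus the $F+1$ vertices $v_1,\dots,v_{F+1}$, giving $2(F+1)(\ell-1)+(F+1)=2F\ell+2\ell-1$; and the edge count is unchanged from \cref{lem:fixation-gadget-properties} — $(\ell-1)(F+1)$ edges $b_i^pa_i^p$, $2(F+1)$ terminal edges on $p_d$, $F$ edges $v_iv_{i+1}$, the one global edge, and $2F$ new edges $f_iv_i, f_iv_{i+1}$ — summing to $(\ell+4)F+\ell+2$. The main (minor) obstacle is just being careful that the relabelled blocking edges on a given page $p$ do not cross each other or the page-$p_d$ edges in $\prec_H$, which the nested ordering of the $b_i^{\cdot}$ and $a_i^{\cdot}$ vertices is specifically arranged to prevent; once that is checked the rest is a transcription of the earlier proof.
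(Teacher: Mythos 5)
Your argument follows the paper's proof of the lemma essentially step for step: isolate the two structural invariants (for each page $p \neq p_d$ there is a blocking edge $b_i^p a_i^p$ with $b_i^p \prec v_i \prec a_i^p$; the page-$p_d$ edges $b_i^{\ell-1}v_i$, $v_i a_i^{\ell-1}$, $v_i v_{i+1}$, $b_1^{\ell-1}a_{F+1}^{\ell-1}$ reproduce the original $p_d$-structure), then observe that the crossing argument from Lemma~\ref{lem:fixation-gadget-properties} carries over verbatim. That is exactly what the paper does.

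One bookkeeping slip in your vertex count: the gadget's vertex contribution includes the $F$ new vertices $f_1,\dots,f_F$ themselves, which you omitted. Your stated sum $2(F+1)(\ell-1)+(F+1)$ in fact equals $2F\ell+2\ell-F-1$, not $2F\ell+2\ell-1$; adding the missing $F$ gives $F + (F+1) + 2(F+1)(\ell-1) = 2F\ell+2\ell-1$ as in the lemma. Your edge count is correct.
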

\begin{proof}
	Let \lSL{G} be a solution to \instance{}.
	Regarding the two properties of any possible solution to \instance{}, i.e., that $v_i \prec f_i \prec v_{i + 1}$ and $\sigma(f_iv_i) = \sigma(f_{i}v_{i + 1}) = p_d$ holds for every $i \in [F]$, it suffices to make the following observations in the proof of \cref{lem:fixation-gadget-properties}.
	
	We first consider the argument we made to show $v_i \prec f_i \prec v_{i + 1}$.
    There, we first assumed that $f_i \prec v_i$ would hold for an $i \in [F]$.
	Using the observation on the presence of the edges $e(b_{i + 1}, a_{i + 1}, p)$ for every page $p \neq p_d$ and the spine order $b_{i + 1} \prec_{H} v_{i + 1} \prec_{H} a_{i + 1}$, we concluded that $f_{i}$ can see $v_{i + 1}$ only on the page $p_d$.
	This ultimately led to a contradiction to the assumption on the existence of a solution with $f_i \prec v_i$.
	Now, for the modified fixation gadget we can make a similar observation.
	Consider any page $p \neq p_d$.
	The graph $H$ contains the edge $b_{i + 1}^pa_{i + 1}^p$ and we have $b_{i + 1}^p \prec_{H} v_{i + 1} \prec_{H} a_{i + 1}^p$.
	Hence, $f_{i}$ can still see $v_{i + 1}$ only on the page $p_d$, i.e., we still must have $\sigma(f_iv_{i + 1}) = p_d$.
	And by the very same arguments as in the proof of \cref{lem:fixation-gadget-properties} this leads to a contradiction.
	
	Let us now re-visit the argument to show $\sigma(f_iv_i) = \sigma(f_{i}v_{i + 1}) = p_d$.
    There, we assumed the existence of a solution \lSL{} with $\sigma(f_iv_i) \neq p_d$ for some $i \in [F]$, i.e., we assumed that $\sigma(f_iv_i) = p$ holds for some page $p \neq p_d$.
	Under this assumption, we (again) make the observation that we have the edge $b_i^pa_i^p$ with $\sigma_H(b_i^pa_i^p) = p$, which allows us to strengthen above result to $v_i \prec f_i \prec a_i^p$.
	As in the proof of \cref{lem:fixation-gadget-properties}, we deduce from $\sigma_{H}(v_ia_i^{\ell - 1}) = p_d$ and $\sigma_{H}(b_{i + 1}^{p'}a_{i + 1}^{p'}) = p'$ for any page $p' \neq p_d$ that there does not exist a feasible page assignment for the edge $f_iv_{i + 1}$, which leads to a contradiction as in the proof of \cref{lem:fixation-gadget-properties}.
	
	Regarding the size of the adapted fixation gadget, we first observe that we have not introduced any additional edge, but rather re-distributed existing edges to additional vertices.
	For the number of vertices in the adapted fixation gadget, we recall that $\mathcal{F}$ consists of $F$ vertices and we have $F + 1$ vertices of the form $v_i$.
	As we introduce $2(F + 1)$ vertices per page $p \neq p_d$, this amounts to $2(F + 1)(\ell - 1)$ additional vertices.
    Combining all, we conclude that we have $2F\ell + 2\ell - 1$ vertices in the adapted fixation gadget. 
\end{proof}

\subsection{\NP-hardness Reduction (Section~\ref{sec:paranp-hardness})}
\label{app:multi-edges-paranp-hardness}
Our reduction from \cref{sec:paranp-hardness} that we used to show that \SLEShort{} with two new vertices is \NP-complete, see \cref{thm:paranp-hardness}, creates a graph $H$ with several multi-edges.
More concretely, we created multi-edges both in the base layout from \cref{sec:paranp-hardness-base-layout}, for example to block visibility to a vertex, and in the fixation gadget.
For the latter part of the reduction, i.e., the fixation gadget, we have discussed in \cref{app:multi-edges-fixation-gadget} how to remove the multi-edges.
This section is devoted to describe a way to remove the multi-edges in the former part, i.e., the base layout.
At the end of this section, we argue that our reduction remains correct.

\subparagraph*{Removal of the Multi-Edges.}
Let $\varphi = (\mathcal{X}, \mathcal{C})$ be an instance of \ThreeSat{} with $N = \Size{\mathcal{X}}$ variables and $M = \Size{\mathcal{C}}$ clauses.
Recall that we introduced $N + M + 1$ dummy vertices~$d_q$ in \cref{sec:paranp-hardness-base-layout} and distributed them in $\prec_H$ on the spine, i.e., we set $d_i \prec x_i \prec d_{i + 1}$ and $d_{N + j} \prec c_j \prec d_{N + j + 1}$ for every $i \in [N]$ and $j \in [M]$.
We now create $2N(N + M + 1)$ dummy vertices $d_{q}^{p}$ instead with $q \in [N + M + 1]$ and $p \in [2N]$, i.e., $N + M + 1$ dummy vertices for every page $p \in [\ell] \setminus \{p_d\}$ associated to a variable.
These new dummy vertices are ordered on the spine as follows.
We set $d_i^p \prec x_i \prec d_{i + 1}^p$ and $d_{N + j}^p \prec c_j \prec d_{N + j + 1}^p$ for every $i \in [N]$, $j \in [M]$, and $p \in [2N]$.
Furthermore, we set $d_q^{p_{i}} \prec d_q^{p_{\lnot i}}$ for every $q \in [N + M + 1]$ and $i \in [N]$.
Finally, we set $d_q^{p_{\lnot i}} \prec d_q^{p_{i + 1}}$ for every $q \in [N + M + 1]$ and $i \in [N - 1]$.
We obtain the linear order $\prec_H$ by taking the transitive closure of the above relative orders, see also \cref{fig:adapted-paranp-hardness-base-layout-edges}.
\begin{figure}[t]
	\centering
	\includegraphics[page=8]{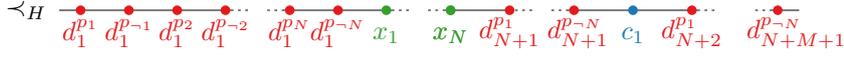}
	\caption{Spine order $\prec_H$ of the adapted base layout of \cref{sec:paranp-hardness-base-layout}.}
	\label{fig:adapted-paranp-hardness-base-layout-edges}
\end{figure}

Next, we redistribute the multi-edges over the new dummy vertices.
Recall that we created for every pair of variables $x_i, x_j \in \mathcal{X}$ with $i \neq j$ the edges $e(x_i, p_j) = d_id_{i + 1}$ and $e(x_i, p_{\lnot j}) = d_id_{i + 1}$.
Now, we create the edges $e(x_i, p_j) = d_{i}^{p_j}d_{i + 1}^{p_j}$ and $e(x_i, p_{\lnot j}) = d_{i}^{p_{\lnot j}}d_{i + 1}^{p_{\lnot j}}$ instead.
We leave the page assignment $\sigma_H$ as it is, i.e., we have $\sigma(e(x_i, p_j)) = p_j$ and $\sigma(e(x_i, p_{\lnot j})) = p_{\lnot j}$.
Furthermore, we created for every combination of a clause $c_j \in \mathcal{C}$ and a variable $x_i \in \mathcal{X}$ edges depending on the occurrence of $x_i$ in $c_j$.
We now also re-distribute these edges as follows.
If $x_i$ does not appear in $c_j$, we create the edges $e(c_j, p_i) = d_{N + j}^{p_{i}}d_{N + j + 1}^{p_{i}}$ and $e(c_j, p_{\lnot i}) = d_{N + j}^{p_{\lnot i}}d_{N + j + 1}^{p_{\lnot i}}$.
We set $\sigma(e(c_j, p_i)) = p_i$ and $\sigma(e(c_j, p_{\lnot i})) = p_{\lnot i}$.
If~$x_i$ appears in $c_j$ without negation, we create the edge $e(c_j, p_i) = d_{N + j}^{p_{i}}d_{N + j + 1}^{p_{i}}$ and set $\sigma(e(c_j, p_i)) = p_{i}$.
Symmetrically, if $x_i$ appears negated in $c_j$, we create the edge $e(c_j, p_{\lnot i}) = d_{N + j}^{p_{\lnot i}}d_{N + j + 1}^{p_{\lnot i}}$ and set $\sigma(e(c_j, p_{\lnot i})) = p_{\lnot i}$.

Finally, we set $a_3^{p_{\ell - 1}} \prec d_{1}^{p_1}$ to ensure that the (in \cref{app:multi-edges-fixation-gadget} adapted) fixation gadget is placed at the very beginning of the spine.
Furthermore, we add the edge $d_1^{p_1}d_{N +M + 1}^{p_{\lnot N}}$ and set $\sigma(d_1^{p_1}d_{N +M + 1}^{p_{\lnot N}}) = p_d$ to ensure that our adapted construction still has \cref{prop:fixation-gadget-dummy-page}.

\subparagraph*{Correctness of the Reduction.}
First, observe that we added a polynomial number of additional vertices to $H$ and no new edges.
Hence, the size of $H$ (and $G$) increased by a factor polynomial in the size of $\varphi$ and the size of \instance{} remains polynomial in the size of~$\varphi$.
Furthermore, due to \cref{lem:adapted-fixation-gadget-properties}, it still holds that in any solution \lSL{G} to \instance{} we have $s \prec v \prec d_{1}^{p_1} \prec x_i \prec c_j \prec d_{N + M + 1}^{p_{\lnot N}}$ for every $i \in [N]$ and $j \in [M]$.
Similarly, the relative position in $\prec_H$ of the dummy vertices on the spine in the adapted reduction did not change compared to the reduction from \cref{sec:paranp-hardness}, i.e., instead of $d_i \prec x_i \prec d_{i + 1}$ and $d_{N + j} \prec c_j \prec d_{N + j + 1}$ we now have $d_{i}^p \prec x_i \prec d_{i + 1}^p$ and $d_{N + j}^p \prec c_j \prec d_{N + j + 1}^p$ for every $i \in [N]$ and $j \in [M]$ (and $p \in [2N]$).
Furthermore, the edges span over the same vertices.
Hence, we conclude that the proof of \cref{thm:paranp-hardness} readily carries over, i.e., \cref{thm:paranp-hardness} also holds for our modified construction that does not have multi-edges.

\subsection{\W[1]-hardness Reduction (Section~\ref{sec:w-1})}
\label{app:multi-edges-w-1}
The reduction we have discussed in \cref{sec:w-1} has one catch: While we only consider simple graphs as input, the graph $H$ we construct contains multi-edges.
Although we remove already some multi-edges when incorporating the adapted fixation gadget from \cref{app:multi-edges-fixation-gadget}, some multi-edges are also introduced in \cref{sec:w-1-layer}.
In particular, consider the case where we have two edges $u_{\alpha}^iu_{\beta}^j, u_{\alpha}^{i'}u_{\beta}^{j'} \in E(G_C)$.
Then, for a $\gamma \in [k] \setminus \{\alpha, \beta\}$, we would create twice the edge $u_{\gamma}^1u_{\gamma}^{n_{\gamma} + 1}$ but assign them to different pages.
In this section, we remove these multi-edges by using the additional (dummy) vertices that we obtain from the adapted fixation gadget, 
However, note that this will not effect our intended equivalence from \cref{eq:w-1-equivalence} between a solution \lSL{G} to \SLEShort{} and a solution $\mathcal{C}$ to \MCC.

In the following, we extend in \cref{app:multi-edges-w-1-base-layout} the base layout of our reduction to accommodate the additional vertices for the adapted fixation gadget.
In \cref{app:multi-edges-w-1-layers}, we describe how to distribute the multi-edges from \cref{sec:w-1-layer} across the new vertices.
Finally, we argue in \cref{app:multi-edges-w-1-correctness} that our reduction is still correct.

\subsubsection{Adapting the Base Layout of our Reduction}
\label{app:multi-edges-w-1-base-layout}
We now extend the base layout from \cref{sec:w-1-base} by additional vertices that in the end will be identified with their respective ``partner'' in the (adapted) fixation gadget.
For each edge $e \in E(G_C)$ and color $\alpha \in [k + 1]$, we create the vertices $b_{\alpha}^{e}$ and $a_{\alpha}^e$.
To order these additional vertices on the spine, we assume that the edges $E(G_C)$ are ordered, i.e., that we have $E(G_C) = \{e_1, \ldots, e_M\}$.
We use this ordering to extend the spine order $\prec_{H}$ as follows, where we assume $u_{0}^{n_{0} + 1} = u_0^0$ for ease of notation.
For all $\alpha \in [k + 1]$ and $i \in [M]$, we set $u_{\alpha-1}^{n_{\alpha - 1} + 1} \prec b_{\alpha}^{e_i} \prec u_{\alpha}^0 \prec a_{\alpha}^{e_i} \prec u_{\alpha}^1$.
Furthermore, for $j \in [M - 1]$, we also set $b_{\alpha}^{e_j + 1} \prec b_{\alpha}^{e_{j}} \prec a_{\alpha}^{e_j} \prec a_{\alpha}^{e_{j + 1}}$.
Informally speaking, according to the above spine order, we place for each edge $e \in E(G_C)$ and each $\alpha \in [k + 1]$ one vertex \underline{b}efore ($b_e^{\alpha}$) and \underline{a}fter  ($a_e^{\alpha}$) $u_{\alpha}^0$.
The vertices before $u_{\alpha}^0$ are ordered decreasingly by the index of their respective edge and the vertices after $u_{\alpha}^0$ are ordered increasingly by the index of their respective edge.
To obtain the adapted linear order $\prec_H$ we take the transitive closure of the above relation and the relations from \cref{sec:w-1-base}.

Finally, we (re-)introduce the (adapted) fixation gadget on $F = k$ vertices $\mathcal{F} = \mathcal{X}$ by identifying the following vertices for $\alpha \in [k + 1]$. Note that we use $i = \alpha$ to distinguish between the vertices of the fixation gadget and hardness reduction.
As in \cref{sec:w-1-fixation-gadget}, we identify $v_{i} = u_{\alpha}^0$.
For the other vertices, we identify $b_{i}^p = b_{i}^{e_p}$, and $a_{i}^p = a_{\alpha}^{e_p}$, i.e., the vertices for the $p$th edge $e_p \in E(G_C)$ are identified with the vertices for page $p$ in the fixation gadget.
The edges of the fixation gadget are adapted accordingly.

\subsubsection{Redistributing the Multi-Edges}
\label{app:multi-edges-w-1-layers}
While incorporating the adapted fixation gadget removes some of the multi-edges, we still have to deal with the multi-edges introduced in \cref{sec:w-1-layer}, where we encoded the adjacencies from $G_C$ into $\lSL{H}$.
We do this now by redistributing the edges over the new vertices introduced in \cref{app:multi-edges-w-1-base-layout} and visualize our process with \cref{fig:adapted-w-1-layer}.
\begin{figure}
	\centering
	\includegraphics[page=6]{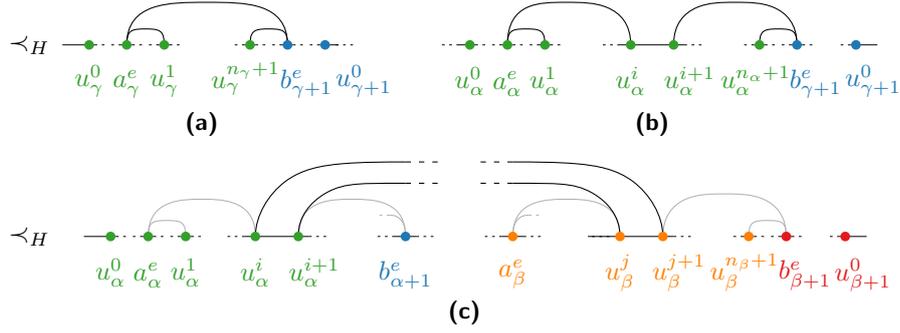}
	\caption{The edges of $H$ on page $p_e$ that model the adjacency of the edge $e = v_{\alpha}^{i}v_{\beta}^j \in E(G_C)$ in our adapted reduction of \cref{sec:w-1-layer}. We show the respective edges of $H$~\textbf{\textsf{(a)}} for a color $\gamma \in [\kappa] \setminus \{\alpha, \beta\}$,~\textbf{\textsf{(b)}} for the color $\alpha$, and~\textbf{\textsf{(c)}} that create a tunnel that connects $v_{\alpha}^{i}$ with $v_{\beta}^j$. Note that the gray edges in~\textbf{\textsf{(c)}} are those from~\textbf{\textsf{(a)}} and~\textbf{\textsf{(b)}}. This figure is an adapted version of \cref{fig:w-1-layer}.}
	\label{fig:adapted-w-1-layer}
\end{figure}

Let $e = v_{\alpha}^{i}v_{\beta}^{j} \in E(G_C)$ be an edge of $G_C$ and assume $\alpha < \beta$.
Recall that we created in \cref{sec:w-1-layer} a set of edges in $H$ dedicated to $e$ and assigned them with $\sigma_{H}$ to the page~$p_e$.
In particular, we created for every $\gamma \in [k] \setminus \{\alpha, \beta\}$ the edge $u_{\gamma}^{1}u_{\gamma}^{n_{\gamma} + 1}$.
This edge is replaced by the edges $a_{\gamma}^eu_{\gamma}^{1}$, $a_{\gamma}^eb_{\gamma + 1}^e$, and $u_{\gamma}^{n_{\gamma} + 1}b_{\gamma + 1}^e$, as in \cref{fig:adapted-w-1-layer}a.
Furthermore, we created the edges $u_{\alpha}^{1}u_{\alpha}^{i}$ and $u_{\alpha}^{i + 1}u_{\alpha}^{n_{\alpha} + 1}$.
Instead, we now create the edges $a_{\alpha}^{e}u_{\alpha}^{1}$ and $a_{\alpha}^{e}u_{\alpha}^{i}$, and $u_{\alpha}^{i + 1}b_{\alpha + 1}^{e}$ and $u_{\alpha}^{n_{\alpha} + 1}b_{\alpha + 1}^{e}$ as shown in \cref{fig:adapted-w-1-layer}b.
Similarly, we also create the edges $a_{\beta}^{e}u_{\beta}^{1}$, $a_{\beta}^{e}u_{\beta}^{j}$, $u_{\beta}^{j + 1}b_{\beta + 1}^{e}$, and $b_{\beta + 1}^{e}u_{\beta}^{n_{\beta} + 1}$.
All of these edges are assigned to the page $p_e$.
Observe that for $i \in \{1, n_{\alpha}\}$ or $j \in \{1, n_{\beta}\}$, above edges would become multi-edges (on the same page), which can easily be avoided.
Recall that we created in \cref{sec:w-1-layer} a tunnel on the page $p_e$ by adding the edges $u_{\alpha}^{i}u_{\beta}^{j + 1}$ and $u_{\alpha}^{i + 1}u_{\beta}^{j}$, see also \cref{fig:adapted-w-1-layer}c.
We do not need to adapt these edges, as they can only result in multi-edges if $G_C$ would contain them.
One can readily verify that we no longer introduce multi-edges while still ensuring that the page $p_e$ is crossing free.

Finally, recall that with the original version of the fixation gadget, we had for $\alpha \in [k + 1]$ the edges $\predecessor{u_{\alpha}^0}u_{\alpha}^0$ and $u_{\alpha}^0\successor{u_{\alpha}^0}$ that were assigned to the page $p_d$.
In particular, these edges corresponded for $\alpha \in [k]$ to the edges $u_{\alpha}^0u_{\alpha}^1$ and $u_{\alpha}^{n_{\alpha} + 1}u_{\alpha + 1}^0$.
However, these edges are no longer present in the adapted reduction, since no vertex of the (adapted) fixation gadget is identified with the vertices $u_{\alpha}^1$ or $u_{\alpha}^{n_{\alpha} + 1}$.
As above edges facilitated the arguments that showed correctness of our approach, we re-introduce them in $\lSL{H}$ as follows.
For every $\alpha \in [k]$, we add the edges $u_{\alpha}^0u_{\alpha}^1$ and $u_{\alpha}^{n_{\alpha} + 1}u_{\alpha + 1}^0$ and set $\sigma(u_{\alpha}^0u_{\alpha}^1) = \sigma(u_{\alpha}^{n_{\alpha} + 1}u_{\alpha + 1}^0) = p_d$.
As the above edges span the respective edges $u_{\alpha}^0a_{\alpha}^{p_{e_M}}$ and $b_{\alpha + 1}^{p_{e_M}}u_{\alpha + 1}^0$, they do not introduce crossings on the page $p_d$.

\subsubsection{Showing Correctness of the Modified Reduction}
\label{app:multi-edges-w-1-correctness}
This completes the adaptions we need to make to our reduction and it remains to show that \cref{thm:w-1} still holds.
Regarding the size of the created instance, it is sufficient to observe that we introduce $2(k + 1)M$ additional vertices (for the adapted fixation gadget, see also \cref{lem:adapted-fixation-gadget-properties}), and for each edge of $G_C$ we introduce a constant number of additional edges to the already existing ones.
Thus, the size of the instance \instance{} remains polynomial in the size of $G_C$ and we still have $\kappa = 3k + \binom{k}{2}$.
So it remains to show the correctness of the reduction.

For that, we can, on the one hand, observe that the transformation of a solution $\mathcal{C}$ to an instance of \MCC{} to a solution \lSL{G} to the created instance of \SLEShort{} as described for the ``$(\Rightarrow)$-direction'' in the proof of \cref{thm:w-1} is unaffected by the changes we made.
One way to see this is that all the additional edges that we introduced run between vertices in placed between $u_{\alpha}^0$ and $u_{\alpha}^1]$ or $u_{\alpha}^{n_{\alpha} + 1}$ and $u_{\alpha + 1}^0$ for the corresponding $\alpha \in [k + 1]$.
Furthermore, if we changed existing edges, then we moved their incident vertices from $u_{\alpha}^1$ or $u_{\alpha}^{n_{\alpha} + 1}$ to a vertex in the above range.
However, we created the spine order $\prec_G$ such that $u_{\alpha}^1 \prec x_{\alpha} \prec u_{\alpha}^{n_{\alpha} + 1}$ holds.
Hence, the relative order $\prec_G$ (and thus $\prec_H$) among two new vertices $x_{\alpha}$ and $x_{\beta}$ with $\alpha, \beta \in [k]$, or a new vertex $x_{\alpha}$ and an old vertex $u_{\beta}^i$ with $\alpha, \beta \in [k]$, $\alpha \neq \beta$, and $i \in \{0, 1, n_{\beta} + 1\}$ remains untouched.
Hence, \lSL{G} is crossing free and thus a solution to our instance of \SLEShort{}.

On the other hand, for the ``$(\Leftarrow)$-direction'', we used the fact that the created instance~$\instance{}$ of \SLEShort{} fulfills \cref{property:w-1-x-i-v-i,property:w-1-layer} to construct the solution $\mathcal{C}$.
Hence, if we can convince ourselves that \instance{} still fulfills said properties, then the arguments we gave in \cref{thm:w-1} will readily carry over.
Recall that \cref{property:w-1-x-i-v-i} is defined as follows.
\mccXiViProperty*
\noindent
To see that we still have this property, we can observe that we incorporated in \cref{app:multi-edges-w-1-base-layout} the adapted fixation gadget on $k$ vertices into our construction.
As we identify, for $\alpha \in [k]$, $v_{i} = u_{\alpha}^0$ and $f_{i} = x_{\alpha}$ for $i = \alpha$, \cref{property:w-1-x-i-v-i} follows directly from \cref{lem:adapted-fixation-gadget-properties}.

Recall that we introduced at the end of \cref{app:multi-edges-w-1-layers} for every $\alpha \in [k]$ the edges $u_{\alpha}^0u_{\alpha}^1$ and $u_{\alpha}^{n_{\alpha} + 1}u_{\alpha + 1}^0$ that are placed on the dummy page $p_d$.
We now use these edges to make the following observation.
From \cref{lem:adapted-fixation-gadget-properties}, we get that we have in any solution \lSL{G} and for every $\alpha \in [k]$ that $\sigma(x_{\alpha}u_{\alpha}^0) = \sigma(x_{\alpha}u_{\alpha + 1}^0) = p_d$ holds.
From \cref{property:w-1-x-i-v-i}, that still holds in our construction, we get $u_{\alpha}^0 \prec x_{\alpha} \prec u_{\alpha + 1}^0$ for every $\alpha \in [k]$.
Using the above-mentioned edges on page $p_d$, we observe that we cannot have $u_{\alpha}^0 \prec x_{\alpha} \prec {u_{\alpha}^1}$ or $u_{\alpha}^{n_{\alpha} + 1} \prec x_{\alpha} \prec u_{\alpha + 1}^0$, as this would introduce a crossing on the page~$p_d$.
Thus, our construction not only fulfills \cref{property:w-1-x-i-v-i}, but, furthermore, \cref{cor:w-1-x-i-v-i-strong} still applies.
We will now argue that our construction also fulfills \cref{property:w-1-layer}, which is defines as follows.
\mccLayerProperty*
\noindent
To see that \cref{property:w-1-layer} still holds, we first apply \cref{cor:w-1-x-i-v-i-strong}.
This allows us to conclude that we have $u_{\alpha}^1 \prec x_{\alpha} \prec u_{\alpha}^{n_{\alpha} + 1}$ and $u_{\beta}^1 \prec x_{\beta} \prec u_{\beta}^{n_{\beta} + 1}$.
Then, by exchanging $u_{\alpha}^0$ with~$a_{\alpha}^{e}$ and $u_{\alpha + 1}^{0}$ with $b_{\alpha + 1}^e$ in the proof of \cref{lem:w-1-fulfills-property-layer}, we can exclude $u_{\alpha}^1 \preceq x_{\alpha} \preceq u_{\alpha}^{i}$ and $u_{\alpha}^{i + 1} \preceq x_{\alpha} \preceq u_{\alpha + 1}^{0}$.
Therefore, we derive that $x_{\alpha}$ must be placed in $\intervalPlacing{v_{\alpha}^i}$ and analogously must $x_{\beta}$ be placed in $\intervalPlacing{v_{\beta}^i}$, i.e., our construction still fulfills \cref{property:w-1-layer}.

We use this now to argue that the ``$(\Leftarrow)$-direction'' of our reduction is still correct.
\subparagraph*{Correctness of the ``$\boldsymbol{(\Leftarrow)}$-direction'' in the Reduction.}
We perform the arguments as in the ``$(\Leftarrow)$-direction'' of the proof of \cref{thm:w-1}.
Most of the arguments are implied by \cref{property:w-1-x-i-v-i,property:w-1-layer}, which are also fulfilled in our adapted reduction.
However, in the proof of \cref{thm:w-1} we argued that the pre-requisites for \cref{property:w-1-layer} are fulfilled.
In the following, we make this argument for our adapted construction.
Let us again assume that $x_{\alpha}$ and $x_{\beta}$ are placed in \intervalPlacing{v_{\alpha}^i} and \intervalPlacing{v_{\beta}^j}, respectively.
We again consider the edge $x_{\alpha}x_{\beta} \in E(G)$ and the page~$p$ it is placed in the solution \lSL{G}.
By the very same arguments as in the proof of \cref{thm:w-1}, we can exclude $p = p_d$.
Furthermore, as we have $\sigma_{H}(a_{\alpha}^{e}b_{\alpha + 1}^{e}) = p_e$ and $a_{\alpha}^e \prec u_{\alpha}^1 \prec x_{\alpha} \prec u_{\alpha}^{n_{\alpha} + 1} \prec b_{\alpha + 1}^{e} \prec x_{\beta}$ for an edge $e  = uv \in E(G_C)$ with $u \not\in V_{\alpha}$ and $v \neq V_{\alpha}$ we get that $p = p_e$ for an edge $e \in E(G_C) \cap (V_{\alpha} \times V_{\beta})$ must hold.
This shows that all prerequisites for \cref{property:w-1-layer} are fulfilled.
Hence, we obtain that the edge $x_{\alpha}x_{\beta}$ can only be placed in the page $p_e$ that we created for the edge $v_{\alpha}^{i}v_{\beta}^{j}$.
Hence, $v_{\alpha}^i$ and $v_{\beta}^j$ are adjacent in~$G_C$ and the proof of the ``$(\Leftarrow)$-direction'' in \cref{thm:w-1} readily carries over.

Combining all, we conclude that \cref{thm:w-1} also holds for our modified construction that does not have multi-edges.

\end{document}